\newtheorem{thm}{Theorem}[section]
\newtheorem{lemma}[thm]{Lemma}
\newtheorem{prop}[thm]{Proposition}
\newtheorem{defin}[thm]{Definition}
\newtheorem{rem}[thm]{Remark}
\newtheorem{exam}[thm]{Example}
\newtheorem{question}[thm]{Question}
\newcommand{\R}{{\mathbb{R}}}
\newcommand{\EE}{{\mathbb{E}}}
\newcommand{\Z}{{\mathbb{Z}}}
\newcommand{\N}{{\mathbb{N}}}
\newcommand{\C}{{\mathbb{C}}}
\newcommand{\cC}{{\mathcal{C}}}
\newcommand{\cL}{{\mathcal{L}}}
\newcommand{\cN}{{\mathcal{N}}}
\newcommand{\cU}{{\mathcal{U}}}
\def\id{{1\hskip-2.5pt{\rm l}}}
\newcommand{\be}{{\beta}}
\newcommand{\al}{{\alpha}}
\newcommand{\la}{{\lambda}}
\newcommand{\De}{{\Delta}}
\newcommand{\si}{{\sigma}}
\newcommand{\hb}{{\hbar}}
\newcommand{\ga}{{\gamma}}
\newcommand{\om}{{\omega}}
\newcommand{\Om}{\Omega}
\newcommand{\Ci}{{\mathcal{C}}^{\infty}}
\newcommand{\Cl}{\mathcal{C}}
\newcommand{\op}{\operatorname}
\newcommand{\con}{\overline}
\newcommand{\bigo}{\mathcal{O}}
\newcommand{\Hilb}{\mathcal{H}}
\newcommand{\bigoinf}{\mathcal{O}_{\infty}}  
\newcommand{\B}{\mathcal{B}}
\newcommand{\Cb}{\Cl_{\op{b}}}
\newcommand{\Lap}{\Delta}     
\begin{document}

\title{Sharp correspondence principle and quantum measurements}

\renewcommand{\thefootnote}{\alph{footnote}}

\author{\textsc Laurent Charles  and Leonid Polterovich$^{a}$ }

\footnotetext[1]{Partially supported by the Israel Science Foundation grant 178/13 and the European Research Council Advanced grant 338809.}

\date{\today}

\maketitle

\centerline{{\it To Yuri Dmitrievich Burago on the occasion of his 80th birthday}}

\begin{abstract} We prove sharp remainder bounds for the Berezin-Toeplitz quantization
and present applications to semiclassical quantum measurements.
\end{abstract}

\tableofcontents

\section{Introduction and main results}

\subsection{An outlook}  The subject of this paper is quantization, a formalism behind the quantum-classical correspondence, a fundamental principle stating that quantum mechanics contains the classical one as a limiting case when the Planck constant $\hbar$ tends to $0$. Mathematically, the correspondence is given by a linear map between smooth functions on a symplectic manifold and Hermitian operators on an $\hbar$-depending complex Hilbert space. It is assumed that, up to error terms (a.k.a. {\it remainders}) which are small with $\hbar$, some basic operations on functions correspond to their counterparts on operators. For instance, the Poisson bracket of functions corresponds to (a properly rescaled)  commutator of the operators. In the present paper we study the size of the remainders focusing on the following facets of this problem.

First, {\it given a quantization, find explicit upper bounds for the remainders.} In this direction,
we obtain such bounds for several basic quantization schemes, including the standard Berezin-Toeplitz quantization of closed K\"{a}hler manifolds, in terms of the $\Cl^k$-norms on functions with $k \leq 3$
(see Sections \ref{subsec-sharp-remainders} and \ref{bargmann} below).
The motivation comes from semiclassical quantum mechanics: having a good control on the remainders, one can zoom into small regions of the phase space up to the quantum scale $\sim \sqrt{\hbar}$, the smallest scale allowed by the uncertainty principle. As an illustration, in Section \ref{subsec-joint}
we present applications to noise production in semiclassical quantum measurements.

Second, according to the classical no-go theorems, no ideal quantizations, i.e., the ones
with vanishing remainders, exist. This naturally leads to the following quantitative question:
{\it can find a quantization with arbitrary small remainders?} It turns out that for a meaningful class of quantizations, the answer is negative. The remainders are subject to constraints which depend only on geometry of the phase space, which is the essence of the  rigidity of remainders phenomenon discussed in Section \ref{subsec-ror}.

\subsection{Sharp remainder estimates}\label{subsec-sharp-remainders}

Let $(M^{2n},\omega)$ be a closed symplectic manifold. We assume that $(M,\omega)$ is quantizable,
i.e., the cohomology class $[\omega]/(2\pi)$ is integral. We write $\{f,g\}$ for the Poisson bracket
of smooth functions $f$ and $g$.

Fix an auxiliary Riemannian metric $\rho$ on $M$. For a function
$f \in \Cl^\infty(M)$ its $\Cl^k$-norm with respect to $\rho$ is denoted by
$|f|_{k}$. For a pair of smooth functions $f,g$ put
$|f,g|_{N} = \sum_{j=0}^N |f|_{j} \cdot |g|_{N-j}$.
We write $||f||=|f|_0 =\max |f|$ for the uniform norm, and $||f||_{L_1}$ for the $L_1$-norm of $f$
with respect to the symplectic volume $\omega^n/n!$.

We also introduce a {\it reduced} version of $|f,g|_{4}$,
\begin{equation}\label{eq-reduced}
|f,g|_{1,3} := |f|_{1} \cdot |g|_{3} +  |f|_{2} \cdot |g|_{2}+ |f|_{3} \cdot |g|_{1}
\end{equation}
which does not include fourth derivatives, and which plays an important role below.

For a finite-dimensional complex Hilbert space $\Hilb$ write $\cL(\Hilb)$ for the space of Hermitian operators on $\Hilb$. The operator norm is denoted by $||\cdot||_{op}$ and $[A,B]$
stands for the commutator $AB-BA$.

\medskip

A {\it Berezin}
quantization of $M$ is given by the following data:
\begin{itemize}
\item a subset $\Lambda \subset \R_{>0}$ having $0$ as a limit point;
\item a family $\Hilb_\hbar$ of finite-dimensional
complex Hilbert spaces, $\hbar \in \Lambda$;
\item a family $T_{\hbar} : \Cl (M, \R) \rightarrow \cL(\Hilb_\hbar)$ of positive surjective linear maps such that $T_{\hbar} (1) = \op {id}$, $\hbar \in \Lambda$.
\end{itemize}

That $T_{\hbar}$ is positive means that for any $f \in \Cl (M, \R)$,  $f \geq 0$ implies that $T_{\hbar} (f) \geq 0$. We will also assume that
there exist $\alpha,\beta,\gamma$ and $\delta >0$ such that for any $f,g \in \Ci ( M , \R)$, we have

\begin{itemize}
\item[{(P1)}] {\bf (norm correspondence)} $||f||- \alpha |f|_{2}\hbar \leq ||T_\hbar(f)||_{op} \leq ||f||$;
\item[{(P2)}] {\bf (the correspondence principle)} $$|| -\frac{i}{\hbar} \cdot [T_{\hbar}(f),T_{\hbar}(g)] - T_\hbar (\{f,g\})||_{op} \leq \beta |f,g|_{1,3} \hbar\;;$$
\item[{(P3)}] {\bf (quasi-multiplicativity)} $||T_\hbar(fg) - T_\hbar(f)T_\hbar(g)||_{op} \leq \gamma |f,g|_{2}\hbar$;
\item[{(P4)}] {\bf (trace correspondence)}  $$ \big{|} \text{trace}(T_\hbar(f)) - (2\pi\hbar)^{-n}\int_M f \frac{\omega^n}{n!}\big{|}  \leq \delta ||f||_{L_1} \hbar^{-(n-1)}\;,$$
\end{itemize}
for all $f,g \in \Cl^\infty(M)$ and all $\hbar \in \Lambda$.

\medskip
\noindent A few elementary remarks are in order. The upper bound in (P1) is an immediate
consequence of the positivity of $T_{\hbar}$.  Furthermore,
since $fg=gf$, it follows from (P3) that $[T_{\hbar}(f),T_{\hbar}(g)] = \bigo(\hbar)$. Property
(P2) can be considered as a refinement of this formula. Note also that substituting $f=1$ into (P4) we get that the dimension of the space $\Hilb_\hbar$ tends to $\infty$ as $\hbar \to 0$.


\medskip
\noindent
\begin{thm}\label{thm-main} Every quantizable symplectic manifold admits a Berezin quantization
satisfying (P1)-(P4).
\end{thm}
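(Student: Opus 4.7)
The plan is to realize the quantization concretely via the Berezin--Toeplitz construction and then extract sharp constants in (P1)--(P4) from the off-diagonal asymptotics of the Bergman projector. Fix an $\omega$-compatible almost complex structure $J$ on $M$ and, using the integrality of $[\omega]/(2\pi)$, a Hermitian line bundle $L \to M$ whose Chern connection has curvature $-i\omega$. Set $\Lambda = \{1/k : k \in \N_{>0}\}$, and for $\hbar = 1/k$ let $\Hilb_\hbar$ be the space of holomorphic sections of $L^{\otimes k}$ in the K\"ahler case, or the kernel of an appropriate spin$^c$ Dirac operator (Ma--Marinescu) in the almost-K\"ahler case. Let $\Pi_\hbar$ denote the orthogonal projector from $L^2(M, L^{\otimes k})$ onto $\Hilb_\hbar$, and define
\[
T_\hbar(f) = \Pi_\hbar \, M_f \, \Pi_\hbar \big|_{\Hilb_\hbar},
\]
where $M_f$ is pointwise multiplication by $f$. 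This map is linear, positive, sends $1$ to $\op{id}$, and is surjective onto $\cL(\Hilb_\hbar)$ because Toeplitz operators with smooth symbols span all bounded operators on $\Hilb_\hbar$.

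Property (P4) follows from the asymptotic expansion of the Bergman kernel on the diagonal: $\op{trace}(T_\hbar(f))$ equals the integral of $f$ against $B_\hbar(x,x)$ times the Riemannian volume, and $B_\hbar(x,x) = (2\pi\hbar)^{-n}(1 + O(\hbar))$ uniformly in $x$, with the $O(\hbar)$ error a smooth function of $x$. Integrating against $f$ and crudely bounding by the $L^1$-norm of $f$ gives the required estimate. The upper bound in (P1) and positivity are immediate from the definition; the lower bound is obtained by a coherent-state argument: choose $x_0 \in M$ where $|f(x_0)| = \|f\|$, let $e_{x_0} \in \Hilb_\hbar$ be the normalized peak section at $x_0$ (whose squared pointwise norm is concentrated at scale $\sqrt\hbar$), and Taylor expand $f$ to second order around $x_0$. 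The first-order term integrates to zero against the Gaussian profile of $|e_{x_0}|^2$, and the second-order term contributes at most $\alpha |f|_2 \hbar$.

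For (P2) and (P3), the key input is the composition formula for Toeplitz operators: there is an asymptotic expansion
\[
T_\hbar(f) T_\hbar(g) = T_\hbar(fg) + \hbar\, T_\hbar(C_1(f,g)) + \hbar^2\, T_\hbar(C_2(f,g)) + O(\hbar^3),
\]
where the $C_j$ are bidifferential operators with $C_1(f,g) - C_1(g,f) = i\{f,g\}$ and $C_j$ of total order bounded by $2j$ derivatives. Truncating at order $\hbar$ and bounding the remainder by the Berezin--Toeplitz calculus yields (P3) with constant $\gamma$ proportional to $|f,g|_2$. For (P2) one antisymmetrizes in $f,g$: the principal symbol cancellations turn the $\hbar^2$ remainder in (P3) into an $\hbar$ remainder relative to $T_\hbar(\{f,g\})$ after dividing by $\hbar$. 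The nontrivial point is that this remainder can be bounded by the \emph{reduced} norm $|f,g|_{1,3}$ rather than $|f,g|_4$.

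The main obstacle is precisely this last assertion: controlling (P2) by $|f,g|_{1,3}$ means no single term in the remainder may carry four derivatives of $f$ or four derivatives of $g$. This requires rewriting the bidifferential operator appearing in the remainder (which a priori has total order four, and could naively put all four derivatives on one factor) modulo a coboundary, using integration by parts in the off-diagonal kernel representation to move one derivative from $f$ onto $g$ or vice versa. Equivalently, one exploits the symmetry of the leading Hessian kernel and the fact that the commutator structure of the star product has controlled depth. Justifying this step uniformly in $\hbar$ rests on Agmon-type exponential decay of the Bergman kernel off the diagonal, which controls the tails of the kernel expansion and produces explicit constants $\alpha, \beta, \gamma, \delta$ depending only on the geometry of $(M, \omega, J)$.
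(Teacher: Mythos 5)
Your global setup is the same as the paper's: Berezin--Toeplitz operators on the spectral subspace $\Hilb_\hbar$ cut out by a Bergman-type projector $\Pi_\hbar$ (holomorphic sections in the K\"ahler case, a spectral subspace concentrated near the top eigenvalue in the symplectic case), with (P1) via peak sections at a critical point and (P4) via the diagonal expansion of $B_\hbar(x,x)$. Those parts match. The substantive divergence, and the gap, is in (P2).

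You propose to obtain (P2) by antisymmetrizing the composition expansion $T_\hbar(f)T_\hbar(g)=\sum \hbar^\ell T_\hbar(C_\ell(f,g))+R$ and then arguing that the order-$\hbar^2$ remainder can be pushed into the reduced norm $|f,g|_{1,3}$ ``by rewriting the bidifferential operator modulo a coboundary / integrating by parts in the off-diagonal kernel.'' This is exactly the point at which the argument is not a proof. The composition expansion, even with sharp remainder control (compare the paper's Theorem~\ref{thm:toepl-barg-comp} for the Bargmann model, where $\|R_N(f,g)\|\le C_N\sum_{m=0}^N|f|'_{N+m}|g|'_{N-m}$), yields after antisymmetrization a bound containing $|f|_4|g|_0$ and $|f|_0|g|_4$. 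That is a genuine $|f,g|_4$-type bound, i.e.\ the Barron--Ma--Marinescu--Pinsonnault $C^4$ result that the paper explicitly improves on. Your ``coboundary rewriting'' suggestion to trade one derivative across the factors is stated without a mechanism, and it is not at all clear how to implement it at the level of the remainder $R_N$ (which is not given by a single explicit bidifferential operator, but by a Taylor-remainder kernel that genuinely loads $2N$ derivatives onto one factor).

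The paper's route is structurally different and is precisely designed to avoid this. It introduces the Kostant--Souriau operators $H_k(f)=f+\tfrac{1}{ik}\nabla^k_{X_f}$, which involve only first derivatives of $f$, and the corrected Toeplitz operators $T_k^c(f)=\Pi_k H_k(f)\Pi_k$. The exact Kostant--Souriau identity $[H_k(f),H_k(g)]=\tfrac{i}{k}H_k(\{f,g\})-\tfrac{1}{k^2}\Om_A(X,Y)$, combined with the key estimate $\|[H_k(f),\Pi_k]\|_{op}\le Ck^{-1}|f|_2$ (proved by the Schur test and the off-diagonal Gaussian decay of $B_k$), gives the correspondence principle for $T_k^c$ with a remainder controlled by $|f|_2|g|_2$ only. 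Passing from $T_k^c$ back to $T_k$ costs one more derivative per factor via Tuynman's formula $T_k^c(f)=T_k(f)-\tfrac{1}{k}T_k(\Lap f)$ (or the symplectic analogue of it via Lemma~\ref{lem:estim_norm_cor}), and feeding $\Lap f$, $\Lap g$ into the quasi-multiplicativity estimate produces precisely the mixed terms $|f|_1|g|_3+|f|_2|g|_2+|f|_3|g|_1$ and never $|f|_4|g|_0$. So the reduced remainder is not obtained by repackaging the star-product remainder; it arises because $T_k^c$ is built from first-order data and the $\Lap$-correction distributes derivatives symmetrically. If you want to stay with your composition-formula route, you need to supply an actual argument replacing this detour; as written, the step ``the remainder can be bounded by $|f,g|_{1,3}$'' is asserted but not established, and a straightforward application of the composition formula contradicts it.

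A smaller but real imprecision in your (P1) sketch: at the maximizer $x_0$ of $|f|$, the first-order Taylor term vanishes identically because $x_0$ is a critical point; you don't need (and shouldn't invoke) cancellation against the even Gaussian, which would only deliver $O(\sqrt\hbar)$ from the genuinely quadratic fluctuation. It is the vanishing of $df(x_0)$ that makes the squared deviation $|f-f(x_0)|^2\lesssim |f|_2^2\,|y-x_0|^4$ and hence $\|(f-f(x_0))\Psi_k^{\op n}\|=O(\hbar)|f|_2$; this is what the paper actually uses.

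Finally, a minor point: surjectivity of $T_\hbar$ onto $\cL(\Hilb_\hbar)$ is asserted but not justified in your proposal (nor is it addressed in detail by the paper); it requires nondegeneracy of the coherent-state map, which in the K\"ahler case follows from holomorphicity of the kernel but in the general symplectic case needs a separate argument.
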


\medskip
\noindent
The novelty here is a fine structure of the  remainders in (P1)-(P3).
In particular, for quantizable K\"{a}hler manifolds, the standard Berezin-Toeplitz quantization
satisfies (P1)-(P4).


Interestingly enough, for fixed $\omega$ and $\rho$, the coefficients $\alpha$, $\beta$ and $\gamma$ are subject to constraints which manifest optimality of inequalities (P1)-(P3). We discuss them in the next section. As a counterpoint, for certain quantization the constant $\delta$ in (P4) can be made arbitrarily small,  see Remark \ref{rem-delta} below. Furthermore, we present applications of (P1)-(P3) to semiclassical quantum measurements.

The seminal reference on Berezin quantization is the book \cite{BoGu} by Boutet de Monvel and Guillemin. In \cite{BoMeSc}, \cite{Gu} or \cite{BU}, it was deduced from \cite{BoGu} the existence of a quantization satisfying the following version of (P1)-(P3): for any smooth functions $f,g$,
\begin{gather} \label{eq:2}
 \| T_h( f) \|_{op} = | f| + \bigo (\hb),\quad  T_\hbar(f)T_{\hbar} (g) = T_\hbar(fg) + \bigo(\hb) \\
\label{eq:3}  [T_{\hbar}(f),T_{\hbar}(g)] =  \tfrac{\hb}{i} T_\hbar (\{f,g\}) + \bigo( \hb^2)
\end{gather}
where the $\bigo$'s are in uniform norm and depend on $f$ and $g$. More recently, Barron et al \cite{BaMaMaPi} have extended (\ref{eq:2}) to functions of class $\Cl^2$ and (\ref{eq:3}) to functions of class $\Cl^4$. We will prove in this paper that actually (\ref{eq:3}) holds for functions of class $\Cl^3$ (as we shall see in Remark \ref{rem-reduced} below, the absence of fourth derivatives in
the correspondence principle (P2) has meaningful applications). Additionally, in Proposition \ref{prop:P1_kahler} below we prove a slightly stronger version of quasi-multiplicativity (P3).
More importantly, we make explicit the dependence in $f$ and $g$ of the remainders in the sense of (P1)-(P3).

We refer the reader to lecture notes \cite{LF} by Y.~Le Floch for a skillfully written exposition of our
Theorem \ref{thm-main} in the K\"{a}hler case and useful preliminaries.



\subsection{Rigidity of remainders}\label{subsec-ror}

Confronting matrix analysis with geometry of the phase space, we get the following constraints
on the remainders in (P1),(P2) and (P3) for any Berezin quantization.

\medskip
\noindent
\begin{thm} \label{thm-main-1} Let $(M,\omega)$ be closed
quantizable symplectic manifold equipped with a Riemannian metric $\rho$. There exist
positive constants $C_1,C_2,C_3$ depending on $(M,\omega,\rho)$
such that for every Berezin quantization
\begin{itemize}
\item[{(i)}] $\alpha \geq C_1$;
\item[{(ii)}] $\beta \geq C_2\alpha^{-2}$;
\item[{(iii)}] $\gamma \geq C_3$.
\end{itemize}
\end{thm}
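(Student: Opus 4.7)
The three bounds go from soft to delicate, and I attack them with progressively more machinery.

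\emph{For (iii),} combine quasi-multiplicativity with the correspondence principle. Since smooth functions commute, $T_\hbar(fg) = T_\hbar(gf)$, so subtracting two instances of (P3) gives $\|[T_\hbar(f), T_\hbar(g)]\|_{op} \leq 2\gamma|f,g|_2\hbar$. On the other hand, (P2) combined with the (P1) lower bound on $\|T_\hbar(\{f,g\})\|_{op}$ yields $\|[T_\hbar(f), T_\hbar(g)]\|_{op} \geq \hbar\|\{f,g\}\| - (\alpha|\{f,g\}|_2 + \beta|f,g|_{1,3})\hbar^2$. Dividing by $\hbar$ and sending $\hbar \to 0$ gives $\|\{f,g\}\| \leq 2\gamma|f,g|_2$ for every smooth pair. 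Since any quantizable $(M,\omega)$ admits a pair with $\{f,g\}\not\equiv 0$, this produces $C_3 = \|\{f,g\}\|/(2|f,g|_2)$, depending only on $(M,\omega,\rho)$.

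\emph{For (i) and (ii),} I work locally in a Darboux chart $(U;q,p)$ around $x_0\in M$ and apply the matrix Heisenberg uncertainty principle to a quasi-coherent state built from (P1). Consider a Gaussian bump $\phi_\sigma(x) = \exp(-d(x,x_0)^2/\sigma^2)$ cut off outside $U$, so that $\|\phi_\sigma\|=1$ and $|\phi_\sigma|_2 = O(\sigma^{-2})$. The lower bound in (P1) then produces a unit vector $\psi$ with $\mu_\psi(\phi_\sigma) \geq 1 - O(\alpha\hbar/\sigma^2)$, where $\mu_\psi(u) := \langle\psi, T_\hbar(u)\psi\rangle$ is the probability measure on $M$ associated to $\psi$. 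Choosing $\sigma^2 = K\alpha\hbar$ for large $K$ concentrates $\mu_\psi$ in a ball of radius $\sim\sigma$ around $x_0$. Apply the operator uncertainty inequality to $A = T_\hbar(q_1 \chi_r)$, $B = T_\hbar(p_1 \chi_r)$ with a cut-off $\chi_r$ at scale $r\geq\sigma$: Kadison--Schwarz (valid since $T_\hbar$ is positive unital on a commutative domain) together with (P3) bounds the quantum variances $\mathrm{Var}_\psi(A), \mathrm{Var}_\psi(B)$ in terms of $\mathrm{Var}_{\mu_\psi}(f), \mathrm{Var}_{\mu_\psi}(g)$, which are of order $\sigma^2$ by concentration. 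Meanwhile (P2) with $\{q_1, p_1\} = 1$ gives $|\langle[A,B]\rangle_\psi| \geq \hbar - O(\beta\hbar^2/r^2)$, using $|q_1\chi_r, p_1\chi_r|_{1,3} = O(r^{-2})$. The resulting master inequality relates $\alpha$, $\beta$, $\gamma$, $\sigma$, $r$, and $\hbar$: for (i), take $r$ large enough to kill the (P2) error, so the uncertainty forces $\sigma^2 \gtrsim \hbar$, hence $\alpha \geq C_1$; for (ii), shrink $r$ to match the concentration scale, $r \sim \sigma \sim \sqrt{\alpha\hbar}$, so the (P2) error becomes of order $\beta\hbar/\alpha$, and substitution into the uncertainty yields $\beta\alpha^2 \geq C_2$.

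\emph{The anticipated main obstacle} is the quantitative balance in (ii). The state $\psi$ must be constructed exactly at the borderline quantum scale $\sqrt{\alpha\hbar}$ permitted by (P1); the cut-off radius $r$ must be tuned to this scale so that the (P2) remainder is the bottleneck in the uncertainty inequality; and all contributions from (P1), (P3) and the cut-off must be tracked to second order in $\hbar$. It is precisely here that the absence of fourth derivatives in the reduced norm $|f,g|_{1,3}$ is decisive: a cut-off's fourth derivative would blow up like $r^{-3}$, spoiling the balance that produces the $\alpha^{-2}$ scaling in (ii).
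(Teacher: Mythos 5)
The proof of (iii) matches the paper's argument exactly and is correct.

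For (i) and (ii) you propose a genuinely different strategy (concentrated states from (P1) plus the operator uncertainty inequality), but as written it does not yield the stated bounds. Two concrete problems:

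\emph{On (i).} Your argument never invokes the trace correspondence (P4), which the paper uses essentially. From (P1) you get a state $\psi$ with $\mu_\psi(B(x_0,\sigma)^c) = O(\alpha\hbar/\sigma^2) = O(1/K)$, but there is no further decay in the tail, so with a cutoff $\chi_r$ the classical variance obeys only $\mathrm{Var}_{\mu_\psi}(q_1\chi_r) \lesssim \sigma^2 + r^2/K$. To make the uncertainty inequality informative you need the variance to shrink to order $\sigma^2 \sim \alpha\hbar$, i.e.\ $r^2 \lesssim K\sigma^2$, while the (P2) error $\beta\hbar^2/r^2$ in the commutator must be $\lesssim \hbar$, i.e.\ $r^2 \gtrsim \beta\hbar$. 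Together these force $\beta \lesssim K^2\alpha$, a relation you cannot verify a priori; and if you enlarge $K$ to absorb a given $\beta$, your constant $C_1 \sim 1/K$ depends on the quantization, not just on $(M,\omega,\rho)$. The paper's proof is both simpler and closes this gap: rescale a fixed non-negative bump $f$ to $f_s$ with $s = \sqrt{t\hbar}$, bound $\|T_\hbar(f_s)\|_{op}$ from below via (P1), bound $\mathrm{trace}\, T_\hbar(f_s)$ from above via (P4), use positivity to get $\|T_\hbar(f_s)\|_{op} \leq \mathrm{trace}\, T_\hbar(f_s)$, send $\hbar \to 0$, and optimize over $t$. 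No $\beta$ appears at all.

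\emph{On (ii).} Even granting every estimate, the uncertainty argument gives the wrong scaling. Setting $r \sim \sigma \sim \sqrt{K\alpha\hbar}$ and chasing the master inequality produces
$$(K\alpha\hbar)^2 \;\gtrsim\; \tfrac{1}{4}\hbar^2\bigl(1 - O(1/K) - O(\beta/(K\alpha))\bigr)^2\,,$$
hence $K\alpha + c\,\beta/(K\alpha) \gtrsim 1$, which for small $\alpha$ gives $\beta \gtrsim \alpha$ --- not $\beta \gtrsim \alpha^{-2}$, off by three powers of $\alpha$. This is not a bookkeeping issue that a sharper remainder fixes; it is structural to testing with cut-off coordinate functions. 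The paper's mechanism for (ii) is different and is the actual crux: after the scaling step (which yields $\alpha|\{f,g\}|_2 + \beta|f,g|_{1,3} \geq \|\{f,g\}\|^2/(8\|f\|\|g\|)$ for all compactly supported $f,g$), one chooses $f = z^{1/2}F\sin(z^{-1}G)$, $g = z^{1/2}F\cos(z^{-1}G)$. This makes $\{f,g\} = \{-F^2/2, G\}$ \emph{independent of $z$} while $\|f\|\cdot\|g\| \leq z \to 0$, so the right-hand side blows up like $z^{-1}$ while the $\alpha$-term stays bounded and $|f,g|_{1,3} \sim z^{-3}$. Optimizing $K\beta \geq z^2\|u\|^2/8 - z^3\alpha|u|_2$ over $z$ lands exactly at $z_0 \sim \alpha^{-1}$, giving $\beta \gtrsim \alpha^{-2}$. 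This oscillation trick --- decoupling the size of the Poisson bracket from the product of sup-norms --- has no analogue in your proposal and is the missing idea.

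A smaller point: your remark that $|f,g|_{1,3}$ (versus $|f,g|_4$) is decisive for (ii) is not accurate. For the test functions above one has $|f|_k \sim z^{1/2-k}$, so $|f,g|_4$ and $|f,g|_{1,3}$ scale identically as $z^{-3}$; the proof of (ii) would go through with the full remainder. The reduced remainder is genuinely decisive in the measurement applications of Section 1.5 (Remark 1.10), not in Theorem 1.4.
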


\medskip
\noindent The proof is given in Section \ref{sec-constraints} below.

\medskip

The lower bound (ii) on the $\beta$-remainder in the correspondence principle (P2) deserves a special discussion. According to the classical no-go theorem (see, e.g., \cite{GiMo} and references therein), there is no linear map $T_\hbar: C^{\infty}(M) \to \cL(\Hilb_\hbar)$ which sends (up to a multiplicative constant)  Poisson brackets of functions to commutators of operators.
In other words, $\beta \neq 0$. As we shall see below, the proof of (ii), in addition to (P2), involves only the norm correspondence (P1) with fixed $\alpha$. Therefore, in the presence of (P1), inequality (ii) can be considered as a quantitative version of the no-go theorem.

Furthermore, the proof of (ii) involves both the lower and the upper bounds in the norm correspondence
(P1). Recall that the latter uses that the Berezin quantization is positive.  Interestingly enough, without the preservation of positivity
(ii) is not necessarily valid. For instance, for the {\it geometric quantization}
of a compact quantizable K\"{a}hler manifold in the presence of the metaplectic correction, the remainder
in (P2) is of the order $\bigo( \hb^2)$, see formula (19) in \cite{Charles-metaplectic}. Let us mention
that the preservation of positivity is crucial for our applications to quantum measurements.

Another mysterious feature of the lower bound (ii) is that it involves the $\alpha$-remainder
appearing in the norm correspondence (P1). In particular, it does not rule out existence of a quantization with the large error coefficient $\alpha$ in the norm correspondence (P1) and a small error coefficient $\beta$ in the correspondence principle (P2). At the moment, we do not know whether such
a trade-off between $\alpha$ and $\beta$ can actually happen. This leads to the following question:

\medskip
\noindent
\begin{question}\label{quest-beta} Does there exist a constant $C_4>0$ such that for every Berezin quantization
$\beta \geq C_4$?
\end{question}

\medskip
\noindent
\begin{rem}\label{rem-delta}{\rm There is no rigidity for the $\delta$-remainder in (P4). Indeed,
for the Berezin-Toeplitz quantization with the half-form correction \cite[formula (11)]{ChJSG}
the trace correspondence (P4) upgrades to
$$ \big{|} \text{trace}(T_\hbar(f)) - (2\pi\hbar)^{-n}\int_M f \frac{\omega^n}{n!}\big{|}  \leq \delta' \cdot ||f||_{L_1} \hbar^{-(n-2)}\;,$$
mind the power $-(n-2)$ in the right hand side. Thus, decreasing $\hbar$, we can make $\delta=\delta'\hbar$ arbitrarily small.
}
\end{rem}

\subsection{Bargmann space} \label{bargmann}
Even though Berezin-Toeplitz quantization of certain non-compact manifolds has been studied since
foundational paper \cite{Berezin} (see \cite{MM08} for more recent developments), no general statement
in the spirit of Theorem \ref{thm-main} is currently available in the absence of compactness.
Nevertheless, we perform a case study and explain how (P1)-(P4) extend to a noteworthy quantization of the symplectic vector space $\R^{2n}=\C^n$, namely to Toeplitz operators in Bargmann space. We also give a general estimate of the remainder in the composition of Toeplitz operators. Surprisingly, we didn't find these results in the literature. The Bargmann space serves as a source of intuition for several aspects of our exploration of compact manifolds. First, for the Bargmann space,  quasi-multiplicativity (P3) follows from an elementary (albeit tricky) algebraic consideration and thus enables one to guess the structure of the remainders in the compact case. Second, for the Bargmann space, quantization commutes with the phase-space rescaling  (see formula \eqref{eq-qcomresc} below). This highlights the significance of the rescaling which serves as a useful tool for proving rigidity of remainders in the compact case.

Recall that for any $\hbar>0$, the Bargmann space $\B_{\hbar}$ is the space of holomorphic functions of $\C^n$ which are square integrable against the weight $e ^{ - \hbar^{-1} |z|^2 } \mu $.
Here $\mu$ is the measure $|dz_1 \ldots dz_n d\con{z}_1 \ldots d \con{z}_n|$ and $|z|^2 = |z_1|^2 + \ldots + |z_n|^2$.
For any $f \in L^{\infty} ( \C^n)$, define the Toeplitz operator with multiplicator $f$
$$ T_{\hbar} (f) = \Pi_{\hbar} f : \B_{\hbar} \rightarrow \B_{\hbar}$$
where $\Pi_{\hbar}$ is the orthogonal projector of $L^{2} ( \C^n , e^{- \hbar^{-1} |z|^2} \mu )$ onto $\B_{\hbar}$. $T_{\hbar} (f)$ is a bounded operator with uniform norm $\| T_{\hbar} (f) \|_{op} \leqslant \sup |f|$. If $f \in L^{1} ( \C^n ,\mu )$, one readily checks that the operator $T_{\hbar} (f)$ is of the trace class, and (P4) holds with the vanishing error term (i.e., $\delta=0$).

For any integer $k$ and function $f : \C^n \rightarrow \C$ of class $\Cl^k$, introduce the semi-norm
\begin{gather} \label{eq:semi-norm-prime}
 | f |'_k = \sup_{|\al|=k , \; x\in \C^n} |\partial^\al f (x) |
\end{gather}
Let $\Cb^k (\C^n)$ be the space of functions $f$ of class $\Cl^k$ such that $|f|_0'$, $|f|'_1$, \ldots, $|f|'_k$ are bounded.

\begin{thm} \label{thm:toepl-barg-comp}
For any $N \in \N$, there exists $C_N >0$ such that for any $f \in \Cb ^{2N}( \C^n)$ and $g \in \Cb ^{N} ( \C^n)$, for any $\hbar$ we have
$$ T_{\hbar} (f) T_{\hbar}(g) = \sum_{\ell = 0 } ^{N-1} (-1)^{\ell} \hb^\ell \sum_{\al \in \N^n,\; |\al |= \ell} \frac{1}{\al !} T_{\hbar} \bigl( ( \partial_{z}^{\al} f)( \partial_{\con{z}}^{\al} g) \bigr) + \hbar^N R_N( f,g)\;, $$
where $\|R_{N} (f,g) \|_{op} \leqslant C_N \sum_{m=0}^{N} |f|'_{N+m} |g|'_{N - m}$.
\end{thm}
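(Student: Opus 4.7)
The plan is to compute $T_{\hbar}(f)T_{\hbar}(g)$ directly from the integral kernel of the Bergman projector $\Pi_{\hbar}$, extract the Wick-type expansion by a Gaussian moment identity, and estimate the remainder by integration by parts. Writing out the composition against an arbitrary $F \in \B_{\hbar}$ and substituting $u = v + s$ in the inner variable, the algebraic identity
\begin{equation*}
z\bar{u} + u\bar{v} - |u|^{2} - |v|^{2} = (z\bar{v} - |v|^{2}) + \bar{s}(z - v) - |s|^{2}
\end{equation*}
decouples the double integral into the outer Toeplitz template in $v$ times the inner Gaussian integral
\begin{equation*}
J(z, v) = (2\pi\hbar)^{-n}\int f(v+s)\, e^{\bar{s}(z-v)/\hbar - |s|^{2}/\hbar}\, d\mu(s).
\end{equation*}

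For $J(z,v)$ I would Taylor expand $f(v+s)$ in $(s, \bar{s})$ with integral form remainder. The moments against the inner Gaussian are obtained from the generating function identity $(2\pi\hbar)^{-n}\int e^{\lambda s + \mu \bar{s}}\, e^{\bar{s}\xi/\hbar - |s|^{2}/\hbar}\, d\mu(s) = e^{\lambda(\xi + \hbar\mu)}$, which upon differentiation at $\lambda = \mu = 0$ yields
\begin{equation*}
(2\pi\hbar)^{-n}\int s^{\alpha}\bar{s}^{\beta}\, e^{\bar{s}\xi/\hbar - |s|^{2}/\hbar}\, d\mu(s) = \mathbf{1}_{\alpha \geq \beta}\, \frac{\alpha!}{(\alpha-\beta)!}\, \hbar^{|\beta|}\xi^{\alpha-\beta}.
\end{equation*}
Only multi-indices with $\alpha \geq \beta$ componentwise contribute, producing in $J(z,v)$ explicit terms of the shape $\hbar^{|\beta|}(z-v)^{\alpha-\beta}(\partial_{z}^{\alpha}\partial_{\bar{z}}^{\beta}f)(v)$. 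Substituting into the outer integral, the identity $(z-v)^{\gamma}e^{(z\bar{v} - |v|^{2})/\hbar} = \hbar^{|\gamma|}\partial_{\bar{v}}^{\gamma}e^{(z\bar{v} - |v|^{2})/\hbar}$ and integration by parts in $v$ (whose boundary terms vanish thanks to the Gaussian decay) transfer $\partial_{\bar{v}}^{\gamma}$ onto $g \cdot (\partial_{z}^{\alpha}\partial_{\bar{z}}^{\beta}f)$, since holomorphicity of $F$ forbids $\partial_{\bar{v}}$ from landing on $F$. Each resulting contribution takes the form $\hbar^{|\alpha|}T_{\hbar}(\text{explicit symbol})$.

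To reach the stated Wick form I would then collect contributions by total $\hbar$ order $\ell = |\alpha|$ and by the multi-index $\tau$ of $\partial_{\bar{z}}$ derivatives that the Leibniz rule sends onto $g$ in the preceding step. The inner sum over the split parameter $\beta$ telescopes by the multinomial identity $\sum_{\rho \leq \mu}(-1)^{|\rho|}/[\rho!(\mu-\rho)!] = \delta_{\mu, 0}$, surviving only when $\tau$ equals the full $\partial_{\bar{z}}$ multi-index on $g$. This produces exactly $\sum_{\ell < N}(-\hbar)^{\ell}\sum_{|\alpha| = \ell}(\alpha!)^{-1}T_{\hbar}\bigl((\partial_{z}^{\alpha}f)(\partial_{\bar{z}}^{\alpha}g)\bigr)$.

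The principal obstacle is the remainder bound. Both the Taylor integral remainder of $f$ and the Taylor terms that survive past $\hbar^{N-1}$ are handled by the same recipe: push any remaining $(z-v)^{\gamma}$ through the integration by parts trick to obtain Toeplitz operators with explicit symbols, then apply $\|T_{\hbar}(h)\|_{op} \leq \|h\|_{\infty}$. Each unit of $\hbar$ beyond $\hbar^{N-1}$ arises either from a $\partial_{z}$-$\partial_{\bar{z}}$ pair acting inside $f$ or from one $\partial_{\bar{v}}$ transferred onto $g$; counting these pairings yields precisely $|f|'_{N+m}|g|'_{N-m}$ with $0 \leq m \leq N$. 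The delicate point is preventing any summand from requesting more than $2N$ derivatives of $f$, which is all that $f \in \Cb^{2N}$ provides; this is to be achieved by truncating the Taylor expansion by the holomorphic degree $|\alpha|$ rather than by the total degree $|\alpha| + |\beta|$, via an iterated Taylor expansion first in $s$ and then in $\bar{s}$, whose bookkeeping is the main technical task of the proof.
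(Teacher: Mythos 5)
Your overall route matches the paper's. Both proofs reduce to $\hbar=1$, work from the explicit Bargmann kernel, Taylor-expand $f$ around the outer integration variable, use the Gaussian moment computation (which in your version plays the role of the paper's first integration-by-parts identity and produces the constraint $\alpha\geq\beta$ together with the $\hbar^{|\beta|}$ factors), transfer the residual polynomial factors onto $g$ by integrating by parts in the outer variable, and collect by total $\hbar$-power via the same multinomial telescoping identity $\sum_{\rho\leq\mu}(-1)^{|\rho|}/[\rho!(\mu-\rho)!]=\delta_{\mu,0}$. The Wick form and the $\ell=|\alpha|$ bookkeeping are identical to the paper's.

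There is however a genuine gap in your remainder estimate, and it is not the one you flag. You propose, for the surviving terms and the Taylor remainder alike, to push all residual $(z-v)^\gamma$ factors through integration by parts so as to land on a pure Toeplitz operator, and then invoke $\|T_\hbar(h)\|_{op}\leq\|h\|_\infty$. For a surviving term with $\alpha\geq\beta$ and $N\leq|\alpha|<2N$, full integration by parts transfers up to $|\alpha-\beta|$ derivatives onto $g$, and $|\alpha-\beta|$ can be as large as $2N-1$, which exceeds the hypothesis $g\in\Cb^N$ as soon as $N\geq 2$. (The danger you name — $f$ receiving more than $2N$ derivatives — does not actually arise: after a total-degree-$2N$ Taylor expansion, $f$'s derivative order is automatically at most $2N$.) Your proposed fix, an ``iterated Taylor expansion first in $s$ and then in $\bar s$,'' is not a well-defined operation for smooth non-real-analytic symbols, since $s$ and $\bar s$ cannot be expanded independently without an almost-analytic extension. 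The paper resolves the issue differently and more cheaply: for each $(\alpha,\beta)$ with $|\alpha|\geq N$ it performs only $|\gamma|=2N-|\alpha|-|\beta|\leq N$ integrations by parts (so $g$ receives at most $N$ derivatives), leaving a residual polynomial factor $(z_2-z_3)^{\alpha-\beta-\gamma}$ in the kernel, and then bounds the resulting operator not by $\|T_\hbar(h)\|\leq\|h\|_\infty$ but by the weighted Schur estimate of Lemma~\ref{lemma:bound} (with weight $W^{2N}$), which is designed precisely to absorb such a polynomial factor. Both the truncated integration by parts and the weighted Schur lemma are missing from your proposal, and without some version of the latter the remainder argument does not close with the stated regularity $(f,g)\in\Cb^{2N}\times\Cb^N$.
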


To our knowledge, the estimate of the remainder is better than what can be found in the literature. For instance, in \cite{Co} or \cite{Chi}, the number of derivatives involved in the estimates depend on the dimension $n$. However, in \cite{Le} (Section 2.4.3), Lerner obtained estimates similar to theorem \ref{thm:toepl-barg-comp} with $N=2$ for the quantity $T_1(f) T_1(g) + T_1(g) T_1(f)$.

Interestingly, it seems that Berezin-Toeplitz quantization behaves better than Weyl quantization. Indeed, in all the results we know, the number of derivatives needed to estimate the remainder when we truncate the Moyal product, depends on the dimension. Actually, the Weyl symbol of the Toeplitz operator with multiplicator $f$ is obtained by smoothing out $f$, cf. for instance \cite{BeSh}, Section 5.2 or \cite{Fo}, Section 2.7. So the higher derivatives of the Weyl symbol are controlled in a sense by the lower derivatives of the multiplicator. Another observation is that the norm estimates of pseudo-differential operators, as for instance the Calderon-Vaillancourt theorem, are generally obtained by the Cotlar-Stein lemma whereas the Schur lemma is generally sufficient to estimate the norm of Toeplitz operators. However, many authors use Cotlar-Stein lemma for Toeplitz operators even if it is not necessary.
For general results on Weyl quantization with symbols of limited regularity, we refer the reader to \cite{Sj}.

By Theorem \ref{thm:toepl-barg-comp} with $N=1$ and $N=2$
 we obtain the following version of (P3) and (P2), respectively: for any $f \in \Cb^2 ( \C^n), g \in \Cb^1 ( \C^n)$,
$$\bigl\| T_\hbar(fg) - T_\hbar(f)T_\hbar(g) \bigr\|_{op} \leq \gamma'  ( |f|_1' |g|_1' + |f|_2'|g|_0' ) \hbar $$
and for any $f \in \Cb^4 ( \C^n)$ and $g \in \Cb ^2 ( \C^n)$,
$$\Bigl\| -\frac{i}{\hbar} \bigl[ T_{\hbar}(f),T_{\hbar}(g) \bigr] - T_\hbar (\{f,g\}) \Bigr\|_{op} \leq \beta'   \bigl( |f|_2' |g|_2' + |f|_3'|g|_1' + |f|_4'|g|'_0 \bigr)\hbar\;, $$
where the constants $\gamma'$, $\beta'$ do not depend on $f$, $g$. Adapting the proof of (P2) in the closed symplectic case, we will also show that for any $f, g \in \Cb^3 ( \C^n)$,
\begin{gather} \label{eq:P2_bargmann}
 \Bigl\| -\frac{i}{\hbar} \bigl[ T_{\hbar}(f),T_{\hbar}(g) \bigr] - T_\hbar (\{f,g\}) \Bigr\|_{op} \leq \beta''   \bigl( |f|_1' |g|_3' + |f|_2'|g|_2' + |f|_3'|g|'_1 \bigr)\hbar \;.
\end{gather}
It would be interesting to calculate the constants $\alpha',\beta',\beta"$ and $\gamma'$.

Let us mention finally that by using explicit form \eqref{eq-schkerbarg} of the Schwartz kernel of $\Pi_\hbar$ and arguing as in the compact case, one gets a lower bound for $\| T_{\hbar} (f) \|_{op}$ as in (P1) provided $f$ is $C^2$-smooth and its uniform norm is attained at some point of $\C^n$.
It is unclear whether the latter condition can be relaxed.

\subsection{Joint approximate measurements}\label{subsec-joint}

In this section we present an application of our results to
joint approximate measurements of semiclassical observables. A slightly different measurement scheme
was discussed in a similar context in \cite{P-CMP} (see also \cite[Chapter 9]{P-Rosen-book}). The main novelty is that the sharp remainder bounds enable us to work on smaller scales including the quantum length scale.

\subsubsection{Preliminaries on quantum measurements}\label{subsubsec-POVMs}
We start with some preliminaries on positive operator valued measures and quantum measurements (we refer the reader to \cite{BLW},  \cite[Chapter 9]{P-Rosen-book} and references therein).
 Recall that $\cL(\Hilb)$ denotes the space of all Hermitian operators on a finite-dimensional complex
 Hilbert space $\Hilb$.

Consider a set $\Theta$ equipped with a $\sigma$-algebra $\cC$ of its subsets. An $\cL(\Hilb)$-valued {\it positive operator valued measure} (POVM) $W$ on $(\Theta,\cC)$ is
a countably additive map $W\colon \cC \to \cL(\Hilb)$ which takes each subset $X \in \cC$ to a positive operator $W(X) \in \cL(\Hilb)$ and which is normalized by $W(\Theta) = \id$.

In quantum measurement theory, $W$  represents a measuring device coupled with the system, while $\Theta$ is interpreted as the space of device readings. When the system is in a pure state $\xi \in \Hilb$, $|\xi|=1$,  the probability of finding the device in a subset $X \in \cC$ equals $\langle W(X)\xi,\xi \rangle$. Given a bounded measurable function $f\colon \Theta \to \R$, one can define the integral $\EE_W(f):=\int_\Theta f\;dW\in \cL(\Hilb)$ as follows. Introduce a measure $\mu_{W,\xi}(X) = \langle W(X) \xi,\xi\rangle$ on $\Theta$ and put  $\langle \EE_W(f)\xi,\xi\rangle = \int_\Theta f\; d\mu_{W,\xi}$, for every state $\xi \in \Hilb$. In a state $\xi$, the function $f$ becomes a random variable on $\Theta$ with respect to the measure $\mu_{W,\xi}$ with the expectation $\langle \EE_W(f)\xi,\xi\rangle$.

\medskip
\noindent
\begin{exam}\label{exm-povms-pvms}{\rm An important class of POVMs is formed by the {\it projector valued measures} $P$, for which all the operators $P(X)$, $X \in \cC$ are orthogonal projectors. For instance, every  quantum observable $A \in \cL(\Hilb)$ with $N$ pair-wise distinct eigenvalues gives rise to the projector valued measure $P:= \{P_i\}$ on the set $\Theta_N:=\{1,\ldots,N\}$ and a function $\lambda: \Theta_N \to \R$ such that $A = \sum_{i=1}^N \lambda_i P_i$ is the spectral decomposition of $A$. At a state $\xi$, the probability of the outcome
$\lambda= \lambda_i$ equals $p_i= \langle P\xi,\xi \rangle$, which agrees with the standard statistical
postulate of quantum mechanics: the observable $A$ takes value $\lambda_i$ with probability $p_i$.
Let us mention that at a pure state $\xi$ the expectation of $A$ equals $\langle A\xi,\xi \rangle$ and the variance $\mathbb{V}ar(A,\xi)$ equals $\langle A^2\xi,\xi \rangle- \langle A\xi,\xi \rangle^2$.
}
\end{exam}

\medskip
\noindent
\begin{exam}\label{exam-noise-semicl} {\rm POVMs naturally appear in the context of quantization.
This will be fundamental for our discussion on quantum measurements of semiclassical observables.
Let $\Theta=M$ be a quantizable symplectic manifold equipped with the Borel $\sigma$-algebra. Consider a Berezin quantization $( T_{\hbar} : \Cl (M , \R) \rightarrow \cL ( \Hilb_{\hbar}), \hbar \in \Lambda)$. The maps $T_{\hbar}$ being positive, by Riesz theorem, we have $\EE_{G_\hbar}(f) = T_{\hbar} (f)$ for a POVM $G_{\hbar}$ of $M$.
 Fix a sequence of quantum states $\xi_\hbar \in \Hilb_\hbar$, $|\xi_\hbar|=1$.
In this case  the measure $\mu_{G_{\hbar},\xi_\hbar}$ governs the distribution of the quantum state $\xi_\hbar$ in the phase space. Limits of such measures as $\hbar \to 0$, which are called semiclassical defect measures (or Husimi measures), has been studied in the literature, see e.g. \cite[Chapter 5]{Z}.}
\end{exam}

\medskip

A somewhat simplistic description of quantum measurement is as follows: an experimentalist,
after setting a quantum measuring device (i.e., an $\cL(\Hilb)$-valued POVM $W$ on $\Theta$), performs a measurement whose outcome, at every state $\xi$, is the measure $\mu_{W,\xi}$ on $\Theta$.
Given a function $f$ on $\Theta$ (experimentalist's choice), this procedure yields {\it an unbiased approximate measurement} of the quantum observable $A:=\EE_W(f)$. The expectation
of $A$ in every state $\xi$ coincides with the one of the measurement procedure (hence {\it unbiased}), in spite of the fact that actual probability distributions determined by the observable $A$ (see Example \ref{exm-povms-pvms} above) and the pair $(f,\mu_{W,\xi})$  could be quite different
(hence {\it approximate}). In particular, in general, the variance increases under an unbiased approximate measurement:
$
\mathbb{V}ar(f,\mu_{W,\xi}) = \mathbb{V}ar(A,\xi) + \langle \Delta_W(f)\xi,\xi\rangle\
$, where
$$
\Delta_W(f):= \EE_W(f^2)-\EE_W(f)^2
$$
is {\it the noise operator}. This operator, which is known to be positive, measures the increment of the variance. Furthermore, $\Delta_W(f)=0$ provided $W$ is a projector valued measure, and hence every quantum
observable admits a noiseless measurement in light of  Example \ref{exm-povms-pvms}.

\medskip
\noindent
\begin{exam}\label{exam-noise-semicl-1} {\rm In the setting of Example \ref{exam-noise-semicl},
the noise operator $\Delta_{G_\hbar}(f)$ is given by the expression $T_\hbar(f^2) - T_\hbar(f)^2$ appearing in the left hand side of the quasi-multiplicativity property (P3).
Look now at the case when $\xi$ is an eigenvector of $T_\hbar(f)$ with an eigenvalue
$\lambda$ for a smooth classical observable $f$ on $M$.  The expectation of $f$ with respect to $\mu_{G_{\hbar},\xi}$ equals $\lambda$, while the variance coincides with the noise $\langle \Delta_{G_\hbar}(f)\xi,\xi\rangle$. By (P3), the latter does not exceed $\gamma |f,f|_2\hbar$. It follows from the Chebyshev inequality that for every $r>0$ (perhaps depending on $\hbar$)
\begin{equation}
\label{eq-Chebyshev}
\mu_{G_{\hbar},\xi}(\{|f-\lambda| \geq r\}) \leq \frac{\gamma|f,f|_{2} \hbar}{r^2}\;.
\end{equation}
This inequality manifests the fact that in the semiclassical limit the
eigenfunctions are concentrated near the energy level $\{f=\lambda\}$, see e.g. \cite[Section 6.2.1]{Z}.
Note also that \eqref{eq-Chebyshev} provides a meaningful estimate for the
concentration at the quantum length scale $r \sim \sqrt{\hbar}$. To compare
with, the usual method to estimate the left-hand side of \eqref{eq-Chebyshev}
is to  build a local inverse of $T(f) - \lambda$ on the region $\{ |f -
\lambda| \geq r \}$. In this way, we prove that
$\mu_{G_{\hbar},\xi}(\{|f-\lambda| \geq r\})$ is in $\bigo ( \hbar^{\infty})$  for smooth $f$ and fixed $r$. More precisely, assuming that $f$ is smooth, one has for any $N$ and $r>0$
$$
\mu_{G_{\hbar},\xi}(\{|f-\lambda| \geq r\}) \leq C_N(f,r) \hbar^N \;.
$$
where $C_N(f,r)$ is a positive constant independent of $\hbar$.
}
\end{exam}

\medskip

Let $A,B \in \cL(\Hilb)$ be a pair of quantum observables. A {\it joint} unbiased approximate measurement
of $A$ and $B$ consists of an $\cL(\Hilb)$-valued POVM $W$ on some space $\Theta$  and a pair of random
variables $f$ and $g$ on $\Theta$ such that $\EE_W(f)=A$ and $\EE_W(g)=B$.

\medskip
\noindent
\begin{defin}\label{defin-min-noise}{\rm
{\it The minimal noise}
 associated to the pair $(A,B)$ is given by
$$\nu(A,B):= \inf_{W,f,g} ||\Delta_W(f)||^{1/2}_{op} \cdot ||\Delta_W(g)||^{1/2}_{op}\;,$$
where the infimum is taken over all $W,f,g$ as above. (Note: the space $\Theta$ is not fixed,
it is varying together with the POVM $W$).}
\end{defin}

\medskip
\noindent
The following {\it unsharpness principle}
(see \cite[Theorem 9.4.16]{P-Rosen-book}) provides a lower bound on
the minimal noise:
\begin{equation}
\nu(A,B) \geq \frac{1}{2}\cdot ||[A,B]||_{op}.
\end{equation}
It reflects impossibility of a noiseless joint unbiased approximate measurement of a pair
of non-commuting observables.

Fix now any scheme $T_\hbar$ of Berezin quantization of a closed quantizable symplectic
manifold $(M,\omega)$. Let $G_\hbar$ be the corresponding $\cL(\Hilb_\hbar)$-valued POVM on $M$,
i.e., $T_\hbar(f) = \int f \;dG_\hbar$ for every smooth function $f$ on $M$.
In light of Examples \ref{exam-noise-semicl} and \ref{exam-noise-semicl-1} above,
the unsharpness principle yields
\begin{equation}\label{eq-conseq-unsh}
\begin{split}
||T_\hbar(f^2) - T_\hbar(f)^2||_{op}^{1/2}  \cdot ||T_\hbar(g^2) - T_\hbar(g)^2||_{op}^{1/2}  \geq \nu(T_\hbar(f),T_\hbar(g)) \\ \geq \frac{1}{2} \cdot ||[T_\hbar(f), T_\hbar(g)]||_{op} \;\; \forall f,g \in \Cl^\infty(M)\;.
\end{split}
\end{equation}
Combining this with (P3) and (P2)
we get the following estimate for the minimal noise of a pair of semiclassical observables.

\medskip
\noindent
\begin{prop}\label{prop-double-set-meas}
\begin{equation}
\label{eq-vsp-joint}
\gamma|f,f|_2^{1/2}|g,g|_2^{1/2}\hbar \geq \nu(T_\hbar(f),T_\hbar(g)) \geq \frac{1}{2}\cdot (\hbar ||\{f,g\}||- \beta\hbar^2|f,g|_{1,3})\;.
\end{equation}
\end{prop}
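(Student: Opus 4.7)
The plan is to read off both inequalities as immediate corollaries of ingredients already assembled above: the upper bound from quasi-multiplicativity (P3), and the lower bound from the unsharpness principle \eqref{eq-conseq-unsh} combined with the correspondence principle (P2) and a mild use of (P1). I do not expect any genuine obstacle here; the argument is essentially bookkeeping, and the only minor subtlety is a Leibniz-based absorption needed to clean up the lower bound.

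For the upper bound, first observe that the POVM $G_\hbar$ together with the classical functions $f$ and $g$ is, tautologically, a joint unbiased approximate measurement of $(T_\hbar(f), T_\hbar(g))$: by construction $\EE_{G_\hbar}(f) = T_\hbar(f)$ and $\EE_{G_\hbar}(g) = T_\hbar(g)$. Definition \ref{defin-min-noise} therefore yields
\[
\nu(T_\hbar(f), T_\hbar(g)) \leq \|\Delta_{G_\hbar}(f)\|_{op}^{1/2} \|\Delta_{G_\hbar}(g)\|_{op}^{1/2}.
\]
Example \ref{exam-noise-semicl-1} identifies $\Delta_{G_\hbar}(f) = T_\hbar(f^2) - T_\hbar(f)^2$; applying (P3) with $g$ replaced by $f$ bounds its operator norm by $\gamma |f,f|_2 \hbar$, and analogously for $g$. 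Taking square roots and multiplying gives the left inequality.

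For the lower bound, invoke \eqref{eq-conseq-unsh} to reduce the problem to an operator-norm lower bound on $[T_\hbar(f), T_\hbar(g)]$. Multiplying the estimate in (P2) by $\hbar$ and applying the reverse triangle inequality delivers
\[
\|[T_\hbar(f), T_\hbar(g)]\|_{op} \geq \hbar \|T_\hbar(\{f,g\})\|_{op} - \beta \hbar^2 |f,g|_{1,3},
\]
while the lower bound in (P1) applied to the smooth function $\{f,g\}$ gives $\|T_\hbar(\{f,g\})\|_{op} \geq \|\{f,g\}\| - \alpha |\{f,g\}|_2 \hbar$. Since $\{f,g\}$ is bilinear in the first derivatives of $f$ and $g$, the Leibniz rule bounds $|\{f,g\}|_2$ by a constant (depending only on $\omega$ and $\rho$) times $|f,g|_{1,3}$, so the residual $\alpha$-term can be absorbed into $\beta \hbar^2 |f,g|_{1,3}$ by enlarging $\beta$ if needed. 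Dividing by two yields the right inequality.
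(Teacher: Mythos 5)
Your proof is correct and follows essentially the same route the paper intends: (P3) for the upper bound via the noise operator $\Delta_{G_\hbar}$, and the unsharpness principle together with (P2) for the lower bound. You also correctly identify a step the paper elides in its one-line derivation: passing from $\|T_\hbar(\{f,g\})\|_{op}$ to $\|\{f,g\}\|$ requires (P1), producing an extra $\alpha\hbar^2|\{f,g\}|_2$ term which must then be absorbed into $\beta\hbar^2|f,g|_{1,3}$ via a Leibniz bound (and indeed, in the proof of Theorem \ref{thm-noise} the authors themselves invoke (P1) and keep both $|\{f_x,f_y\}|_2$ and $|f_x,f_y|_{1,3}$); strictly speaking this means the constant in the Proposition is not the exact $\beta$ of (P2) but a multiple of $\max(\alpha,\beta)$, which is immaterial for all downstream applications.
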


\subsubsection{Joint measurements of the sign}
Fix an increasing function $u: \R \to [-1,1]$ which equals $-1$ on $(-\infty, -1]$, satisfies
$u(z)=z$ when $z$ is near $0$ and equals $1$ on $[1,+\infty)$. For a function $f$ on $M$ and a positive $s \ll 1$ the classical observable $f_s:=u(f(x)/s)$ can be considered as a smooth approximation to the sign of $f(x)$. We refer to $s$  as {\it the fuzziness} parameter.

Suppose that that the set $\{f=0,g=0\}$ is non-empty and
\begin{equation}
\label{eq-pbgz}
\sup_{\{f=0,g=0\}} |\{f,g\}|  >0\;.
\end{equation}
Assume also that $0 < s \leq t \ll 1$. In what follows we are focusing on simultaneous approximate measurements of $T_\hbar(f_s)$ and $T_\hbar(g_t)$, the semiclassical observables which  correspond to the signs of $f$ and $g$ with the fuzziness parameters $s$ and $t$, respectively. Note that the operators
$(1+ T_\hbar(f_s))/2$ and $(1+ T_\hbar(g_t))/2$ can be interpreted as ``quasi-projectors" to the phase
space regions $\{f >0\}$ and $\{g>0\}$, respectively. Since $f$ and $g$ do not Poisson commute, such a measurement is in general noisy due to the unsharpness principle.

Suppose now that $s= r\hbar^p$ and $t=R\hbar^q$ with $R,r >0$, $p \geq 0$, $q \geq 0$ and $p+q \leq 1$. The standing assumption $s \leq t$ yields $p \geq q$, and $r \leq R$ if $p=q$.

\begin{thm}\label{thm-joint-noise} Let $0 \leq q \leq p \leq 1/2$.
There exists constants $c_+ > c_- >0$ depending only on $f,g,u$ and
the metric such that
\begin{equation}
\label{eq-nu-pleq12}
c_- s^{-1}t^{-1}\hbar \leq \nu(T_\hbar(f_s),T_\hbar(g_t)) \leq c_+ s^{-1}t^{-1}\hbar\;,
\end{equation}
for all sufficiently small $\hbar$ and, in case $p=1/2$, for all sufficiently large $r$.
\end{thm}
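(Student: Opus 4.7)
The plan is to apply Proposition~\ref{prop-double-set-meas} to the pair $(f_s,g_t)$ and extract the scaling in $s,t,\hbar$ declared by \eqref{eq-nu-pleq12}. Since $u$ is a fixed smooth function, the chain rule supplies constants $C_k$ depending only on $u,f$ and $\rho$ with $|f_s|_k\leq C_k s^{-k}$ for $k\geq 1$ and $0<s\leq 1$, and $|f_s|_0\leq 1$ (analogously for $g_t$). Consequently $|f_s,f_s|_2^{1/2}\leq C s^{-1}$ and $|g_t,g_t|_2^{1/2}\leq C t^{-1}$, so the left inequality in \eqref{eq-vsp-joint} immediately yields the upper estimate $\nu(T_\hbar(f_s),T_\hbar(g_t))\leq c_+ s^{-1}t^{-1}\hbar$ with $c_+:=\gamma C^2$.

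For the lower bound the main task is to control $\|\{f_s,g_t\}\|$ from below. From
$$\{f_s,g_t\}=(st)^{-1}\,u'(f/s)\,u'(g/t)\,\{f,g\}$$
one sees that the idea is to locate a point where both cutoffs $u'$ are saturated at $1$ while $\{f,g\}$ stays macroscopic. Fix $c_0>0$ with $u'\equiv 1$ on $[-c_0,c_0]$, and using \eqref{eq-pbgz} pick $x_0\in\{f=g=0\}$ with $|\{f,g\}(x_0)|\geq\tfrac12\sup_{\{f=g=0\}}|\{f,g\}|=:\eta>0$. Near $x_0$ one has $|f(x)|,|g(x)|\leq L\,\mathrm{dist}(x,x_0)$ for some Lipschitz constant $L$, so on the ball of radius $c_0 L^{-1}\min(s,t)$ around $x_0$ both normalised arguments lie in the linear region of $u$, the cutoff factors equal $1$, and $|\{f,g\}|\geq\tfrac34\eta$ by continuity. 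Hence $\|\{f_s,g_t\}\|\geq \tfrac{3}{4}\eta\,(st)^{-1}$ for all $s,t$ small enough.

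Using $s\leq t$, the three terms of $|f_s,g_t|_{1,3}$ are all dominated by $C s^{-3}t^{-1}$, so the right inequality in \eqref{eq-vsp-joint} gives
$$\nu(T_\hbar(f_s),T_\hbar(g_t))\geq \frac{\hbar}{2st}\Bigl(\tfrac{3}{4}\eta-\beta C\,\hbar s^{-2}\Bigr).$$
Since $\hbar s^{-2}=r^{-2}\hbar^{1-2p}$, the correction term is negligible in both regimes of the statement: for $p<1/2$ it tends to $0$ as $\hbar\to 0$, and for $p=1/2$ it equals $r^{-2}$, made small by taking $r$ large. In either case the parenthesis exceeds $\tfrac{3}{8}\eta$, yielding the desired lower bound $c_- s^{-1}t^{-1}\hbar$ with $c_-:=3\eta/16$. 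The main obstacle is the middle paragraph: producing a single point at which the two nonlinear cutoffs are simultaneously saturated while $\{f,g\}$ remains macroscopic, which is exactly where hypothesis \eqref{eq-pbgz} and the linear local behaviour of $f,g$ enter.
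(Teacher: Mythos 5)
Your proposal is correct and follows the same strategy as the paper: apply Proposition~\ref{prop-double-set-meas}, track the $s,t$ scaling of the $\Cl^k$-norms of $f_s,g_t$ via the chain rule, use \eqref{eq-pbgz} to get a lower bound of order $(st)^{-1}$ on $\|\{f_s,g_t\}\|$, and observe that $\hbar s^{-2}=r^{-2}\hbar^{1-2p}$ makes the correction term negligible. One small simplification you could make in your middle paragraph: since $f(x_0)=g(x_0)=0$ exactly and $u'(0)=1$, evaluating $\{f_s,g_t\}=(st)^{-1}u'(f/s)u'(g/t)\{f,g\}$ at $x_0$ already gives $(st)^{-1}\{f,g\}(x_0)$, so there is no need for the surrounding-ball argument or the continuity step that replaces $\eta$ by $\tfrac34\eta$.
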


\medskip
\noindent In particular, if $p < 1/2$, the minimal noise $\nu$ is positive and $\sim \hbar^{1-p-q}$,
and at the quantum length scale $p=q=1/2$, it is bounded from below by $c_+R^{-1}r^{-1} > 0$.

\medskip
\noindent
\begin{proof} The result immediately follows from Proposition \ref{prop-double-set-meas}.
Indeed the upper bound in \eqref{eq-vsp-joint} is
$\leq  c_1 s^{-1} t^{-1}\hbar$, and the lower bound is not less than
\begin{equation}
\label{eq-nu-minus}
c_2  s^{-1}t^{-1}\hbar - c_3  s^{-3}t^{-1}\hbar^2= c_2 \hbar s^{-1}t^{-1}(1- c_4\hbar s^{-2})\;,
\end{equation}
where $c_i$ are positive constants depending only on $f,g,u$ and the metric.
Note that the positivity of $c_2$ follows from assumption \eqref{eq-pbgz} on Poisson non-commutativity of $f$ and $g$.
Since $s^{-2}\hbar  = r^{-2}\hbar^{1-2p}$, we get that \eqref{eq-nu-pleq12} holds
for all sufficiently small $\hbar$ if $p <1/2$ and for all sufficiently small
$\hbar$ and all sufficiently large $r$ if $p=1/2$.
\end{proof}

\medskip
\noindent
This result deserves a discussion.

\medskip
\noindent
\begin{rem} \label{rem-reduced} {\rm Now we are ready to explain the advantages
of the reduced expression $|f,g|_{1,3}$ (see formula \eqref{eq-reduced}) appearing
in the remainder term of the correspondence principle (P2) as compared to $|f,g|_4$ which
includes fourth derivatives. To this end, replace for a moment $|f,g|_{1,3}$
in the remainder of (P2) by $|f,g|_4$. Then, accordingly, the lower bound \eqref{eq-nu-minus} will
be modified as
$$c_2 \hbar s^{-1}t^{-1} - c_3 \hbar^2 s^{-4}\;.$$
The first term in the right hand side is $\sim \hbar^{1-p-q}$
and the second term is $\sim \hbar^{2-4p}$. Thus for positivity of the right hand side it
is necessary that $3p-q \leq 1$. This inequality is violated, for instance, in the case when $p=1/2, q=0$, i.e., $s \sim r\hbar^{1/2}$ and $t \sim 1$. This case however can be handled by using the reduced remainder.
Indeed, inequality \eqref{eq-nu-pleq12} above yields
$\nu \sim R^{-1}r^{-1} \hbar^{1/2}$
for $p=1/2,q=0$.
}
\end{rem}

\medskip
\noindent
\begin{rem}\label{rem-pb4}
{\rm Consider the following example: Let $M=S^2=\{x^2+y^2+z^2 = 1\} \subset \R^{3}$
be the standard sphere equipped with the symplectic form of the total area $2\pi$. Put $f=x$ and $g=y$.
According to the prediction of \cite{P-CMP} (which was made for a slightly different measurement scheme)
the noise $\nu$ of such a measurement satisfies
\begin{equation}\label{eq-noise-sim}
\nu \gtrapprox \frac{\hbar}{\text{Area}(\Pi)}\;,
\end{equation}
where $\Pi$ is a ``rectangle"
$$\Pi = \{|x| \lessapprox s,\;|y| \lessapprox t,\;  z >0\} \subset S^2\;.$$
Inequality \eqref{eq-noise-sim} was proved in \cite{P-CMP} for $s,t \sim 1$.
Our methods confirm this prediction for smaller fuzziness parameters including the quantum length scales $s \sim r\hbar^{1/2}, t \sim R\hbar^{1/2}$ as well as  $s \sim r\hbar^{1/2}, t \sim 1$. Indeed, observe
that $\text{Area}(\Pi)\approx st$ and hence inequality \eqref{eq-noise-sim} follows from \eqref{eq-nu-pleq12}. Note that for $s \sim r\hbar^{1/2}, t \sim R\hbar^{1/2}$ the rectangle $\Pi$
is a ``quantum box": its area is $\sim rR\hbar$, the minimal possible (in terms of the power of $\hbar$)
area occupied by a quantum state.
}
\end{rem}

\medskip
\noindent
\begin{rem}\label{rem-yohann}{\rm
The minimal noise $\nu$  is well defined for arbitrary small
fuzziness parameters $s,t$. When $s$ and $t$ are smaller than the quantum length $\sim \sqrt{\hbar}$,
it is unclear how to calculate/estimate the minimal noise $\nu$  by
standard methods of semi-classical analysis. Indeed, the derivatives of $f_s$ and $g_t$ blow up and the remainders in (P2),(P3) dominate the leading terms. In particular, the lower bound in \eqref{eq-vsp-joint} could become negative, and hence useless. We refer  to a forthcoming paper \cite{LF-P-S} for a progress in this direction.
}
\end{rem}

\subsection{Phase space localization on quantum length scale} \label{subsec-phasespace}
Our next application of the fine structure of the remainders in the Berezin quantization
deals with phase space localization of a quantum particle at small scales. We use a model proposed
in \cite{P-CMP}.

Let $\cU=\{U_1,...,U_N\}$ be a finite open cover of a closed quantizable symplectic manifold $M$.
Given a partition of unity $\{f_1,...,f_N\}$ subordinated to $\cU$, consider the following
{\it registration procedure}: if the system is prepared in a quantum state $\xi \in \Hilb_\hbar$, $|\xi|=1$,
it is registered in the set $U_i$ with probability $\langle T_\hbar(f_i)\xi,\xi\rangle$. Here the cover
and the partition of unity may depend on $\hbar$. The registration procedure enables one
to localize a semiclassical system in the phase space.

For $x \in Q:= [-1,1]^N$ put $f_x:= \sum x_i f_i$.
Define
$$\cN_+ := \max_{x \in Q} ||T_\hbar(f_x^2) - T_\hbar(f_x)^2||_{op}$$
and
$$\cN_-:= \frac{1}{2} \cdot \max_{x,y \in Q} ||[T_\hbar(f_x),T_\hbar(f_y)]||_{op}\;.$$
Observe that $\cN_+ \geq \cN_-$ by the unsharpness principle \eqref{eq-conseq-unsh} above.

The above registration procedure is known to exhibit {\it inherent noise} which
measures the unsharpness of the registration procedure. We refer to \cite{P-CMP} and Chapter 9 of \cite{P-Rosen-book} for the precise definition. It is important for us that this noise
lies in the interval $I_{noise}:=[\cN_-,\cN_+]$, which we shall call {\it the noise interval}.
The fine remainder estimates obtained in this paper yield meaningful bounds on the noise interval
of the phase space localization procedure on small scales, up to the quantum length scale.

\medskip
\noindent
$(\clubsuit)$ {\sc The choice of the partition of unity:} To start with, let us choose a
cover of $M$ together with a subordinated partition of unity in a special way. Fix $r_0 >0$ small enough
and for $0 < r \leq r_0$ consider a maximal $r/2$-net $\{z_i\}$ of points in $M$ (with respect to the Riemannian distance $d$ associated to the metric $\rho$). This means that $d(z_i,z_j) \geq r/2$ for $i \neq j$ and $\{z_i\}$ is a maximal
collection with this property. Let $\cU$ be the cover of $M$ by metric balls $U_i:= B(z_i,r)$.
Let $u: [0,+\infty) \to [0,1]$ be a smooth cut off function which equals $1$ on $[0, 0.6]$ and vanishes on $[0.7,+\infty)$. Define  functions $g_i$ on $M$ by $g_i(x)= u(d(x,z_i)/r)$. It was shown in \cite{P-CMP} that for all sufficiently small $r < r_0(M,\omega,\rho)$
there exists $p$ (depending only on the dimension of $M$) such that every $x \in M$ is covered by at most $p$ balls $U_i$.
Moreover, the balls $B(z_i,0.6r)$ cover $M$.
Thus the functions
$$f_i:= \frac{g_i}{\sum_{i=1}^N g_i}\;,\; i=1,\dots, N$$
form a partition of unity subordinated to $\cU$ and moreover
there exists $C>0$ such that for every $r \in (0,r_0)$ and every $i=1,...,N$
\begin{equation}
\label{eq-deriv-part}
|f_i|_k \leq Cr^{-k}\;, k=1,2,3.
\end{equation}

\medskip
\noindent
In what follows we focus on the registration procedure associated to the cover $\cU$ and the
partition of unity $\{f_i\}$ described in $(\clubsuit)$ , where the radius $r \in (0,r_0]$ plays the role of a parameter. The next result provides bounds for the corresponding noise interval.

\medskip
\noindent \begin{thm}\label{thm-noise}
There exist constants $0 < c_- < c_+$ and $\kappa >0$ depending only
on $(M,\rho,\omega)$  such that
\begin{equation}\label{eq-noisint-1}
I_{noise} \subset [ c_-\hbar r^{-2},c_+\hbar r^{-2}]
\end{equation}
for any sufficiently small $\hbar >0$ and $r \in [\kappa\hbar^{1/2},r_0] $.
\end{thm}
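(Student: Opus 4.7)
The plan is to bound $\cN_-$ and $\cN_+$ separately; since $I_{noise} \subset [\cN_-, \cN_+]$, this will give \eqref{eq-noisint-1}.

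\textbf{Upper bound on $\cN_+$.} For any $x \in Q = [-1,1]^N$, the identity $\sum f_i \equiv 1$ and $\max |x_i| \leq 1$ imply $\|f_x\|_\infty \leq 1$, while the bounded-overlap property (at most $p$ balls meet each point) combined with \eqref{eq-deriv-part} yields $|f_x|_k \leq pC r^{-k}$ for $k=1,2,3$, uniformly in $x$. Hence $|f_x, f_x|_2 = 2|f_x|_0|f_x|_2 + |f_x|_1^2 \leq C_1 r^{-2}$, and quasi-multiplicativity (P3) gives $\|T_\hbar(f_x^2) - T_\hbar(f_x)^2\|_{op} \leq \gamma C_1 \hbar r^{-2}$, whence $\cN_+ \leq c_+ \hbar r^{-2}$.

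\textbf{Lower bound on $\cN_-$.} Combining (P2) and then (P1) gives, for any $x, y \in Q$,
\[
\bigl\| [T_\hbar(f_x), T_\hbar(f_y)] \bigr\|_{op} \geq \hbar \|\{f_x, f_y\}\|_\infty - \alpha |\{f_x, f_y\}|_2 \hbar^2 - \beta |f_x, f_y|_{1,3} \hbar^2.
\]
The derivative bounds on $f_x, f_y$ imply $|\{f_x, f_y\}|_2$ and $|f_x, f_y|_{1,3}$ are both at most $C_2 r^{-4}$, so the two error terms contribute at most $C_3 \hbar^2 r^{-4} = (C_3/\kappa^2) \hbar r^{-2}$ whenever $r \geq \kappa \sqrt{\hbar}$. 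For $\kappa$ large enough (depending only on $(M, \omega, \rho)$), this error is bounded by half of the principal term as soon as $\|\{f_x, f_y\}\|_\infty \geq c_0 r^{-2}$ for some fixed $c_0 > 0$; in that case $\cN_- \geq (c_0/4) \hbar r^{-2}$. Everything reduces to exhibiting such a pair $x, y$.

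\textbf{Construction of $x, y$ (the main obstacle).} Fix $p_0 \in M$ and Darboux coordinates $(u^1, v^1, \ldots, u^n, v^n)$ centered at $p_0$, defined on a neighborhood $V$ of $r$-independent radius. Choose a cutoff $\chi \in C_c^\infty(\R^{2n})$ equal to $1$ on $B(0,1)$ and supported in $B(0,2)$, and set in the chart
\[
\phi(q) = \chi(q/r)\, u^1(q)/r, \qquad \psi(q) = \chi(q/r)\, v^1(q)/r,
\]
extended by zero outside $V$; this is well defined for $r$ below a fixed threshold. Then $\|\phi\|_\infty, \|\psi\|_\infty \leq C$, $|\phi|_k, |\psi|_k \leq C r^{-k}$ for $k \leq 3$, and crucially $\{\phi, \psi\}(p_0) = 1/r^2$. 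Setting $x_i = \phi(z_i)/C$, $y_i = \psi(z_i)/C$ places $x, y \in Q$, and $f_x = C^{-1} \sum \phi(z_i) f_i$ is the smoothing of $\phi/C$ against the partition.

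To estimate $\{f_x, f_y\}(p_0)$ from below I rescale by $\tilde q = q/r$ in the chart: the functions $\tilde f_i(\tilde q) := f_i(r \tilde q)$ form a unit-scale partition of unity indexed by the $1/2$-net $\{z_i/r\}$, the pulled-back symplectic form $r^{-2} \Phi_r^* \omega$ converges to the standard $\omega_0$ smoothly on compacts as $r \to 0$, and $\{f_x, f_y\}(p_0) = r^{-2} \{\tilde f_{\tilde x}, \tilde f_{\tilde y}\}_{r^{-2}\Phi_r^*\omega}(0)$. A Taylor-expansion argument that exploits the linearity of $u^1, v^1$ in Darboux coordinates (killing the second-order Taylor correction in $\phi, \psi$) together with the bounded geometry of the rescaled net will give a uniform positive lower bound $\{\tilde f_{\tilde x}, \tilde f_{\tilde y}\}(0) \geq c_0'$ as $r \to 0$. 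The hardest point is ruling out that the first-order smoothing error (of size $O(1)$ relative to the principal term) cancels the leading contribution at $p_0$; this uses the explicit structure of the partition of unity from $(\clubsuit)$ and essentially amounts to a compactness argument in the space of scale-one partitions arising from $1/2$-nets in $(\R^{2n}, \omega_0)$.
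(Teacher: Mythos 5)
Your upper bound on $\cN_+$ and your assembly of the lower bound on $\cN_-$ from (P1), (P2) and the derivative estimates \eqref{eq-deriv-fx} follow the paper's proof line by line. The crucial divergence is the step
\[
\mu := \max_{x,y\in Q}\|\{f_x,f_y\}\| \;\geq\; c_2\, r^{-2},
\]
which the paper does not prove directly but cites from \cite[Example~4.5 and formula~(28)]{P-CMP}. You instead attempt to establish it from scratch by exhibiting an explicit pair $x,y\in Q$ — the sampled values of rescaled Darboux coordinate functions $\phi,\psi$ at the net points — and arguing that $\{f_x,f_y\}(p_0)\gtrsim r^{-2}$.

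This is where your proof has a genuine gap, and I do not think it is a small one. With $x_i=\phi(z_i)/C$, the function $f_x=C^{-1}\sum\phi(z_i)f_i$ is the discretization of $\phi/C$ through the partition, and its first derivative at $p_0$ is \emph{not} a small perturbation of $\partial(\phi/C)$: using $\sum\partial f_i=0$ one finds $\partial f_x(p_0)-C^{-1}\partial\phi(p_0)=C^{-1}\sum_i(\phi(z_i)-\phi(p_0))\partial f_i(p_0)$, a sum of $O(p)$ terms each of size $\sim r\cdot r^{-1}\cdot C^{-1}$, i.e.\ of the \emph{same order} as the main term $\sim r^{-1}$. Consequently
\[
\{f_x,f_y\}(p_0)\;=\;C^{-2}\sum_{i<j}\bigl(u_iv_j-u_jv_i\bigr)\{f_i,f_j\}(p_0),\qquad u_i=\phi(z_i),\ v_i=\psi(z_i),
\]
which has no visible reason to be positive, let alone bounded below by $c_0 r^{-2}$: whether it is depends on the local arrangement of the net and the shape of the $f_i$ near $p_0$, and adversarial configurations are not ruled out. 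Your proposed fix — a "compactness argument in the space of scale-one partitions from $1/2$-nets" — only reduces the problem to showing positivity for \emph{every} such limiting partition, which is precisely the statement that needs proving and not a consequence of compactness. In fact the original proof in \cite{P-CMP} obtains the bound on $\mu$ by quite different, symplectic-topological means (a Poisson-bracket invariant), rather than by this kind of local sampling estimate. As it stands, the key geometric input is neither proved by you nor attributed; replacing your construction by the citation closes the gap and recovers the paper's proof.
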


Few remarks are in order.
Choose $R>0$, $\epsilon \in [0,1/2]$ and apply Theorem \ref{thm-noise} to $r = R \hbar ^{1/2 -\epsilon}$.
If $\epsilon = 1/2$ and $R \in (0,r_0)$ is independent of $\hbar$, then the noise
is strictly positive and of order $\sim \hbar$ as $\hbar \to 0$. This result, which was proved in \cite{P-CMP}, does not require the fine remainder estimates. The latter enter the play when $\epsilon <1/2$. Let us emphasize also that {\it for $\epsilon =0$ and a fixed $R\geq \kappa$ , i.e.,  on the quantum length scale, the noise is strictly positive and of order $\sim 1$ as $\hbar \to 0$.}

Let us mention also that the registration procedure above satisfies
{\it noise-localization uncertainty relation}:
\begin{equation}\label{eq-nl}
\text{Noise}\; \times\; \max_i \text{Size}(U_i) \geq c\hbar\;,
\end{equation}
where $\text{Size}$ is a properly defined symplectic invariant of $U_i$, and $c>0$ is independent of $\hbar$. Indeed,  $\text{Noise} \sim \hbar r^{-2}$ and since $U_i$ are Riemannian
balls of a sufficiently small radius $r$ , the size of $U_i$ is $\sim r^2$.
Relation \eqref{eq-nl} has been established in \cite{P-CMP} for the case $\epsilon = 1/2$
(i.e., for $r \sim 1$) for any partition of unity subordinated to the cover $\{U_i\}$. Here
we work on smaller scales up to the quantum length scale. As a price for that, we have to assume that the
derivatives of the functions forming the partition of unity are controlled by \eqref{eq-deriv-part}.

\medskip
\noindent
{\it Proof of Theorem \ref{thm-noise}.} Throughout the proof we denote by $c_1,c_2,...$ positive constants which
are independent on $r$ and $\hbar$. We assume that $r \leq r_0$.

Observe that by \eqref{eq-deriv-part}
\begin{equation}\label{eq-deriv-fx}
|f_x|_k \leq pC r^{-k}, \; k=1,2,3\;.
\end{equation}
Thus by (P3)
$$\cN_+ \leq c_1\hbar r^{-2} \;.$$
It has been shown in \cite[Example 4.5 and formula (28)]{P-CMP}  that
$$\mu:= \max_{x,y \in Q}||\{f_x,f_y\}|| \geq c_2 r^{-2} \;.$$
By (P1), (P2) and \eqref{eq-deriv-fx} we have that
\begin{xalignat*}{3}
\cN_- \geq  & \frac{1}{2}\cdot \mu\hbar - c_3\hbar^2 \max_{x,y \in Q} (|\{f_x,f_y\}|_2 + |f_x,f_y|_{1,3}) \\
 \geq & c_4 \hbar r^{-2}  - c_5 \hbar^{2} r^{-4}  \geq  \hbar r^{-2} ( c_4 - \frac{c_5} { \kappa^2} ) \;
\end{xalignat*}
provided $r \geq \kappa \hbar ^{1/2}$. If $\kappa$ is sufficiently large, $c_4 - c_5/ \kappa^2 = c_6 >0$.  Combining the upper bound on $\cN_+$ with the lower bound
on $\cN_-$, we get the desired result.
\qed

\medskip
\noindent
\begin{rem}
{\rm A reader with a semiclassical background has certainly recognized some symbol in exotic classes in Theorems \ref{thm-joint-noise} and \ref{thm-noise}. Recall that these symbols are smooth functions depending on $\hbar$ and satisfying an estimate of the type $|\partial^\al f | \leqslant C_\al \hbar ^{ -\delta |\al|}$ for some fixed $\delta \in [0, 1/2]$. The theory of pseudo-differential operators can be extended to these symbols, providing  an important tool in semiclassical analysis.  Here these symbols appear in the functions $f_s$, $g_t$ of Theorem \ref{thm-joint-noise} and in the functions $f_x$ of Theorem \ref{thm-noise}. Observe that (P1)-(P4) are perfectly suited to handle these exotic symbols.}
\end{rem}


\section{Constraints on the remainders}\label{sec-constraints}
In order to illustrate properties (P1)-(P4), we start with proving Theorem \ref{thm-main-1}.
Our strategy is to apply this properties to specially chosen symbols pushed to the limits of
pseudo-differential calculus, that is symbols supported in a ball of radius $\sim \sqrt{\hbar}$.
Items (i),(ii),(iii) of the theorem are proved in Sections
\ref{sec-alpha},\ref{sec-beta} and \ref{sec-gamma}, respectively.

\subsection{Test balls and scaling relations} Certain  constructions below are local, i.e., the action takes place in a neighbourhood of a point in $M$. To facilitate the discussion, we shall fix a {\it test ball} $B(r) \subset M$, that is an open ball whose closure lies in a Darboux chart equipped with coordinates $(x_1,...,x_{2n})$.
The ball $B$ is given by $\{\sum x_i^2 < r^2\}$, where $r \leq 1$.
In the chart the symplectic form $\omega$ is given by $dx_1 \wedge dx_2 +...+dx_{2n-1} \wedge dx_{2n}$.
It would be convenient to assume, without loss of generality, that the metric $\rho$ in the chart is Euclidean. This assumption will change various bounds on the norms of derivatives $|f|_N$ as well as the bounds on the quantities $\alpha,\beta,\gamma$ entering (P1)-(P3) by multiplicative constants whose precise values are irrelevant for our discussion.

In what follows every compactly supported smooth function $f \in \Cl^\infty_c(B)$
is considered as a smooth function on $M$: we extend it by $0$.

For a function $f \in \Cl^\infty_c(B(1))$
and a number $s \in (0,1]$ define a rescaled function $f_s \in \Cl^\infty(B)$ as follows:
$f_s(x)= f(x/s)$ for $x \in B(s)$ and $f_s(x)=0$ otherwise. The following obvious {\it scaling relations} turn out to be very useful below:
\begin{gather}
\label{eq-scaling_relations}
\begin{split}
|f_s|_k = s^{-k}|f|_k,\;\; |f_s,g_s|_k = s^{-k}|f,g|_k,\;\; |f_s,g_s|_{1,3} = s^{-4}|f,g|_{1,3}, \\ | \{f_s,g_s\}|_k = s^{-(k+2)} |\{f,g\}_s|_k\;.
\end{split}
\end{gather}

\subsection{$\alpha$-remainder}\label{sec-alpha}
We shall show that for a test ball $B= B^{2n}(1)$,
\begin{equation}\label{eq-alpha-bound}
\alpha \geq c \cdot \sup_f \frac{||f||^{1+1/n}}{||f||_{L_1}^{1/n} \cdot |f|_2}\;,
\end{equation}
where the supremum is taken over all smooth non-constant non-negative
compactly supported functions $f$ on $B$, and $c>0 $ is a numerical constant.
Incidentally, the finiteness of the supremum in the right hand side of \eqref{eq-alpha-bound}
follows from a generalized interpolation inequality in \cite{CZ}. Additionally, our proof shows
that the constant $c$ is independent of the dimension $2n$.

Indeed, fix any function $f$ as above.
Put $f_s(x)=f(x/s)$ with $s=\sqrt{t\hbar}$. Combining scaling relations \eqref{eq-scaling_relations}
with (P1) and (P4) we get that
$$||T_\hbar(f_s)||_{op} \geq \| f \| - \alpha t^{-1} |f|_2$$
and
$$\text{trace}(T_\hbar(f_s)) \leq (t/(2\pi))^n \cdot ||f||_{L_1}(1+\delta \hbar)\;.$$
Noticing that $||T_\hbar(f_s)||_{op} \leq \text{trace}(T_\hbar(f_s))$
since $f_s \geq 0$, we get that
$$\| f \| - \alpha t^{-1} |f|_2 \leq (t/(2\pi))^n \cdot ||f||_{L_1}(1+\delta \hbar)\;.$$
Here $t$ is fixed, and this inequality holds for all $\hbar$. Sending $\hbar \to 0$,
we conclude that
$$\alpha \geq u(t):= \frac{||f|| \cdot t - \frac{|f|_{L_1}}{(2\pi)^n} \cdot t^{n+1}}{|f|_2}\;.$$
One readily calculates that the maximal value of $u$ equals
$$c(n) \cdot  \frac{||f||^{1+1/n}}{||f||_{L_1}^{1/n} \cdot |f|_2}\;,$$
where $c(n)= 2\pi n/(n+1)^{1+1/n} \to 2\pi$ as $n \to \infty$. This proves \eqref{eq-alpha-bound}
with $c >0$ independent on $n$.
\qed

\medskip
\noindent
\begin{rem}\label{rem-alpha-dim}
{\rm By (P4), $d_\hbar:= \dim \Hilb_\hbar = (2\pi\hbar)^{-n} \cdot \text{Vol}(M) + \bigo(\hbar^{-(n-1)})$.
It turns out that a weaker dimension bound, still capturing the correct order of  $d_\hbar$
in $\hbar$, follows from the norm correspondence (P1):
\begin{equation}
\label{eq-dimension-2}
d_\hbar \geq c\alpha^{-n}\hbar^{-n}, \;\; c>0\;.
\end{equation}
The standard quantum mechanical intuition behind this formula is as follows:
consider the partition of $M$ into $\sim \text{Vol}(M)\hbar^{-n}$ ``quantum boxes", i.e., cubes of side $\sim \sqrt{\hbar}$. Since each box carries $\sim 1$ quantum state, and the states corresponding to different boxes are approximately orthogonal, $\Hilb$ contains a subspace of dimension
$\sim \hbar^{-n}$, which yields \eqref{eq-dimension-2}.  The actual proof follows this idea, with the following amendment. Instead of partitioning $M$ into quantum boxes, we cover $M$ by $\sim \hbar^{-n}$ balls of radii $\sim \sqrt{\hbar}$ as in $(\clubsuit)$ of Section \ref{subsec-phasespace} above, and apply (P1) to a specially chosen subordinated partition of unity. Let us present the formal argument.
Denote  by $c_0,c_1,...$ positive constants depending on the manifold $M$ and the metric $\rho$.
It readily follows from $(\clubsuit)$ that for every sufficiently small $r>0$, the manifold $M$
admits a partition of unity $f_1,...,f_N$ with $N \geq c_0 r^{-2n}$,
$||f_i||\geq c_1$ and $|f_i|_2 \leq c_2r^{-2}$ for all $i$.  By (P1), for all $i$
\begin{equation}\label{eq-dim-vsp}
||A_i||_{op} \geq c_1-\alpha \cdot c_2 r^{-2}\hbar\;.
\end{equation}
Since $A_i \geq 0$, we have that $\text{trace}(A_i) \geq ||A_i||_{op}$. Therefore,
$$d_\hbar= \text{trace} (\id) = \sum_{i=1}^N \text{trace} (A_i) \geq \sum_{i=1}^N ||A_i||_{op}\;.$$
Combining \eqref{eq-dim-vsp} with $N \geq c_0 r^{-2n}$ we get that
$$
d_\hbar \geq c_0 r^{-2n}(c_1-\alpha \cdot c_2 r^{-2}\hbar)\;.
$$
Choosing $r= c_3\alpha^{1/2}\hbar^{1/2}$  with $c_3 >0$ sufficiently large,
we get \eqref{eq-dimension-2}. \qed
}
\end{rem}

\subsection{$\beta$-remainder} \label{sec-beta}
Again, we work in a test ball $B= B^{2n}(1)$

{\sc Step 1:} Write $B = B^{2n}(1)$. Fix a pair of non-commuting functions $f,g \in \Cl^\infty_c(B)$. Observe that
$$||[T_\hbar(f),T_\hbar(g)]||_{op} \leq 2||T_\hbar(f)||_{op}\cdot ||T_\hbar(g)||_{op} \leq 2||f||\cdot||g||\;.$$ Let us emphasize that the inequality on the right uses positivity of
the Berezin quantization and in general fails for Weyl-like quantizations (cf. discussion after Theorem \ref{thm-main-1} above). Furthermore,
$$||T_\hbar(\{f,g\})||_{op} \geq ||\{f,g\}|| - \alpha\hbar|\{f,g\}|_2\;.$$
Combining these inequalities with the correspondence principle (P2), we get
that
\begin{equation}\label{eq-alpha-vsp-new-1}
\hbar^2(\alpha |\{f,g\}|_2 + \beta |f,g|_{1,3}) \geq \hbar||\{f,g\}||-2||f||\cdot||g||\;.
\end{equation}
The rest of the proof proceeds by two successive optimizations: first, on the ``size" of
$f,g$ through a rescaling to the quantum scale $\sim \sqrt{\hbar}$, and second, on the ``shapes"
of $f$ and $g$.

\medskip
\noindent{\sc Step 2:}
Applying the scaling relations to equation \eqref{eq-alpha-vsp-new-1} above we get that
$$\hbar^2(\alpha s^{-4} |\{f,g\}|_2 + \beta s^{-4} |f,g|_{1,3}) \geq \hbar s^{-2}||\{f,g\}||-2||f||\cdot||g||\;.$$
Put $t= s^2 \hbar^{-1}$, $a = ||\{f,g\}||$, $b= 2||f||\cdot||g||$ and rewrite this inequality
as
$$\alpha |\{f,g\}|_2 + \beta |f,g|_{1,3} \geq ta-t^2b\;.$$
The right hand side attains the maximum $a^2/(4b)$ for $t=a/(2b)$.
Note that $t=a/(2b)$ means that $s = \sqrt{a/(2b)} \cdot \hbar$, and so $s \in (0,1)$ for
$\hbar$ sufficiently small. Therefore, for all $f,g \in \Cl^\infty_c(B)$ with $\{f,g\} \neq 0$
\begin{equation}\label{eq-alpha-vsp-new-2}
\alpha |\{f,g\}|_2 + \beta |f,g|_{1,3}  \geq \frac{||\{f,g\}||^2}{8||f||\cdot||g||}\;.
\end{equation}

\medskip
\noindent{\sc Step 3:} Next, we shall take $f,g$ in the following special form.
Choose non-commuting functions $F,G \in \Cl^\infty_c(B)$, and put
$$f= z^{1/2}F\sin(z^{-1}G),\;\;g=z^{1/2}F\cos(z^{-1}G)\;,$$
where $z>0$ plays the role of a parameter. A direct calculation shows
that the Poisson bracket
$$u:= \{f,g\} = \{-F^2/2,G\}$$
is independent of $z$. At the same time, if $||F|| \leq 1$, we have that
$||f|| \cdot ||g|| \leq z$.

Recall that by Theorem \ref{thm-main-1} we have a bound $\alpha \geq C_1>0$ with $C_1$ depending
only on $(M,\omega,\rho)$. Put $Z= ||u||^2/(12C_1|u|_2)$ and observe that for all $z \in (0,Z]$
we have $|f,g|_{1,3} \leq K\cdot z^{-3}$ with some
$K >0$. Combining this with  \eqref{eq-alpha-vsp-new-2} we get that
$$\alpha |u|_2 + K\beta z^{-3} \geq z^{-1}||u||^2/8   \;\;\;\forall z \in (0,Z]\;,$$
which yields
\begin{equation}\label{eq-zo-vsp}
K\beta \geq v(z):=  z^2||u||^2/8 - z^3\alpha |u|_2 \;\;\;\forall z \in (0,Z]\;.
\end{equation}
Observe now that the function $v(z)$ attains its maximal value
$c\alpha^{-2}$ with $c= ||u||^6/(2 \cdot 12^3|u|_2^2)$
at
$$z_0= ||u||^2/(12\alpha|u|_2) \in (0,Z]\;.$$
Substituting $z_0$ into \eqref{eq-zo-vsp} we get that $\beta \geq K^{-1}c\alpha^{-2}$,
as required.
\qed

\subsection{$\gamma$-remainder} \label{sec-gamma}
Applying (P3) to $T_\hbar(fg)$ and $T_\hbar(gf)$ and subtracting we get that
$$||[T_\hbar(f),T_\hbar(g)] ||_{op} \leq 2\gamma |f,g|_2 \hbar \;.$$
On the other hand by (P1) and (P2),
$$||[T_\hbar(f),T_\hbar(g)] ||_{op} \geq \hbar ||\{f,g\}|| + O(\hbar^2)\;.$$
Combining these inequalities and letting $\hbar \to 0$, we get that
$$\gamma \geq \sup_{f,g} \frac{||\{f,g\}||}{2|f,g|_2} >0\;,$$
where the supremum is taken over all pairs of smooth non-commuting functions $f$ and $g$ on $M$.
\qed


\section{Quantization of symplectic manifolds}

\subsection{Preliminaries} \label{sec:schwartz-kernels}

Consider a compact manifold $M$ endowed with a volume
form $\mu$ and a Hermitian line bundle $A \rightarrow M$.
The space $\Cl^0 (M, A)$ of continuous sections of $A$ has a natural scalar
product $\langle \cdot, \cdot \rangle$ given by integrating the pointwise scalar product against $\mu$. We
denote by $\| \psi \| = \langle \psi, \psi \rangle^{\frac{1}{2}} $ the
corresponding norm. A bounded operator $P$ of $\Cl ^0 (M, A)$ is by definition
a continuous endomorphism of the normed vector space $(\Cl^0 (M,
A), \| \cdot \| )$. Its norm is defined by
$$ \| P \|_{op}  = \sup \frac{ \| P s \| } { \| s \| }\;, $$
where $s$ runs over the non-vanishing continuous section of $A$.
Equivalently, we could introduce the completion $L^2 ( M,A)$ of
the pre-Hilbert space $\Cl^0 (  M , A)$, $\langle \cdot, \cdot  \rangle$,
extend $P$ to a bounded operator of $L^2 ( M,A)$ and define $\|
P\|_{op}$ as the norm of the extension. Actually the space $\Cl^0 (
M , A)$ is sufficient for our needs, and we won't use its completion in the
sequel.

To any continuous section $K$ of $ A \boxtimes \con{A} \rightarrow M^2$,
\footnote{If $E \rightarrow M$ and $F\rightarrow N$ are two vector bundles, $E \boxtimes F \rightarrow M \times N$ is the vector bundle $(\pi_M ^*E) \otimes (\pi_N^* F)$, where $\pi_M$, $\pi_N$ are the projections from $M \times N$ onto $M$ and $N$ respectively.}
 we associate an endomorphism $P $ of $\Cl^0 ( M , A) $ given by
$$ (P \Psi)( x)  = \int_M K ( x, y ) \cdot \Psi ( y) \; \mu ( y)\;.$$
Here the dot stands for the contraction $A_y \times \con{A}_y \rightarrow
\C$ induced by the metric of $A$. $K$ is uniquely determined by $P$ and is
called the Schwartz kernel of $P$. $M$ being compact, $P$ is
bounded. The basic estimate we need is the Schur test, cf. for instance \cite{HaSu}, Theorem 5.2.

\begin{prop} \label{prop:norme}
Let $P$ be the endomorphism of $\Cl^0( M , A)$ with Schwartz kernel $K \in \Cl ^0 ( M^2 , A \boxtimes \con{A})$. Then $\| P \|_{op} ^2 \leqslant C_1 C_2 $ where $C_1$, $C_2$ are the non negative real numbers given by
$$ C_1 = \sup_ {x \in M} \int_M |K ( x, \cdot )| \mu\;, \quad C_2 = \sup_ {y \in M} \int_M |K (  \cdot, y  )| \mu\;.$$
\end{prop}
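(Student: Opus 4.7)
\medskip
\noindent
\textbf{Proof proposal.} The plan is to carry out the classical Schur argument, which proceeds by a single application of the Cauchy--Schwarz inequality followed by Fubini. The only mildly delicate point is making sure the pointwise pairing between $A_y$ and $\con{A}_y$ is handled correctly, but since $A$ is Hermitian and we only need the modulus, this reduces to scalar estimates.

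First I would fix a continuous section $\Psi \in \Cl^0(M,A)$ and write
\begin{equation*}
 |(P\Psi)(x)| \;\leq\; \int_M |K(x,y)| \cdot |\Psi(y)| \, \mu(y),
\end{equation*}
which is immediate from the definition of $P$ and the fact that the fibrewise contraction $A_y \times \con{A}_y \to \C$ is bounded in modulus by the product of the norms. Next, I would split the integrand as $|K(x,y)|^{1/2} \cdot \bigl(|K(x,y)|^{1/2} |\Psi(y)|\bigr)$ and apply Cauchy--Schwarz in the $y$-variable to obtain
\begin{equation*}
 |(P\Psi)(x)|^2 \;\leq\; \Bigl(\int_M |K(x,y)|\,\mu(y)\Bigr) \cdot \Bigl(\int_M |K(x,y)|\,|\Psi(y)|^2\,\mu(y)\Bigr) \;\leq\; C_1 \int_M |K(x,y)|\,|\Psi(y)|^2\,\mu(y).
\end{equation*}

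Then I would integrate this inequality with respect to $x$ and swap the order of integration (which is legitimate since the integrand is non-negative and continuous on the compact product $M^2$), giving
\begin{equation*}
 \|P\Psi\|^2 \;\leq\; C_1 \int_M |\Psi(y)|^2 \Bigl(\int_M |K(x,y)|\,\mu(x)\Bigr)\mu(y) \;\leq\; C_1 C_2 \|\Psi\|^2.
\end{equation*}
Taking square roots and passing to the supremum over non-vanishing $\Psi$ yields $\|P\|_{op}^2 \leq C_1 C_2$.

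I do not foresee a genuine obstacle here: the compactness of $M$ together with continuity of $K$ ensures that $C_1$ and $C_2$ are finite and that all integrals involved are well defined, and the degenerate case $C_1 = 0$ or $C_2 = 0$ forces $K \equiv 0$ so $P = 0$. The only point worth being careful about is that the scalar product used to define $\|\Psi\|$ is the $L^2$ one built from $\mu$, so the two integrations in the Fubini step genuinely reconstruct $\|\Psi\|^2$ in the final line.
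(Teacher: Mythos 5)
Your proof is correct, and it is precisely the classical Schur test argument: split the integrand as $|K|^{1/2}\cdot(|K|^{1/2}|\Psi|)$, apply Cauchy--Schwarz in $y$, then integrate in $x$ and use Fubini. The paper does not spell out a proof of this proposition at all — it simply cites Halmos and Sunder, Theorem 5.2 — so your argument supplies exactly the standard proof the reference contains; there is nothing to compare beyond noting that you correctly handle the only manifold-specific point, namely that the fibrewise contraction $A_y\times\con{A}_y\to\C$ is bounded in modulus by the product of the pointwise norms, which reduces everything to the scalar case.
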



We will also need the following easy properties. Let $K \in  \Cl^0 ( M^2 , A
\boxtimes \con{A})$ be the Schwartz kernel of $P$.
\begin{itemize}
\item For any $  f \in \Cl ^0 (M,
\C)$, $(1 \boxtimes f) K$ and $(f \boxtimes 1) K$ are respectively the
Schwartz kernels of $P f$   and $f P$ respectively.
\item
Let $\nabla$ be a Hermitian connection of $A$ and assume that $K$ is of
class $\Cl ^1$. Then for any continuous vector field $X$ of $M$, $( \nabla_X
\boxtimes \op{id}) K$ is the Schwartz kernel of $\nabla_X \circ P$. Furthermore, if $X$ is of class
$\Cl^1$, the operator $ P \circ \nabla_X : \Cl ^1 (M, A) \rightarrow \Cl^0 (M, A)$ extends to the bounded operator of $\Cl^0 (M, A)$ with kernel $- ( \op{id} \boxtimes ( \nabla_X + \op{div} X)) K$. Here the divergence is defined by the equality: $\mathcal{L}_X \mu = \op{div} (X) \mu$.
\end{itemize}

{\em In the sequel, we often denote an operator and its Schwartz kernel by
  the same letter.}

\subsection{Toeplitz operators} \label{sec:toeplitz-operators}

Consider as in Section \ref{sec:schwartz-kernels} a compact manifold $M$
endowed with a volume form and a Hermitian line bundle $A \rightarrow M$.
Let  $\Hilb$ be a finite dimensional subspace of $\Ci ( M , A)$. Let $B$
be the section of $A \boxtimes \con{A}$ defined by
\begin{gather} \label{eq:defB}
B ( x, y) = \sum_{i=1}^{N} e_{i} ( x) \otimes \overline{e}_i (y), \qquad x,y \in M
\end{gather}
where $( e_{i}, \; i=1, \ldots , N)$ is any orthonormal basis of
$\Hilb$. The operator $\Pi$ with Schwartz kernel $B$ is the projector from $\Cl ^0 (
M, A)$ onto $\Hilb$ with kernel the orthogonal complement of $\Hilb$ in
$\Cl ^0 ( M , A)$. Even if we are not in a genuine Hilbert space, we call
$\Pi$ an orthogonal projector. For any $f \in \Cl^0 (M)$, define the Toeplitz operator
$$ T( f) := \Pi f : \Hilb \rightarrow \Hilb\;. $$
Here $f$ stands for the multiplication operator by $f$. The map sending $f$
to $T(f)$ is clearly linear and positive. Furthermore $T(1) = \op{id}$.

\subsection{Bergman kernels and generalisations}

Consider a quantizable symplectic compact manifold $(M, \om)$. Our aim is to produce a Berezin quantization $ (T_{\hbar} : \Cl^0 (M) \rightarrow \cL ( \Hilb_{\hbar}) , \hbar \in \Lambda)$. We will use the integer parameter $k $ instead of $\hbar \in \N$ \footnote{$\N$ is the set $\Z_{\geq 0}$ of non negative integers.}
having in mind that $\hbar = 1/k$. The Hilbert space $\Hilb_k$ will be defined as a finite dimensional subspace of $\Ci (M, A_k)$ with $A_k$ a conveniently defined Hermitian line bundle. The linear map $T_k :  \Cl^0 (M) \rightarrow \cL ( \Hilb_{k}) $ will be the corresponding Toeplitz quantization as in Section \ref{sec:toeplitz-operators}.

$M$ being quantizable, it admits a  prequantum
bundle $L$, that is a Hermitian line bundle
endowed with a connection $\nabla$ of curvature $\frac{1}{i} \om$. Consider a complex structure $j$, which is not necessarily integrable, but compatible with $\om$ meaning that for any tangent vectors $X,Y \in T_{p}
M$, $ \om ( jX, jY) = \om (X, Y)$ and if $X$ does not vanish, $\om ( X, jX)
>0$. We denote by $T^{1,0}M$ the subbundle $\ker ( j - i ) $ of $TM \otimes \C$.
Consider also an auxiliary Hermitian line bundle $A$.

For any $k \in \N$, let $A_ k = L^k \otimes A$ and endow  the space
$\Cl^0 (M, A_k)$  with the scalar product defined by integrating the
pointwise scalar product against the Liouville volume $\mu = \om^n / n
!$. To any finite dimensional subspace $\Hilb_k$ of $\Ci (M, A_k)$ is associated a
smooth kernel $B_k \in \Ci ( M^2, A_k \boxtimes \con{A}_k)$ defined as in
(\ref{eq:defB}). The metrics of $L$ and $A$ induce identifications $L_x \otimes \con{L}_x \simeq \C$ and $A_x \otimes \con{A}_x \simeq \C$. In the sequel, we will often view $B_k (x,x)$ as a complex number through these identifications.

\begin{thm} \label{theo:B_kernel}
There exists a family $( \Hilb_k \subset \Ci ( M , L^k \otimes A) , \; k \in \N)$ of finite dimensional subspaces such that the corresponding family $(B_k)$ satisfies for any $m \in \N$,
\begin{gather} \label{eq:kerprojdevas}
 B_k ( x,  y ) = \Bigl( \frac{k}{2 \pi} \Bigr) ^n E^k ( x, y) \sum_{\ell \in \Z \cap [ -m , m/2]} k^{-\ell} \si_{\ell} ( x, y) + \bigo_{\infty} ( k^{n-(m+1)/2})\;,
\end{gather}
where $2n$ is the dimension of $M$ and
\begin{itemize}
\item $E$ is a section of $L \boxtimes \con{L}$ satisfying $E(x,x) = 1$, $|E (x, y )| < 1$ if $ x \neq y$ and for any vector field $Z \in \Ci (M ,T^{1,0} M)$, $(\nabla_{\con{Z}} \boxtimes \op{id} )E$ and $( \op{id} \boxtimes \nabla_Z ) E$ vanish to second order along the diagonal of $M^2 =M \times M$.
\item For any $\ell \in \Z$, $\si_\ell$ is a section of $A \boxtimes \con{A}$. If $\ell$ is negative, $\si_\ell$ vanishes to order $-3\ell$ along the diagonal.
\end{itemize}
Furthermore $\si_{0} (x, x)  = 1$ for any $x \in M$.
\end{thm}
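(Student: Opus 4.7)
The plan is to realize $\Hilb_k$ as a ``generalised space of holomorphic sections'' via spectral theory, then to identify its reproducing kernel as a WKB-type oscillatory section whose phase is $E^k$, exactly as in the Kähler/Ma--Marinescu theory, but adapted to the possibly non-integrable almost-complex structure $j$ compatible with $\om$. Concretely, I would define the renormalised Bochner--Kodaira Laplacian $\Lap_k$ on $\Ci(M,L^k\otimes A)$ (or equivalently the square of a $\mathrm{spin}^c$ Dirac operator) and invoke the standard spectral gap theorem: its spectrum splits into a low cluster contained in $[0,Ck^{1/2}]$ and a high part in $[ck,\infty)$ with $c>0$. Take $\Hilb_k$ to be the sum of eigenspaces in the low cluster; in the integrable case this recovers $H^0(M,L^k\otimes A)$ for $k$ large, and in general it has dimension $\sim k^n$ by Riemann--Roch.

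Next I would build the phase section $E$ from local Bochner--Kähler coordinates. Around a point $p\in M$, pick a Darboux chart in which $j|_p$ is the standard complex structure and the connection of $L$ has the Bochner gauge $\tfrac14(\bar z\cdot dz-z\cdot d\bar z)+\bigo(|z|^2)$. In this trivialisation the natural candidate is the Fock--Bargmann phase
\[
E(z,w)=\exp\Bigl(-\tfrac14|z-w|^2+\tfrac12(z\cdot\bar w-\bar z\cdot w)\Bigr),
\]
so that $E(x,x)=1$, $|E(x,y)|^2=e^{-|x-y|^2/2}<1$ off the diagonal, and $\nabla_{\con Z}\boxtimes\mathrm{id}\,E$, $\mathrm{id}\boxtimes\nabla_Z\,E$ vanish to second order on the diagonal (the non-integrability of $j$ contributes only at higher order). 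Gluing these local expressions into a global smooth section of $L\boxtimes\con L$ is an exercise in patching: the phases agree to infinite order on the diagonal, so any smooth extension with $|E|<1$ off-diagonal does the job.

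The asymptotic expansion \eqref{eq:kerprojdevas} is then obtained by constructing a parametrix $\Pi_k^{\mathrm{app}}$ whose Schwartz kernel is of the form $(k/2\pi)^n E^k\sum_\ell k^{-\ell}\si_\ell$, with the $\si_\ell$ determined recursively by imposing (i) that $\Pi_k^{\mathrm{app}}$ almost kills $\con\partial$-type operators of $\Lap_k$ and (ii) that $\Pi_k^{\mathrm{app}}\circ\Pi_k^{\mathrm{app}}=\Pi_k^{\mathrm{app}}+\bigo(k^{-\infty})$. At the leading order, rescaling by $z\mapsto z/\sqrt k$ matches the problem with the model Bargmann projector on $\C^n$, forcing $\si_0(x,x)=1$. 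The transport equations for $\si_{\ell}$ with $\ell\geqslant 1$ are standard. The terms with $\ell<0$ arise only to absorb off-diagonal corrections coming from the non-integrability of $j$ and the curvature of $A$; a homogeneity count along the Bargmann rescaling forces them to vanish to order $-3\ell$ on the diagonal, so on the diagonal only $\ell\geqslant 0$ contributes.

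The main obstacle is to show that the approximate projector $\Pi_k^{\mathrm{app}}$ coincides, modulo $\bigo(k^{-\infty})$, with the genuine orthogonal projector onto $\Hilb_k$. This is where the spectral gap enters decisively: one writes the true projector as $\mathbf 1_{[0,Ck^{1/2}]}(\Lap_k)$, uses the Helffer--Sjöstrand (or a contour integral) functional calculus, and exploits the fact that $\Lap_k\Pi_k^{\mathrm{app}}=\bigo(k^{-\infty})$ together with ellipticity of $\Lap_k$ on the high-energy part to control everything in every Sobolev norm. Once the parametrix is identified with the true projector, defining $T_k(f):=\Pi_k f\,\Pi_k$ on $\Hilb_k$ gives the desired Berezin quantisation, and the estimates (P1)--(P4) then follow from the expansion \eqref{eq:kerprojdevas} by Schur-test arguments of the kind encapsulated in Proposition \ref{prop:norme}.
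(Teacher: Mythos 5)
Your sketch takes a genuinely different route from the paper's. The paper, following \cite{oim_symp}, defines $\Hilb_k$ directly from the phase data: after fixing $E$ and $\si_0$ with the symmetries $\con{E}(x,y)=E(y,x)$ and $\con\si_0(x,y)=\si_0(y,x)$, one forms the self-adjoint operator $P_k$ with Schwartz kernel $(k/2\pi)^n E^k\si_0$, proves $\op{spec}(P_k)\subset[-Ck^{-1/2},Ck^{-1/2}]\cup[1-Ck^{-1/2},1+Ck^{-1/2}]$, and takes $\Hilb_k$ to be the sum of eigenspaces near $1$; the expansion of $B_k$ then holds essentially because the approximate projector was built to have it. You instead anchor $\Hilb_k$ to the spectral gap of a renormalised Bochner--Kodaira (or $\op{spin}^c$ Dirac) Laplacian---the Guillemin--Uribe / Borthwick--Uribe / Ma--Marinescu line, which the paper explicitly cites as giving ``similar results''---then construct a WKB parametrix and match it with the spectral projector by functional calculus; this is more intrinsic (the Laplacian is a natural geometric operator) but requires the extra identification step. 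Two points to tighten: the low cluster of the renormalised Bochner Laplacian is $\bigo(1)$-bounded (and for $D_k^2$ it is exactly $\{0\}$), not $\bigo(\sqrt k)$; and because the low cluster of $\Lap_k$ is generally nonzero in the non-K\"ahler case, your condition $\Lap_k\Pi_k^{\op{app}}=\bigo(k^{-\infty})$ is too strong---what one can arrange is approximate idempotency $\Pi_k^{\op{app}}\circ\Pi_k^{\op{app}}=\Pi_k^{\op{app}}+\bigo(k^{-\infty})$ together with $[\Lap_k,\Pi_k^{\op{app}}]=\bigo(k^{-\infty})$ and localisation of the range of $\Pi_k^{\op{app}}$ below the gap, after which the Helffer--Sj\"ostrand argument closes as you say.
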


The notation $\bigo_{\infty} ( k^{N} )$ has been introduced by the first author in previous papers and refers to a uniform control of the section and its successive derivatives. The precise meaning is as follows. A family $(\Psi( \cdot, k) \in \Ci (M^2, A_k \boxtimes \con{A}_k), \; k \in \N)$ is in $\bigo_{\infty} ( k^{N} )$ if for any open set $U$ of $M^2$, for any compact subset $K$ of $U$, for any unitary frames $\tau_A: U \rightarrow A \boxtimes \con{A}$ and  $\tau_{L} : U \rightarrow L \boxtimes \con{L}$, for any $m\in \N$, for any vector fields $X_1$, \ldots, $X_m$ of $M^2$, there exists $C>0$ such that for any $k$,
\begin{gather} \label{eq:defbigoinf}
 \Psi( \cdot, k) = f_k \tau_{L}^k \otimes \tau_A \text{ on $U$ } \Rightarrow \op{Sup}_K | X_1 \ldots X_m f_k | \leqslant C k^{ N + m} .
\end{gather}
Observe that one loses a factor $k$ at each derivative, so that condition (\ref{eq:defbigoinf}) does not depend on the choice of the frame $\tau_{L}$.

It is not difficult to see that for any $\si \in \Ci ( M^2, A \boxtimes \con{A})$ vanishing to order $p$ along the diagonal, the familly $( E^k \otimes \si, k \in \N)$ is in $\bigoinf ( k^{ -p/2})$. So in Theorem \ref{theo:B_kernel}, the family $(k^{-\ell} E^k \otimes \si_{\ell})$ is in $\bigoinf ( k^{-\ell})$ if $\ell$ is non negative and in $\bigoinf ( k^{\ell/2})$ if $\ell$ is negative. We refer the reader to Sections 2.2 and 2.3 of \cite{oim_symp} for more details and other basic properties of the $\bigo_{\infty} ( k^{N} )$.

In the K\"ahler case, that is when $j$ is integrable and $L$, $A$ are holomorphic line bundles, we can define $\Hilb_k$ as the space of holomorphic sections of $A_k$. The corresponding kernel $B_k$ is called the Bergman kernel. The asymptotic of $B_k$ given in Theorem \ref{theo:B_kernel} has been deduced in \cite{oim_op} (Corollary 1) from the seminal paper \cite{BoSj}. A direct proof has been given in \cite{BeBeSj}, cf. also \cite{MaMa} and \cite{ShZe} for similar results. In this case, we can even choose $E$  in such a way that the $\si_{\ell}$'s with negative $\ell$ are identically null.

In the general symplectic case, the spaces $\Hilb_k$ are defined in
such a way that $B_k$ admits an asymptotic expansion of this form. The
existence of such a quantization has been proved in \cite{oim_symp} using
the ideas of \cite{BoGu}, cf. also \cite{MaMa} and \cite{ShZe} for similar
results. In the construction proposed in \cite{oim_symp}, we start with any
sections $E$ and $\si_0$ satisfying the asumptions of Theorem
\ref{theo:B_kernel}. We assume also that $\con{E} (x,y) = E(y,x)$ and $\con{\si}_0 (x,y) = \si_0 ( y,x)$, so that the operator $P_k$ with Schwartz kernel $\bigl( \frac{k}{2 \pi} \bigr)^n E^k \si_0 $ is self-adjoint. One proves that the spectrum $P_k$ concentrates onto 0 and 1, in the sense that $$ \op{spec} (P_k) \subset [-Ck^{-1/2} , C k^{-1/2}] \cup [ 1 - C k^{-1/2}, 1+ C k^{-1/2}]\;,$$
where $C$ is a positive constant independent of $k$. Furthermore, for any
$k$, $\op{spec} ( P_k)  \cap [ 1 - C k^{-1/2}, 1+ C k^{-1/2}]$ consists of
a finite number of eigenvalues, each having a finite multiplicity and the
corresponding eigenvectors are smooth. We define $\Hilb_k$ as the sum of the corresponding eigenspaces
$$ \Hilb_k := \bigoplus_{\la \in \op{spec} ( P_k)  \cap [ 1 - C k^{-1/2}, 1+ C k^{-1/2}]} \ker ( P_k - \la)\;.$$
Then one proves that the corresponding kernel has the expected behaviour.

\subsection{Berezin-Toeplitz operators, (P1) and (P4)}

Consider a family  $( \Hilb_k \subset \Ci ( M , L^k \otimes A) , \; k \in \N)$ satisfying the conditions of theorem \ref{theo:B_kernel}.
For any $f \in \Cl^0 (M)$, define the Toeplitz operator
$$ T_k( f) := \Pi_k f : \Hilb_k \rightarrow \Hilb_k\;. $$
where $\Pi_k$ is the orthogonal projector of  $ \Cl^0 ( M ,A_k)$  onto $\Hilb_k$, as in Section \ref{sec:toeplitz-operators}.
We shall show that this construction satisfies properties (P1)-(P4) of Theorem \ref{thm-main}. Let us start with the norm and trace correspondences, since their proofs are very short.

For the estimation of the norm,  we will use special vectors of $\Hilb_k$ called coherent states. Let $B_k$ be the Schwartz kernel of $\Pi_k$.  Let $x \in M$ and $u, v$ be unitary vectors of $L_x$ and $A_x$ respectively. Let $\Psi _k$ be the section of $L^k \otimes A$ defined by
\begin{gather} \label{eq:def_psi_k}
 \Psi_k ( y) = B_k ( y,x) \cdot (u^k\otimes v), \qquad \forall y \in M\;,
\end{gather}
where the dot stands for the contractions $A_{k,y} \otimes \con{A}_{k,x} \otimes A_{k,x} \rightarrow A_{k,y}$ induced by the metrics of $L$ and $A$. Expanding $B_k$ in an orthonormal basis $(e_{k,i}, i =1, \ldots , N_k)$ as in (\ref{eq:defB}), we see that $\Psi_k$ belongs to $\Hilb_k$. Furthermore
\begin{gather} \label{eq:norm_psi_k}
 \| \Psi_k  \|^2 =  \sum_{i=1}^{N_k} | e_{k,i} (x) |^2 = B_k (x,x)\;,
\end{gather}
where we view $B_k(x,x)$ as a number as explained before Theorem \ref{theo:B_kernel}. We deduce from Theorem \ref{theo:B_kernel} that $\| \Psi_k \|^2 \sim (k / 2 \pi)^n$. When $k$ is sufficiently large, we set $\Psi_k^{\op{n}} = \Psi_k / \| \Psi_k \|$.

\begin{prop}
There exists $\al>0$ such that for any $f \in \Cl^2 (M)$ whose $x$ is a critical point, we have for any $k$
\begin{gather*}
 \bigl\| T_k (f) \Psi_k^{\op{n}}  - f(x) \Psi_k^{\op n}  \bigr\| \leqslant \al k^{-1} |f|_2\;.
\end{gather*}
Furthermore $\al$ does not depend on  $x$, $u$ and $v$.
\end{prop}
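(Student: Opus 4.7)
The plan is to exploit the concentration of the coherent state $\Psi_k$ near $x$ at the quantum scale $k^{-1/2}$ combined with Taylor's expansion at the critical point $x$. Since $\Psi_k \in \Hilb_k$, we have $\Pi_k \Psi_k = \Psi_k$, so
\begin{equation*}
 T_k(f) \Psi_k - f(x) \Psi_k = \Pi_k ( f \Psi_k ) - f(x) \Pi_k \Psi_k = \Pi_k \bigl( ( f - f(x) ) \Psi_k \bigr).
\end{equation*}
As $\Pi_k$ is an orthogonal projector, $\| \Pi_k \|_{op} \leq 1$, so it suffices to bound $\| (f - f(x)) \Psi_k \|$ and divide by $\|\Psi_k\|$. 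By \eqref{eq:norm_psi_k} and Theorem~\ref{theo:B_kernel}, $\|\Psi_k\|^2 = B_k(x,x) \sim (k/2\pi)^n$ uniformly in $x, u, v$.

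Since $x$ is a critical point of $f \in \Cl^2(M)$, applying Taylor's theorem in one variable to $t \mapsto f(\gamma(t))$ along a minimizing unit-speed geodesic $\gamma$ from $x$ to $y$ yields the global bound $|f(y) - f(x)| \leq \tfrac{1}{2} |f|_2 \, d(x,y)^2$ for every $y \in M$. Hence
\begin{equation*}
 \bigl\| (f - f(x)) \Psi_k \bigr\|^2 \leq \tfrac{1}{4} |f|_2^2 \int_M d(x,y)^4 |B_k(y,x)|^2 \, \mu(y).
\end{equation*}
To handle this integral, Theorem~\ref{theo:B_kernel} gives the uniform estimate $|B_k(y,x)|^2 \leq C k^{2n} |E(y,x)|^{2k}$. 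The properties of $E$ --- namely $E(x,x)=1$, $|E(y,x)|<1$ off the diagonal, and the second-order vanishing along the diagonal of both $(\nabla_{\con{Z}} \boxtimes \op{id}) E$ and $(\op{id} \boxtimes \nabla_Z) E$ --- force the Taylor expansion of $\log|E|^2$ to start with a negative-definite quadratic form, producing a global Gaussian bound $|E(y,x)|^{2k} \leq e^{-c_1 k d(x,y)^2}$ with $c_1 > 0$ depending only on $(M,\om,\rho,j)$: near the diagonal this is immediate from Taylor, while far from it, compactness gives $|E|\leq 1 - \eta<1$ so $(1-\eta)^{2k}$ is absorbed after shrinking $c_1$. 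Rescaling by $z := \sqrt{k}(y - x)$ in a chart centered at $x$ (or passing to geodesic polar coordinates) then gives $\int_M d(x,y)^4 |B_k(y,x)|^2 \, \mu(y) = O(k^{n-2})$ uniformly in $x$, since $\int_{\R^{2n}} |z|^4 e^{-c_1 |z|^2} dz < \infty$.

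Putting everything together, $\|(f - f(x)) \Psi_k\|^2 \leq C' |f|_2^2 k^{n-2}$, and dividing by $\|\Psi_k\|^2 \sim (k/2\pi)^n$ and taking square roots yields the claim with $\al$ depending only on $(M,\om,\rho,L,A)$. The main obstacle is securing the uniform Gaussian decay of $|E|^{2k}$ with $c_1$ independent of $x, u, v$: this reduces to positive-definiteness of the Hessian of $-\log|E|^2$ at every diagonal point, a direct consequence of the second-order vanishing of the covariant derivatives of $E$ along the diagonal, which a compactness argument upgrades to a uniform lower bound. Once this is in place, the rest of the proof is just Gaussian integration.
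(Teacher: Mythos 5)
Your proof is correct and follows essentially the same strategy as the paper: reduce to bounding $\|(f-f(x))\Psi_k\|/\|\Psi_k\|$ using $\|\Pi_k\|_{op}\leq 1$, exploit the quadratic Taylor estimate at the critical point $x$, and use the Gaussian off-diagonal decay of $B_k(\cdot,x)$ from Theorem~\ref{theo:B_kernel} together with a $\sqrt{k}$-rescaling to get $O(k^{-2})$ for the ratio of $L^2$-norms. The only cosmetic difference is that the paper splits $M$ into a coordinate neighborhood $V$ (where it uses the quadratic bound and the Gaussian decay) and $M\setminus V$ (where it uses the trivial bound $|f-f(x)|\leq 2|f|_2$ and the exponential smallness of $\Psi_k^{\mathrm n}$), whereas you obtain a single global quadratic bound $|f(y)-f(x)|\leq\tfrac12|f|_2\,d(x,y)^2$ via geodesics and a global Gaussian bound $|E(y,x)|^{2k}\leq e^{-c_1 k\,d(x,y)^2}$ by compactness; this is a mild streamlining, not a different route.
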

Applying this to a point $x$ where $|f|$ attains its maximum, we deduce that the spectrum of $T_k (f)$ intersects $\|f \| + \al k^{-1} |f|_2 [ -1, 1]$.  This implies the property $(P1)$.

\begin{proof}
Let $\la = f(x)$. Let $(U,y_i)$ be a coordinate system centered at $x$. Let $V$ be a relatively compact open neighborhood of $x$ contained in $U$. Write $\delta = \sum y_i^2$. Then if $x$ is a critical point of $f$, we have
\begin{gather}  \label{eq:1}
| f (y) - \la| \leqslant C_1 |f|_2 \delta (y) \qquad \forall y \in V\;,
\end{gather}
 where $C_1$ does not depend on $y$ and $f$. By Theorem \ref{theo:B_kernel}, we have
\begin{gather} \label{eq:estim_classique}
 \int_V | \Psi^{\op n } _k  |^2 \delta ^2 \mu =  \bigo(k^{-2})\;, \qquad \int_{M\setminus V} | \Psi^{\op n } _k  |^2  \mu = \bigo( k^{-\infty})\;.
\end{gather}
Indeed, we can adapt the standard proof of the K\"ahler case as follows. Recall that $\| \Psi_k \| ^2 \sim (k/2\pi)^n$.  Furthermore there exists $0< r< 1$, such that for any $y \in M \setminus V$, we have $|E (y,x) | \leqslant r$. The second estimate of (\ref{eq:estim_classique}) follows easily from Theorem \ref{theo:B_kernel}. For the first one, we use that there exists $C_2>0$ such that for any $y \in V$, $|E(y,x) | \leqslant  e^{ -\delta (y)/C_2}$. So by Theorem \ref{theo:B_kernel}, there exists $C_3>0$ such that
$$| \Psi_{k}^{\op n } (y) | \leqslant k^{n/2} C_3 e^{-k \delta (y)/C_2}, \qquad \forall y \in V $$ on $U$. Write $\mu = g dy_1 \wedge \ldots \wedge dy_{2n}$ and let $C_4>0$ be such that $|g | \leqslant C_4$ on $V$. We have
\begin{xalignat*}{2}
 \int_{U} | \Psi_k^{\op{n}} |^2  \delta^2 \mu & \leqslant  k^{n} C_3^2 C_4 \int_{\R^{2n} } e^{ - 2 k |u|^2/C_2 } |u|^4 \; du \\  & = k^{-2} C_3^2  C_4 \int_{\R^{2n} } e^{ - 2 |u|^2/C_2 } |u|^4 \; du\;.
\end{xalignat*}
This proves the first equation of  (\ref{eq:estim_classique}).

Now, using Equations (\ref{eq:estim_classique}), (\ref{eq:1}) and the fact that $| f(y)  - \la | \leqslant 2 |f|_2$ on $M$, we obtain that
$$ \| (f - \la) \Psi_k^{\op{n}} \| ^2 = \int _M |f(y) - \la|^2 | \Psi_k^{\op{n}}(y)  |^2 \mu (y) \leqslant C k^{-2} |f|_2^2 $$
for some $C>0$ independent of $f$. Since $\| \Pi_k \|_{op} \leqslant 1$, it follows  that
$$ \| \Pi_k f  \Psi^{\op n}_k - \lambda \Psi^{\op n} _k \| \leqslant \al k^{-1} |f|_2\;,$$
where $\al = C^{1/2}$. The fact that $\al$ may be chosen independently of $x$, $u$ and $v$, follows from the compactness of $M$.
\end{proof}

Let us prove property (P4).

\begin{prop}
For any $k$, there exists a sequence $(\rho ( \cdot, k))$ in $\Ci (M)$ such that for any $f \in \Cl^0 (M)$
$$\op{tr} (T_k (f))  = \Bigl( \frac{k}{2 \pi } \Bigr) ^n \int_M f \rho ( \cdot, k) \mu\;, $$
where $\mu$ is the Liouville volume. Furthermore, $\rho(\cdot, k) = 1 + \bigo( k^{-1})$ uniformly on $M$.
\end{prop}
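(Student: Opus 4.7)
The plan is to express the trace of $T_k(f)$ as an integral against the pointwise restriction of the Bergman-type kernel to the diagonal, and then to read off the claimed asymptotic from Theorem \ref{theo:B_kernel}.

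First, I would compute the trace directly. Picking any orthonormal basis $(e_{k,i})_{i=1}^{N_k}$ of $\Hilb_k$, one has
\begin{equation*}
\langle T_k(f) e_{k,i}, e_{k,i} \rangle = \langle \Pi_k (f e_{k,i}), e_{k,i}\rangle = \langle f e_{k,i}, e_{k,i}\rangle = \int_M f(x) |e_{k,i}(x)|^2 \, \mu(x),
\end{equation*}
since $\Pi_k e_{k,i} = e_{k,i}$. Summing over $i$ and recalling that the definition \eqref{eq:defB} of $B_k$ gives $B_k(x,x) = \sum_{i} |e_{k,i}(x)|^2$ (viewed as a number via the pointwise identification $(L^k \otimes A)_x \otimes \overline{(L^k\otimes A)}_x \simeq \C$), we obtain
\begin{equation*}
\op{tr}(T_k(f)) = \int_M f(x) B_k(x,x) \, \mu(x).
\end{equation*}
This identifies the candidate density: set $\rho(x, k) := (2\pi/k)^n B_k(x,x)$, which is smooth because $B_k$ is smooth.

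Next, I would plug in the on-diagonal restriction of Theorem \ref{theo:B_kernel} to control $\rho(\cdot, k)$. Taking for instance $m=1$, the sum ranges over $\ell \in \Z \cap [-1, 1/2] = \{-1, 0\}$, and the remainder is $\bigo_\infty(k^{n-1})$. Since $E(x,x)=1$, since $\sigma_0(x,x)=1$, and since $\sigma_{-1}$ vanishes to order $3$ along the diagonal (in particular $\sigma_{-1}(x,x)=0$), restricting to the diagonal yields
\begin{equation*}
B_k(x,x) = \Bigl(\frac{k}{2\pi}\Bigr)^n \bigl( 1 + k^{-1} \cdot 0 \bigr) + \bigo(k^{n-1}) = \Bigl(\frac{k}{2\pi}\Bigr)^n \bigl( 1 + \bigo(k^{-1}) \bigr),
\end{equation*}
uniformly in $x \in M$ by compactness of $M$ (the $\bigo_\infty$ control certainly implies uniform $\bigo$ bounds on the function itself). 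Dividing by $(k/2\pi)^n$ gives $\rho(\cdot, k) = 1 + \bigo(k^{-1})$ uniformly on $M$, as required.

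The argument is essentially a bookkeeping of Theorem \ref{theo:B_kernel} on the diagonal, so I do not expect a serious obstacle. The only point that merits a careful check is the vanishing of the $k^{-1}$-coefficient on the diagonal: this is precisely ensured by the hypothesis that $\sigma_{-1}$ vanishes to order $3$ along the diagonal, which is what allows the improved error term $\bigo(k^{-1})$ rather than the a priori weaker $\bigo(k^{-1/2})$ one might expect from the general statement of the theorem. Once this is in hand, property (P4) follows immediately with $\delta$ any uniform upper bound of $|\rho(\cdot,k)-1|\cdot k$, since $|\op{tr}(T_k(f)) - (k/2\pi)^n \int_M f \mu| \leq (k/2\pi)^n \|\rho(\cdot,k)-1\|_\infty \|f\|_{L_1}$.
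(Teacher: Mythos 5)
Your proposal is correct and follows the same route as the paper: compute the trace in an orthonormal basis of $\Hilb_k$ to get $\op{tr}(T_k(f)) = \int_M f(x) B_k(x,x)\,\mu(x)$, then restrict the expansion of Theorem \ref{theo:B_kernel} to the diagonal. One small notational slip: for $\ell = -1$ the expansion contributes $k^{-\ell}\,\sigma_{-1}(x,x) = k^{+1}\cdot 0$, not $k^{-1}\cdot 0$ as you wrote — this is immaterial to the conclusion since $\sigma_{-1}(x,x)=0$, but it is precisely the cancellation of this a priori growing term (not a $k^{-1}$-term) that the third-order vanishing of $\sigma_{-1}$ is responsible for.
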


\begin{proof}
Denote by $h_k$ the metric of $A_k$. Then
\begin{xalignat*}{2}
 \op{tr} (T_k (f))  = & \sum_i \langle f e_{k,i} , e_{k,i} \rangle
 =  \sum_i \int_M f(x) h_k ( e_{k,i} ( x) , e_{k,i} (x) ) \mu ( x) \\
=  &\int_M f(x) B_k (x,x) \mu (x)\;,
\end{xalignat*}
where we identify $B_k(x,x)$ with a number as previously.
By Theorem \ref{theo:B_kernel}, we know that $B_k (x,x) = \bigl( k / 2 \pi \bigr)^n \rho (x,k)$, where $\rho ( \cdot , k)$  has the asymptotic expansion $1+ k^{-1} \si_1 (x,x) + k^{-2} \si_2(x,x) + \ldots$.
\end{proof}


\subsection{Proof of sharp remainder estimates, (P2) and (P3)}

Our strategy is to make a detour through the  Kostant-Souriau operators and the corresponding Toeplitz operators, which are well-behaved in terms of commutator estimates. In particular these modified Toeplitz operators satisfy a correspondence principle with a remainder better than (P2), involving only second derivatives.
We will then analyse the Toeplitz operators as perturbation of the former.

\subsubsection{Kostant-Souriau operators}

Let us introduce a covariant derivative $\nabla^A$ of $A$. We denote by $\nabla^k$ the covariant derivative of $L^k \otimes A$ induced by $\nabla^A$ and the covariant derivative $\nabla$ of $L$. Let $f \in \Cl ^1 (M)$ and denote by $X$ its Hamiltonian vector field.
\footnote{In this paper the Hamiltonian vector field $X_f$
of a function $f$ is defined by $i_{X_f}\omega  + df = 0$, and the Poisson bracket is given by $\{f,g\}= -\omega(X_f,X_g)$.} The Kostant-Souriau operator associated to $f$ acting on sections of $L^k \otimes A$ is given by
\begin{gather} \label{eq:def_Kostant_souriau}
H_k (f) =  f + \tfrac{1}{i k} \nabla^k_X\;.
\end{gather}
It was discovered independently by Kostant \cite{Ko} and Souriau \cite{So} that when $A$ is the trivial line bundle and $\nabla^A $ the de Rham derivative, $H_k$ satisfies an exact correspondence principle. For a general pair $(A, \nabla^A)$, we have for any $f, g \in \Cl^2 (M)$
\begin{gather}   \label{eq:Kostant_Souriau}
[H_k ( f) , H_k ( g) ] = \tfrac{i}{k} H_k ( \{ f, g \} ) - \tfrac{1}{k^2} \Om_A ( X, Y)\;,
\end{gather}
where $\Om_A$ is the curvature of $\nabla^A$.

When $f$ is of class $\Cl^2$, $H_k (f)$ sends $\Cl^1 (M, A_k)$ into $\Cl^0
(M, A_k)$ so the same holds for the commutator  $[ H_k (f) , \Pi_k ]$. By the properties recalled after Proposition \ref{prop:norme}, this commutator extends to a bounded
operator of $(\Cl ^0 (M, A_k), \| \cdot \|)$.
  When $f$ is smooth, it was proved in \cite{oim_symp} that the norm of $[ H_k (f) , \Pi_k ]$ is a $\bigo (k^{-1})$. We will extend this to functions of class $\Cl^2$ and prove that the $\bigo ( k^{-1})$ only depends on the $\Cl^2$ norm of $f$.

\begin{thm} \label{theo:compi} There exists $C>0$ such that for any $f \in \Cl^2 (M)$, we have for any $k \in \N$,
 $$ \bigl\| [ H_k (f) , \Pi_k ] \bigr\|_{op}  \leqslant C  k^{-1}  | f | _2\;. $$
\end{thm}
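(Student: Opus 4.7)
The plan is to pass to Schwartz kernels and close via the Schur test (Proposition \ref{prop:norme}). Using the kernel calculus recalled at the end of Section \ref{sec:schwartz-kernels}, together with $\op{div} X_f = 0$ (the Hamiltonian flow preserves $\mu$), the commutator $[H_k(f),\Pi_k]$ has Schwartz kernel
\begin{equation*}
K_k(x,y) = \bigl(f(x)-f(y)\bigr)\, B_k(x,y) + \tfrac{1}{ik}\bigl[(\nabla^k_X \boxtimes \op{id}) + (\op{id} \boxtimes \nabla^k_X)\bigr] B_k(x,y)\;.
\end{equation*}
By Schur it suffices to bound the integrals $\int_M |K_k(x,\cdot)|\,\mu$ and $\int_M |K_k(\cdot,y)|\,\mu$ by $C k^{-1}|f|_2$, uniformly in $x$, $y$, $k$, and $f$.

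The off-diagonal part is immediate: fix any neighborhood $V$ of the diagonal in $M^2$. Outside $V$ one has $|E(x,y)| \leq r < 1$, so by Theorem \ref{theo:B_kernel} the kernel $B_k$ together with all its derivatives is a $\bigoinf(k^{-\infty})$ there. Since $\|f\|$ and $|df|$ are both $\leq |f|_2$, this contributes $\bigo(k^{-\infty})|f|_2$ to each Schur integral, and all the real work is concentrated near the diagonal.

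Near the diagonal I would substitute the expansion $B_k = (k/2\pi)^n E^k\bigl(\sigma_0 + k^{-1}\sigma_1 + \cdots\bigr) + \bigoinf(k^{n-(m+1)/2})$. Using $\nabla^k = k\nabla \otimes \op{id} + \op{id} \otimes \nabla^A$ on $L^k \otimes A$, the derivative on the $L^k$ factor of $E^k$ creates an explicit $k$ that cancels the $1/(ik)$ prefactor, whereas any derivative falling on $A$, or any term involving $\sigma_\ell$ for $\ell\neq 0$, is of lower order (either it carries an explicit $k^{-1}$, or, for $\ell<0$, $\sigma_\ell$ vanishes to enough order along the diagonal to be gainful once integrated against the Gaussian envelope $|B_k(x,y)| \lesssim k^n e^{-k\op{dist}(x,y)^2/C}$ coming from Theorem \ref{theo:B_kernel}). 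The main obstacle, and the core of the proof, is to establish the pointwise cancellation at the principal order,
\begin{equation*}
\bigl(f(x)-f(y)\bigr)\,E(x,y) + \tfrac{1}{i}\bigl[(\nabla_X \boxtimes \op{id})+(\op{id} \boxtimes \nabla_X)\bigr] E(x,y) \;=\; \bigo\bigl(|y-x|^2\,|f|_2\bigr)
\end{equation*}
near the diagonal. To prove it, decompose $X_f = X_f^{1,0} + X_f^{0,1}$ along $T^{1,0}M \oplus T^{0,1}M$; by the properties of $E$ from Theorem \ref{theo:B_kernel}, $(\nabla_{X^{0,1}} \boxtimes \op{id})E$ and $(\op{id} \boxtimes \nabla_{X^{1,0}})E$ vanish to second order along the diagonal, so only $(\nabla_{X^{1,0}} \boxtimes \op{id})E + (\op{id} \boxtimes \nabla_{X^{0,1}})E$ survives at principal order. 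Combining the Taylor expansion $f(y)-f(x) = df_x(y-x) + \bigo(|y-x|^2 |f|_2)$, the prequantum condition that the curvature of $\nabla$ equals $\omega/i$, and $i_{X_f}\omega = -df$, a direct coordinate computation in a local unitary frame of $L$ yields the required identity.

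Gaussian integration then closes the estimate: the leftover integrand, bounded by $|y-x|^2 |f|_2 \cdot k^n e^{-k\op{dist}(x,y)^2/C}$, has integral $\bigo(k^{-1})|f|_2$ after the rescaling $u = \sqrt{k}(y-x)$. Combined with the off-diagonal bound, both Schur integrals are uniformly $\bigo(k^{-1})|f|_2$, which yields the theorem.
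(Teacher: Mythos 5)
Your proposal is essentially the paper's proof: you pass to the Schwartz kernel of the commutator, invoke the Schur test, and reduce the $k^{-1}$ gain to a pointwise second-order cancellation along the diagonal between $f(x)-f(y)$ and $\tfrac{1}{i}\nabla^{L\boxtimes \con{L}}_{(X,X)}E$, after which the Gaussian envelope $k^n e^{-k\op{dist}^2/C}$ converts the $\bigo(|x-y|^2)$ into the required $\bigo(k^{-1})$. The cancellation you isolate is exactly the content of property (iii) of Section \ref{sec:fund-estim} (i.e.\ $g-\al_E(X,X)$ vanishing to second order along $\Delta$, Proposition 2.18 of \cite{oim_symp}), which you propose to re-derive from the prequantum condition rather than cite; the paper additionally packages the argument through the operators $P_k(\si)$, handles the $\si_\ell$ terms with $\ell\neq 0$ and the $\bigoinf$ remainder via Lemmas \ref{lem:estim1}, \ref{lem:estim2} and \ref{lem:remainder}, and is more explicit about the localization/partition-of-unity step in the Schur test, but these are the same ingredients you sketch.
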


The proof will be given in Section \ref{sec:proofs}. It is a consequence of Theorem \ref{theo:B_kernel}.  Denote by $T_k^c (f)$ the operator
\begin{gather} \label{eq:Toeplitz_correction}
T_k ^c (f) = \Pi_k H_k (f) :  \Hilb_k \rightarrow \Hilb_k\;.
\end{gather}
The superscript $c$ stands for correction. Surprisingly, we only need to assume $f$ and $g$ of class $\Cl^2$ to get the sharp correspondence principle for $T_k^c$.

\begin{prop} \label{prop:P2_correction}
We have for any $f$ and $g$ in $\Cl ^2 ( M)$,
\begin{gather} \label{eq:P2_cor}
 [ T_k^c ( f) , T_k^c ( g) ] = \tfrac{i}{k} T_k^c ( \{ f, g \} ) + \bigo ( k^{-2}) |f|_2 |g|_2\;.
\end{gather}
\end{prop}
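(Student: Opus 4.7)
The strategy is to reduce the statement to the exact Kostant–Souriau commutation formula (\ref{eq:Kostant_Souriau}) together with the commutator estimate of Theorem \ref{theo:compi}. The basic observation is that $T_k^c(f) = \Pi_k H_k(f)$, regarded as an operator on $\Hilb_k$, can be rewritten as $\Pi_k H_k(f)\Pi_k$ when composed on the right, since $\Pi_k$ restricted to $\Hilb_k$ is the identity. The first step is the algebraic identity, valid on $\Hilb_k$:
\begin{align*}
 [T_k^c(f), T_k^c(g)] = \Pi_k [H_k(f), H_k(g)] - \Pi_k H_k(f)(\id - \Pi_k) H_k(g) + \Pi_k H_k(g)(\id - \Pi_k) H_k(f).
\end{align*}

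Next, I would apply (\ref{eq:Kostant_Souriau}) to the first term, which yields
$\Pi_k [H_k(f), H_k(g)]|_{\Hilb_k} = \tfrac{i}{k}\Pi_k H_k(\{f,g\})|_{\Hilb_k} - \tfrac{1}{k^2}\Pi_k\, \Omega_A(X_f,X_g)|_{\Hilb_k} = \tfrac{i}{k} T_k^c(\{f,g\}) + R_k$, where $R_k$ is multiplication by $-k^{-2}\Omega_A(X_f,X_g)$ followed by $\Pi_k$. Since $\Omega_A$ is a fixed smooth $2$-form and $|X_f|_{\infty}\leqslant C|f|_1$, $|X_g|_{\infty}\leqslant C|g|_1$, we get $\|R_k\|_{op} \leqslant C k^{-2}|f|_1|g|_1 \leqslant Ck^{-2}|f|_2|g|_2$, which is within the allowed error.

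The remaining two ``off-diagonal'' terms are the crux. Using $\Pi_k(\id - \Pi_k)=0$, I would factor
$$
\Pi_k H_k(f)(\id - \Pi_k) = [\Pi_k, H_k(f)](\id - \Pi_k),
$$
and on $\Hilb_k$, since $\psi = \Pi_k\psi$, the identity $(\id-\Pi_k)H_k(g)\psi = [H_k(g),\Pi_k]\psi$ holds. Combining these,
$$
\Pi_k H_k(f)(\id - \Pi_k) H_k(g)\bigr|_{\Hilb_k} = -[H_k(f),\Pi_k]\,[H_k(g),\Pi_k]\bigr|_{\Hilb_k}.
$$
Invoking Theorem \ref{theo:compi} twice bounds each commutator in operator norm by $Ck^{-1}|f|_2$ and $Ck^{-1}|g|_2$ respectively, so the whole term is $O(k^{-2}|f|_2|g|_2)$. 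The symmetric term gives the same bound, and summing proves the proposition.

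The only nontrivial input here is Theorem \ref{theo:compi}; granted that, the argument is essentially algebraic. The main obstacle, postponed to Section \ref{sec:proofs}, is the $\Cl^2$-regularity of $f$ in the commutator bound for $[H_k(f),\Pi_k]$: it forces a delicate analysis of the kernel $B_k$ via the almost-holomorphic behaviour of $E$ (the fact that $(\nabla_{\bar Z} \boxtimes \op{id})E$ vanishes to second order on the diagonal), which is exactly what produces the extra factor $k^{-1}$ instead of the naive $k^{-1/2}$ from a single derivative hitting $B_k$. Once that estimate is in hand, Proposition \ref{prop:P2_correction} follows with no additional analysis.
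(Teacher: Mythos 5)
Your proof is correct and follows essentially the same route as the paper's: reduce $[T_k^c(f),T_k^c(g)]$ to $\Pi_k[H_k(f),H_k(g)]\Pi_k$ up to products of the commutators $[H_k(\cdot),\Pi_k]$, bound those by Theorem \ref{theo:compi}, then invoke the Kostant--Souriau formula and absorb the curvature term into the error. The paper packages the two ``off-diagonal'' terms in one line, $\Pi_k[\Pi_k,H_k(f)][\Pi_k,H_k(g)]\Pi_k = T_k^c(f)T_k^c(g) - \Pi_kH_k(f)H_k(g)\Pi_k$, but your $\Pi_k = \id - (\id-\Pi_k)$ expansion is the same computation unfolded.
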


Here it is implicitly meant that the $\bigo ( k^{-2})$'s do not depend on $f$ or $g$. More precisely, the $\bigo ( k^{-2} )$ is a term whose uniform norm is $\leq Ck^{-2}$, where $C$ depends only on the family $( \Hilb_k)$, but {\bf not} on $f$ or $g$. We use the same convention in the sequel.

\begin{proof}
We  check by a straightforward computation that
$$ \Pi_k [ \Pi_k , H_k(f) ] [ \Pi_k , H_k(g) ] \Pi_k = T^c_k(f) T_k^c (g) - \Pi_k H_k(f) H_k(g) \Pi_k\;.$$
By Theorem \ref{theo:compi}, the left hand side is a $ \bigo ( k^{-2} )|f|_2 |g|_2$. So we have that
$$ [T^c_k(f), T_k^c (g)] = \Pi_k [H_k ( f) , H_k ( g) ] \Pi_k + \bigo ( k^{-2} )|f|_2 |g|_2\;. $$
Using Kostant-Souriau formula (\ref{eq:Kostant_Souriau}) and the fact that $\Pi_k \Om_A ( X, Y) \Pi_k = \bigo (1) |f|_1 |g|_1$, we get (\ref{eq:P2_cor}).
\end{proof}

\subsubsection{K\"ahler case}

We assume in this section that $(M, \om , j)$ is a K\"ahler manifold, $L$, $A$ are holomorphic Hermitian line bundles over $M$, and the connections $\nabla$ and $\nabla^A$ are the Chern connections. Furthermore, $\Hilb_k$ is the space of holomorphic sections of $A_k $.

\begin{lemma} \label{lem:cor}
For any vector field $X$ of $M$ of class $\Cl^1$, we have
$$\Pi_k \nabla^k_X \Pi_k = - \Pi_k \op{div} (Z) \Pi_k\;, $$
where $Z = \tfrac{1}{2} (X - i j X)$ and $\op{div} (Z)$ is the divergence of $Z$ with respect to the Liouville form.
\end{lemma}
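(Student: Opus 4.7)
The strategy is to split $X$ into its $(1,0)$ and $(0,1)$ components and treat them with different tools: the $(0,1)$ part vanishes on holomorphic sections, while the $(1,0)$ part is converted into multiplication by $-\op{div}(Z)$ via integration by parts against the Liouville measure.

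Writing $X = Z + \bar Z$ with $Z = \tfrac{1}{2}(X - ij X) \in T^{1,0}M$ and $\bar Z \in T^{0,1}M$, and extending $\nabla^k$ $\C$-linearly in the vector field argument, the K\"ahler hypothesis together with the definition of $\Hilb_k$ as the space of holomorphic sections of $A_k$ gives $(\nabla^k)^{0,1} = \bar\partial$, hence $\nabla^k_{\bar Z} s = 0$ for every $s \in \Hilb_k$. Therefore $\nabla^k_{\bar Z} \circ \Pi_k = 0$ as an operator on $\Cl^0(M, A_k)$, and it suffices to establish
$$ \Pi_k \, \nabla^k_Z \, \Pi_k = - \Pi_k \, \op{div}(Z) \, \Pi_k. $$

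To prove this, I would exploit the Hermitian compatibility of the Chern connection: for smooth sections $s_1, s_2$ of $A_k$,
$$ Z \cdot h_k(s_1, s_2) = h_k( \nabla^k_Z s_1 , s_2 ) + h_k(s_1, \nabla^k_{\bar Z} s_2), $$
the appearance of $\bar Z$ in the second slot reflecting the conjugate-linearity of $h_k$ in its second argument. Integrating against $\mu$ on the closed manifold $M$ and using $\mathcal{L}_Z \mu = \op{div}(Z)\, \mu$ (extended $\C$-linearly from the real case, so that $\op{div}(Z) = \tfrac{1}{2}(\op{div}(X) - i \op{div}(jX))$), Stokes' theorem yields
$$ \langle \nabla^k_Z s_1, s_2 \rangle + \langle s_1, \nabla^k_{\bar Z} s_2 \rangle = - \langle \op{div}(Z)\, s_1, s_2 \rangle. $$

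Taking $s_2 \in \Hilb_k$ holomorphic kills the middle term by the first step, leaving $\langle \nabla^k_Z s_1, s_2 \rangle = - \langle \op{div}(Z) s_1, s_2 \rangle$ for every $s_2 \in \Hilb_k$. This is precisely the statement that $\Pi_k \nabla^k_Z s_1 = - \Pi_k ( \op{div}(Z) s_1 )$ for all $s_1$; composing with $\Pi_k$ on the right then gives the lemma. The only mild technical point is that $X$ is merely of class $\Cl^1$, so $\op{div}(Z)$ is only continuous; but since sections of $\Hilb_k$ are smooth and all integrands involved are continuous on the compact $M$, Stokes' identity applies without modification, so this does not present any real obstacle.
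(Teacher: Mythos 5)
Your proof is correct and follows essentially the same route as the paper's: decompose $X = Z + \overline{Z}$, observe that $\nabla^k_{\overline{Z}}$ kills holomorphic sections of $A_k$, and use the Hermitian compatibility of the Chern connection together with the vanishing of $\int_M \mathcal{L}_Z(f\mu)$ on the closed manifold to trade $\nabla^k_Z$ for multiplication by $-\op{div}(Z)$ between the projectors. The only cosmetic difference is that you keep $s_1$ an arbitrary smooth section and specialize only $s_2 \in \Hilb_k$, whereas the paper restricts both to $\Hilb_k$; either variant yields the same operator identity.
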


\begin{proof}
 Since $\Hilb_k$ consists of holomorphic sections and $\con{Z}$ is a section of $T^{0,1}M$, $\Pi_k \nabla^k_{\con{Z}} \Pi_k = 0$. Since $Z$ is of class $\Cl^1$, the integral $\int \mathcal{L}_Z ( f \mu) $ vanishes for any smooth function $f$. We obtain that for any $s$, $ t \in \Ci ( M , A_k)$,
$$ \langle \nabla^k_{Z} s, t \rangle + \langle s, \nabla^k_{\con{Z}}  t \rangle + \langle \op{div} (Z)  s, t \rangle = 0 . $$
Applying this to $s$, $t \in \Hilb_k$, we deduce that   $\Pi_k ( \nabla_Z + \op{div}( Z) ) \Pi_k =0$. Consequently
$$ \Pi_k \nabla^k_X \Pi_k = \Pi_k \nabla^k_{Z} \Pi_k + \Pi_k \nabla^k_{\con{Z}} \Pi_k = - \Pi_k \op{div} (Z) \Pi_k\;,$$
which was to be proved.
\end{proof}

When $X$ is the Hamiltonian vector field of $f \in \Cl ^2 (M)$, we have $\op{div} X = 0$ so that $\op{div}(Z) = - i/2 \op{div} (jX) = i \Lap f$ where $\Lap$ is the holomorphic Laplacian.   We deduce  Tuynman's formula \cite{Tu}:
\begin{gather} \label{eq:tuynman}
  \Pi_k i  \nabla_X \Pi_k = \Pi_k (\Lap f) \Pi_k\;.
\end{gather}
Recall that $T_k (f) $ is the Toeplitz operator $\Pi_k f : \Hilb_k \rightarrow \Hilb_k$.  By (\ref{eq:tuynman}), we have
\begin{gather} \label{eq:corr}
T^c_k (f) = T_k (f) - \tfrac{1}{k} T_k ( \Lap f) = T_k ( f) + \bigo( k^{-1}) |f|_2\;.
\end{gather}
Let us prove that $T_k$ satisfies the quasi-multiplicativity (P3).
\begin{prop} \label{prop:P1_kahler}
For any functions $f \in \Cl^1 (M)$ and $g \in \Cl ^2 ( M)$, we have
\begin{gather*}
 T_k ( f) T_k ( g) =T_k ( fg) + \bigo ( k^{-1})(|f|_0|g|_2 + |f|_1 |g|_1)\;, \\
 T_k ( g) T_k (f) = T_k ( fg ) + \bigo ( k^{-1})(|f|_0|g|_2 + |f|_1 |g|_1)\;.
\end{gather*}
\end{prop}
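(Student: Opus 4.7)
The plan is to start from the operator identity (on $\Hilb_k$)
\[
T_k(f)T_k(g) - T_k(fg) \;=\; \Pi_k f[\Pi_k, g]|_{\Hilb_k} \;=\; -\Pi_k f(\op{id} - \Pi_k)g|_{\Hilb_k},
\]
which follows from $\Pi_k f\Pi_k gs - \Pi_k fgs = \Pi_k f(\Pi_k - \op{id})gs$ for $s \in \Hilb_k$. Since $g \in \Cl^2$, the Kostant--Souriau splitting $g = H_k(g) - \tfrac{1}{ik}\nabla^k_{X_g}$ is available, and I would inject it to decompose $[\Pi_k, g] = [\Pi_k, H_k(g)] - \tfrac{1}{ik}[\Pi_k, \nabla^k_{X_g}]$. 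The $H_k(g)$ piece is immediate from Theorem \ref{theo:compi}:
\[
\|\Pi_k f[\Pi_k, H_k(g)]|_{\Hilb_k}\|_{op} \;\leq\; |f|_0 \cdot Ck^{-1}|g|_2,
\]
which furnishes the $|f|_0|g|_2$ contribution.

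For the $\nabla^k_{X_g}$ piece, I would use three ingredients: Lemma \ref{lem:cor} to identify $\Pi_k\nabla^k_{X_g}|_{\Hilb_k} = -iT_k(\Delta g)$; the Leibniz rule $f\nabla^k_{X_g}s = \nabla^k_{X_g}(fs) - \{f,g\}s$ (using $X_g f = \{f,g\}$ in the paper's sign convention); and the orthogonal splitting $fs = T_k(f)s + (\op{id} - \Pi_k)(fs)$ followed by a second application of Lemma \ref{lem:cor} on the holomorphic summand $T_k(f)s \in \Hilb_k$. Reorganizing gives
\[
\tfrac{1}{ik}\Pi_k f[\Pi_k, \nabla^k_{X_g}]|_{\Hilb_k} \;=\; \tfrac{1}{k}[T_k(f), T_k(\Delta g)] + \tfrac{i}{k}T_k(\{f,g\}) + \tfrac{1}{ik}\Pi_k\nabla^k_{X_g}(\op{id}-\Pi_k)f|_{\Hilb_k}.
\]
The commutator is bounded by $2k^{-1}|f|_0|g|_2$, and the trivial estimate $|\{f,g\}|_0 \leq C|f|_1|g|_1$ yields an $\bigo(k^{-1})|f|_1|g|_1$ bound for the second summand.

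The main obstacle is the residual $\tfrac{1}{ik}\Pi_k\nabla^k_{X_g}(\op{id}-\Pi_k)f|_{\Hilb_k}$. I would estimate it by duality: since $\op{div}(X_g) = 0$, one has $(\nabla^k_{X_g})^\dag = -\nabla^k_{X_g}$, so for $s, v \in \Hilb_k$,
\[
\langle \Pi_k\nabla^k_{X_g}(\op{id}-\Pi_k)(fs), v\rangle \;=\; -\langle (\op{id}-\Pi_k)(fs), \nabla^k_{X_g}v\rangle.
\]
Two standard K\"ahler inputs then complete the estimate: the Kodaira spectral gap $\|u\|^2 \leq (ck)^{-1}\|\bar\partial u\|^2$ on $\Hilb_k^\perp$, applied to $u = (\op{id}-\Pi_k)(fs)$ with $\bar\partial u = (\bar\partial f)s$ (since $\bar\partial s = 0$), gives $\|(\op{id}-\Pi_k)(fs)\| \leq Ck^{-1/2}|f|_1\|s\|$; and the Bochner--Kodaira identity for holomorphic sections of $L^k\otimes A$, where the curvature of $L^k$ contributes a factor of order $k$, yields $\|\nabla v\|_{L^2}^2 \leq Ck\|v\|^2$ and hence $\|\nabla^k_{X_g}v\|_{L^2} \leq C\sqrt k\,|g|_1\|v\|$ using $|X_g|_0 \leq C|g|_1$. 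Multiplying, the residual contributes $\bigo(k^{-1})|f|_1|g|_1$, and summing all pieces proves the first formula. The second formula reduces to the first by assuming $f,g$ real (extending to complex by $\R$-linearity of $T_k$) and taking adjoints: $T_k(\bar f) = T_k(f)^*$ yields $(T_k(f)T_k(g) - T_k(fg))^* = T_k(g)T_k(f) - T_k(fg)$, which has the same operator norm.
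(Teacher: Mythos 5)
Your proof is correct (modulo a few harmless sign discrepancies in the displayed intermediate identity, which do not affect the norm estimates), but it takes a genuinely different route from the paper's. The paper expands $\Pi_k f[\Pi_k, H_k(g)]\Pi_k$ algebraically and, crucially, absorbs $f$ into the vector field via $f\nabla^k_Y = \nabla^k_{fY}$; the resulting correction terms are then all of the form $\Pi_k\nabla^k_\bullet\Pi_k$, controlled directly by Lemma \ref{lem:cor}, and the elementary bound $|fY|_1 \leq |f|_0|Y|_1 + |f|_1|Y|_0$ already delivers the asserted $|f|_0|g|_2 + |f|_1|g|_1$ with no residual to handle. You instead apply the Leibniz rule to $f\nabla^k_{X_g}s$ and split $fs$ orthogonally, which neatly isolates $\tfrac{1}{k}[T_k(f), T_k(\Lap g)]$ and $\tfrac{1}{k}T_k(\{f,g\})$ but also produces the residual $\Pi_k\nabla^k_{X_g}(\op{id}-\Pi_k)f|_{\Hilb_k}$, which falls outside the reach of Lemma \ref{lem:cor}. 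Your treatment of this residual via the Kodaira spectral gap $\|(\op{id}-\Pi_k)(fs)\|\lesssim k^{-1/2}|f|_1\|s\|$ and the Bochner--Kodaira bound $\|\nabla v\|\lesssim\sqrt{k}\|v\|$ on $\Hilb_k$ is valid and gives the right $\bigo(k^{-1})|f|_1|g|_1$ contribution, but these inputs are specifically K\"ahler and are nowhere invoked in the paper. The paper's purely algebraic route has the advantage that it transports verbatim to the general symplectic case (Proposition \ref{prop:P1}, with Lemma \ref{lem:estim_norm_cor} replacing Lemma \ref{lem:cor}), where the Kodaira gap and Bochner--Kodaira identity are unavailable; your argument, while a sound alternative in the K\"ahler setting, would not carry over without modification.
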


\begin{proof} Let $Y$ be the Hamiltonian vector field of $g$. We have
 $$  \Pi_k f [ \Pi_k , H_k (g) ] \Pi_k =  T_k ( f) T_k ( g) - T_k (fg) + \Pi_k f \Pi_k \tfrac{1}{ik} \nabla_Y^k \Pi_k - \Pi_k  \tfrac{1}{ik} \nabla_{fY}^k \Pi_k\;. $$
By Theorem \ref{theo:compi}, the left hand side is a $\bigo ( k^{-1} ) |f|_0 |g|_2$.
By Lemma \ref{lem:cor}, $\Pi_k  \tfrac{1}{ik} \nabla_Y^k \Pi_k = \bigo ( k^{-1}) |Y|_1$, so that
\begin{gather} \label{eq:16}
\begin{gathered}
\Pi_k f \Pi_k \tfrac{1}{ik} \nabla_Y^k \Pi_k = \bigo( k^{-1} ) |f|_0 |g|_2\;, \\
 \Pi_k  \tfrac{1}{ik} \nabla_{fY}^k \Pi_k = \bigo ( k^{-1} ) |fY|_1 = \bigo(k^{-1}) (|f|_0 |g|_2 + |f|_1 |g|_1)\;.
\end{gathered}
\end{gather}
which concludes the proof of the first equation. To get the second one, we take the adjoint.
\end{proof}

Finally let us show the sharp correspondence principle (P2).

\begin{prop} \label{prop:P2}
We have for any $f,g \in \Cl^3 (M)$,
\begin{gather*}  [ T_k ( f) , T_k ( g) ] = \tfrac{i}{k} T_k ( \{ f, g \} ) + \bigo ( k^{-2}) ( |f|_1 |g|_3 + |f|_2 |g|_2 + |f|_3|g|_1 )\;.
\end{gather*}
\end{prop}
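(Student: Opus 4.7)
The strategy is to reduce everything to Proposition \ref{prop:P2_correction} via Tuynman's formula \eqref{eq:corr}, rewritten as $T_k (f) = T_k^c (f) + \tfrac{1}{k} T_k ( \Lap f)$. Expanding the commutator bilinearly,
\begin{align*}
 [T_k(f), T_k(g)] ={}& [T_k^c(f), T_k^c(g)] + \tfrac{1}{k}[T_k^c(f), T_k(\Lap g)] \\
 &{}+ \tfrac{1}{k}[T_k(\Lap f), T_k^c(g)] + \tfrac{1}{k^2}[T_k(\Lap f), T_k(\Lap g)],
\end{align*}
and the four pieces will be treated separately.

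For the leading piece, Proposition \ref{prop:P2_correction} gives $[T_k^c(f), T_k^c(g)] = \tfrac{i}{k} T_k^c(\{f,g\}) + \bigo(k^{-2})|f|_2|g|_2$. One more application of Tuynman converts this to $\tfrac{i}{k} T_k(\{f,g\}) - \tfrac{i}{k^2} T_k(\Lap\{f,g\})$. The crucial observation is that, since $\{f,g\}$ is bilinear in the first derivatives of $f$ and $g$, applying the second-order operator $\Lap$ produces via Leibniz only terms with at most three derivatives of each factor:
\[ |\Lap\{f,g\}|_0 \leqslant C(|f|_1|g|_3 + |f|_2|g|_2 + |f|_3|g|_1), \]
with no fourth derivative of either argument. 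Combined with the upper bound $\|T_k(h)\|_{op} \leqslant |h|_0$ from (P1), this shows $\tfrac{1}{k^2} T_k(\Lap\{f,g\}) = \bigo(k^{-2})$ times exactly the reduced expression of the proposition.

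For the cross terms, substituting $T_k^c(f) = T_k(f) - \tfrac{1}{k}T_k(\Lap f)$ back in reduces them to commutators of ordinary Toeplitz operators. To estimate $[T_k(f), T_k(\Lap g)]$, I invoke Proposition \ref{prop:P1_kahler} with $\Lap g \in \Cl^1$ in the first slot and $f \in \Cl^3 \subset \Cl^2$ in the second slot, applied both to $T_k(f)T_k(\Lap g)$ and to $T_k(\Lap g)T_k(f)$; the subtraction cancels the $T_k(f \cdot \Lap g)$ terms and leaves
\[ \bigl\|[T_k(f), T_k(\Lap g)]\bigr\|_{op} = \bigo(k^{-1})\bigl( |\Lap g|_0 |f|_2 + |\Lap g|_1 |f|_1 \bigr) = \bigo(k^{-1})(|g|_2 |f|_2 + |g|_3 |f|_1). \]
After multiplication by $1/k$ this contributes $\bigo(k^{-2})(|f|_2|g|_2 + |f|_1|g|_3)$, and the symmetric argument handles the companion cross term with contribution $\bigo(k^{-2})(|f|_2|g|_2 + |f|_3|g|_1)$. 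Finally, the fourth piece above and the residual commutators $\tfrac{1}{k^2}[T_k(\Lap f), T_k(\Lap g)]$ produced by substituting Tuynman into both slots of the cross terms are bounded crudely by $\|[T_k(\Lap f), T_k(\Lap g)]\|_{op} \leqslant 2|\Lap f|_0 |\Lap g|_0 \leqslant C|f|_2|g|_2$, using only the contraction bound from (P1); each such term is $\bigo(k^{-2})|f|_2|g|_2$. Summing the four contributions yields the stated estimate.

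The main obstacle, which motivates the detour through the Kostant--Souriau corrected operators $T_k^c$, is the suppression of fourth derivatives. A naive application of quasi-multiplicativity to $[T_k(f), T_k(\Lap g)]$ in the opposite ordering would force $\Lap g$ into the $\Cl^2$ slot and thus bring in $|\Lap g|_2 \sim |g|_4$. Proposition \ref{prop:P1_kahler} precisely avoids this because it is asymmetric in its regularity hypotheses, allowing the single derivative to sit on the less regular factor $\Lap g$ and the two derivatives to sit on $f$; the companion observation is that the first-order nature of $\{\cdot,\cdot\}$ in its arguments ensures $\Lap\{f,g\}$ also stays within three derivatives of each factor. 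Together these two points confine the estimate to the reduced expression $|f|_1|g|_3 + |f|_2|g|_2 + |f|_3|g|_1$.
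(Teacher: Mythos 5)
Your proof is correct and follows essentially the same route as the paper's: both use Tuynman's formula $T_k^c(f) = T_k(f) - \tfrac{1}{k} T_k(\Lap f)$ to reduce to Proposition \ref{prop:P2_correction}, and both exploit the asymmetric regularity hypotheses of Proposition \ref{prop:P1_kahler} (placing $\Lap f$ or $\Lap g$ in the $\Cl^1$ slot) to bound the cross commutators $[T_k(\Lap f), T_k(g)]$ and $[T_k(f), T_k(\Lap g)]$ without invoking fourth derivatives. The only cosmetic difference is that you expand $[T_k(f),T_k(g)]$ forward via the Tuynman substitution, while the paper expands $[T^c_k(f), T^c_k(g)]$ and rearranges; the paper also bounds $T_k(\Lap\{f,g\})$ by $\bigo(1)|f,g|_3$ rather than directly by the reduced expression, but since the $\Cl^k$-norms are monotone this is equivalent up to a constant.
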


\begin{proof}
By Proposition \ref{prop:P1_kahler}, we have that for any $u \in \Cl^1 (M)$ and $v \in \Cl^ 2(M)$, $[T_k (u), T_k (v) ] = \bigo ( k^{-1} ) ( |u|_0 |v|_2 + |u|_1 |v|_1)$. Consequently,
$$ [ T_k ( \Lap f) , T_k (g) ] = \bigo( k^{-1}) ( |f|_2 |g|_2 + |f|_3 |g|_1 )\;.$$
Similarly,
$$ [T_k (f) , T_k ( \Lap g) ] = \bigo ( k^{-1} ) (|f|_1 |g|_3 + |f|_2 |g|_2)\;.$$
Using that $T_k (u) = \bigo (1) |u|_0$, we have
$$ [T_k ( \Lap f) , T_k ( \Lap g) ] = \bigo (1) |f|_2 |g|_2\;, \qquad T_k (\Lap \{ f, g \} ) = \bigo (1) |f,g|_3\;.$$
We conclude with  Proposition \ref{prop:P2_correction}  by using $T^c_k (f) = T_k (f) - \tfrac{1}{k} T_k ( \Lap f)$.
\end{proof}

\subsubsection{Symplectic case}

Let us return to the general symplectic case.
We do not know how to generalize Lemma  \ref{lem:cor}. Instead we will use the following result.
\begin{thm} \label{lem:Tu_symp}
There exists $C>0$ such that for any $Z \in \Cl^0 ( M , T^{1,0} M)$,
\begin{gather} \label{eq:12}
 \forall k \in \N, \qquad \bigl\| \tfrac{i}{k} \nabla^k_{\con{Z}} \Pi_k  \bigr\|_{op} \leqslant C k^{-1} | Z|_0\;,
\end{gather}
and if $Z$ is of class $\Cl^1$, for any $f \in \Cl ^2 (M)$, $[H_k (f) ,
\tfrac{i}{k} \nabla^k_{\con{Z}} \Pi_k ] : \Ci ( M , A_k) \rightarrow \Cl^0
( M , A_k)$ extends continuously  to a bounded operator of $(\Cl^0 ( M ,
A_k), \| \cdot \|_k )$ satisfying
\begin{gather} \label{eq:13}
\forall k \in \N, \qquad  \bigl\| [ H_k ( f) , \tfrac{i}{k} \nabla^k_{\con{Z}}  \Pi_k ] \bigr\|_{op} \leqslant C k^{-2} (| f|_2 | Z|_0 + |f|_1 |Z|_1)\;.
\end{gather}
\end{thm}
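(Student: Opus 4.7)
Both bounds reduce to Schur-test estimates on Schwartz kernels. The starting point is the off-diagonal expansion of $B_k$ from Theorem~\ref{theo:B_kernel} together with the almost-holomorphy property that $(\nabla_{\bar Z}\boxtimes\op{id})E$ vanishes to order $2$ along the diagonal.

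To prove (\ref{eq:12}), set $Q_k=\tfrac{i}{k}\nabla^k_{\bar Z}\Pi_k$, so that its Schwartz kernel is $K_Q=\tfrac{i}{k}(\nabla^k_{\bar Z}\boxtimes\op{id})B_k$. Since $(\nabla_{\bar Z}\boxtimes\op{id})E$ vanishes to second order on the diagonal and $E(x,x)=1$, in a neighbourhood of the diagonal one can write $(\nabla_{\bar Z}\boxtimes\op{id})E=E\cdot\phi_Z$ with $\phi_Z$ smooth, linear in $Z$, and satisfying $|\phi_Z(x,y)|\le C|Z|_0\,\delta(x,y)$, where $\delta(x,y)=d(x,y)^2$. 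The Leibniz rule $(\nabla^k_{\bar Z}\boxtimes\op{id})(E^k)=kE^{k-1}(\nabla_{\bar Z}E)$ then turns the leading term $(k/2\pi)^n E^k\sigma_0$ of $B_k$ into $(k/2\pi)^n E^k\phi_Z\sigma_0$ in $K_Q$, the extra factor $k$ being cancelled by the $1/k$ in $Q_k$; the subleading $\sigma_\ell$ contributions are smaller thanks to the vanishing orders in Theorem~\ref{theo:B_kernel}. Combining the Gaussian-type bound $|E^k|\le e^{-k\delta/C_1}$ in a fixed neighbourhood of the diagonal with the off-diagonal decay $|E|<1-c$ yields $|K_Q(x,y)|\le C k^n|Z|_0\,\delta\,e^{-k\delta/C_1}+\bigo(k^{-\infty})$. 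Rescaling $u=\sqrt{k}\,y$ in normal coordinates gives $\int k^n\delta\,e^{-k\delta/C_1}\,\mu\le C'k^{-1}$ uniformly, and the Schur test (Proposition~\ref{prop:norme}) produces $\|Q_k\|_{op}\le Ck^{-1}|Z|_0$.

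For (\ref{eq:13}), the rules at the end of Section~\ref{sec:schwartz-kernels} together with the Hamiltonian identity $\op{div} X_f=0$ show that the Schwartz kernel of $[H_k(f),Q_k]$ equals
\begin{equation*}
(f(x)-f(y))K_Q(x,y)+\tfrac{1}{ik}\bigl[(\nabla^k_{X_f}\boxtimes\op{id})+(\op{id}\boxtimes\nabla^k_{X_f})\bigr]K_Q(x,y).
\end{equation*}
Taylor-expanding $f(y)=f(x)+df_x(y-x)+O(|f|_2\,\delta)$, the quadratic remainder combined with the bound on $K_Q$ contributes $O(k^{-2})|f|_2|Z|_0$ by Schur. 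What remains, namely $-df_x(y-x)K_Q+\tfrac{1}{ik}[(\nabla^k_{X_f}\boxtimes\op{id})+(\op{id}\boxtimes\nabla^k_{X_f})]K_Q$, I would analyse by exploiting that the diagonal lift of the Hamiltonian flow preserves the almost-holomorphic phase $E$ to leading order. Expanding $X_f(y)=X_f(x)+DX_f(x)(y-x)+O(|f|_3|y-x|^2)$, the constant piece $X_f(x)$ acting on $E^k$ in both variables cancels the linear Taylor term $df_x(y-x)K_Q$ modulo $O(k^{-2})$, because the total derivative of $E$ along the diagonal lift of $X_f$ vanishes on the diagonal. Rewriting the remaining derivatives through the commutator $[\nabla^k_{X_f},\nabla^k_{\bar Z}]=R^k(X_f,\bar Z)+\nabla^k_{[X_f,\bar Z]}$ then produces two surviving kinds of contributions: the curvature $R^k=-ik\om + \Om_A$ combined with $\om(X_f,\bar Z)=-\bar Z(f)$ gives terms that recombine into $(f(x)-f(y))$-type expressions controlled by $|f|_2|Z|_0$; the Lie-bracket term $\tfrac{i}{k}\nabla^k_{[X_f,\bar Z]}B_k$ is itself of the shape of $Q_k$ with $Z$ replaced by $[X_f,\bar Z]$, whose norm is bounded by $|f|_2|Z|_0+|f|_1|Z|_1$. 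Applying (\ref{eq:12}) to this new $Q_k$ delivers the desired $O(k^{-2})(|f|_2|Z|_0+|f|_1|Z|_1)$ bound.

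The technical heart of the argument is this cancellation in the commutator kernel: the crude bound $|f(y)-f(x)|\le |f|_1\delta^{1/2}$ only produces $O(k^{-3/2})|f|_1|Z|_0$, which is too weak. Getting the asymmetric structure $|f|_2|Z|_0+|f|_1|Z|_1$ requires carefully tracking how the diagonal lift of $X_f$ interacts with the phase $E^k$ and how the commutator $[X_f,\bar Z]$ loads derivatives onto either $f$ or $Z$. This book-keeping, rather than the Schur estimates themselves, is where the proof's difficulty lies.
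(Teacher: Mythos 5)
Your treatment of \eqref{eq:12} is essentially the paper's argument: the kernel of $\tfrac{i}{k}\nabla^k_{\con Z}\Pi_k$ picks up the factor $\alpha_E(\con Z,0)$ which vanishes to second order on the diagonal (this is the precise content of the almost-holomorphy of $E$), and the Gaussian plus Schur test gives the $k^{-1}|Z|_0$ bound. That part is fine.

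For \eqref{eq:13} there are three real gaps. First, your Taylor expansion $X_f(y)=X_f(x)+DX_f(x)(y-x)+O(|f|_3|y-x|^2)$ uses a third derivative of $f$, but the statement only assumes $f\in\Cl^2$ and the target bound contains only $|f|_1$ and $|f|_2$; the argument as written does not respect this regularity. Second, the Lie bracket $[X_f,\con Z]$ is a real vector field, not a section of $T^{0,1}M$, so you cannot feed it back into \eqref{eq:12}; you would have to split it into $(1,0)$ and $(0,1)$ components and deal with the $(1,0)$ part, for which $\tfrac{i}{k}\nabla^k_W\Pi_k$ is \emph{not} $\bigo(k^{-1})$ in general (only $\Pi_k\tfrac{i}{k}\nabla^k_W\Pi_k$ is, as in Lemma~\ref{lem:estim_norm_cor}, via an adjoint trick). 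Third, the advertised cancellation between the linear Taylor term $df_x(y-x)K_Q$ and the curvature term $\tfrac{1}{ik}R^k(X,\con Z)K_Q$ is asserted, not proved, and is exactly the delicate point: the curvature produces a zeroth-order factor $\om(X,\con Z)\sim\con Z(f)$ which a priori only gives $\bigo(k^{-1})$, and showing that the various pieces conspire to leave $\bigo(k^{-2})$ is the whole difficulty.

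The paper organizes this differently and more robustly. It does not commute connections on $M$ at all. Instead it proves an exact kernel identity \eqref{eq:6},
\begin{equation*}
[H_k(f),P_k(\si)] = P_k\bigl((g-\al_E(Y))\si\bigr) + \tfrac{1}{ik}P_k\bigl(\nabla^{A\boxtimes\con A}_Y\si\bigr),
\end{equation*}
where $g(x,y)=f(x)-f(y)$, $Y=(X,X)$, and records the structural fact (property \ref{it:com}) that $g-\al_E(Y)$ vanishes to second order on the diagonal with $\Cl^2$ control on $f$ — this is precisely the cancellation you were after, but packaged so that only $|f|_2$ ever appears. Then it decomposes $\Pi_k$ as a sum of $P_k(\psi\si_\ell)$ plus small remainders, notes (Equation \eqref{eq:18}) that $\tfrac{i}{k}\nabla^k_{\con Z}P_k(\si)=P_k(\al_E(\con Z,0)\si)+\tfrac{i}{k}P_k((\nabla^A_{\con Z}\boxtimes\op{id})\si)$ is again a sum of $P_k$-type operators with symbols of high vanishing order, and applies the commutator identity termwise. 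This avoids Taylor-expanding $X_f$, avoids the Lie bracket entirely, and keeps the regularity book-keeping exactly at $|f|_2|Z|_0+|f|_1|Z|_1$. Your plan could perhaps be made to work after repairing the three points above, but it would effectively reconstruct property \ref{it:com}; as written it does not constitute a proof.
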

The proof will be given in Section \ref{sec:proofs}.
A consequence of the first inequality is the following lemma.
\begin{lemma} \label{lem:estim_norm_cor}
We have for any $X \in \Cl ^1 (M, TM)$ that
$$ \Pi_k \tfrac{i}{k} \nabla^k_X \Pi_k = \bigo ( k^{-1} ) |X|_1\;.$$
\end{lemma}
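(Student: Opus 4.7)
The plan is to reduce to Theorem \ref{lem:Tu_symp} via a type decomposition and a duality (adjoint) argument. First I would split the real vector field $X$ into its $(1,0)$ and $(0,1)$ parts: set $Z = \tfrac{1}{2}(X - ijX) \in \Cl^1(M, T^{1,0}M)$, so that $X = Z + \con Z$ and $|Z|_j \leq C_j |X|_j$ for $j=0,1$. Then
\[
\Pi_k \tfrac{i}{k} \nabla^k_X \Pi_k = \Pi_k \tfrac{i}{k} \nabla^k_Z \Pi_k + \Pi_k \tfrac{i}{k} \nabla^k_{\con Z} \Pi_k,
\]
and the second summand is directly a $\bigo(k^{-1}) |Z|_0 = \bigo(k^{-1}) |X|_1$ by \eqref{eq:12} of Theorem \ref{lem:Tu_symp}.

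The remaining term $\Pi_k \tfrac{i}{k} \nabla^k_Z \Pi_k$ I would handle by passing to the adjoint. Using that $\nabla$ (on $L$) and $\nabla^A$ are Hermitian and that $Z$ is of class $\Cl^1$, the same integration by parts as in the proof of Lemma \ref{lem:cor} yields the identity
\[
\langle \nabla^k_Z s, t \rangle + \langle s, \nabla^k_{\con Z} t \rangle + \langle \op{div}(Z)\, s, t \rangle = 0
\]
for all smooth $s,t$, so $(\nabla^k_Z)^* = -\nabla^k_{\con Z} - \op{div}(Z)$. Taking the adjoint of $\tfrac{i}{k}\Pi_k \nabla^k_Z \Pi_k$ thus gives
\[
\bigl( \tfrac{i}{k}\Pi_k \nabla^k_Z \Pi_k \bigr)^* = \tfrac{i}{k}\Pi_k \nabla^k_{\con Z} \Pi_k + \tfrac{i}{k}\Pi_k \op{div}(Z)\, \Pi_k.
\]
The first piece on the right is $\bigo(k^{-1}) |Z|_0$ by Theorem \ref{lem:Tu_symp} again, and the second piece is a multiplication operator sandwiched by the projector, with operator norm at most $k^{-1} |\op{div}(Z)|_0 \leq C k^{-1} |Z|_1 \leq C' k^{-1} |X|_1$. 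Since operator norm is preserved under adjunction, combining these bounds yields $\| \Pi_k \tfrac{i}{k} \nabla^k_X \Pi_k \|_{op} \leq C k^{-1} |X|_1$.

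I do not expect a serious obstacle: the key analytic input is already packaged in Theorem \ref{lem:Tu_symp}, and the only subtle point is justifying the integration by parts under the mere $\Cl^1$ assumption on $Z$, which follows from the standard fact that $\int \mathcal L_Z(f\mu) = 0$ for $Z$ of class $\Cl^1$ (used in Lemma \ref{lem:cor}) applied to $f = h_k(s,t)$ with $s,t$ smooth, followed by extension by density/continuity of $\Pi_k$.
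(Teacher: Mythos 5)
Your proposal is correct and follows essentially the same route as the paper's proof: decompose $X = Z + \con Z$ with $Z \in T^{1,0}M$, bound the $\con Z$ term directly by Theorem~\ref{lem:Tu_symp}, and obtain the $Z$ term by adjunction, absorbing the $\tfrac{i}{k}\Pi_k\op{div}(Z)\Pi_k$ contribution into the $\bigo(k^{-1})|Z|_1$ error. The only cosmetic slip is that the adjoint formula should carry $\con{\op{div}(Z)}$ rather than $\op{div}(Z)$, which of course has the same uniform norm and does not affect the estimate.
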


\begin{proof}
Write $X = Z + \con{Z}$ with $Z$ a section of $T^{1,0}M$.
By (\ref{eq:12}), we have
\begin{gather} \label{eq:14}
\Pi_k \tfrac{i}{k} \nabla^k_{\con{Z}} \Pi_k = \bigo ( k^{-1} ) |Z|_0\;.
\end{gather}
Taking the adjoint, we get
$\Pi_k \tfrac{i}{k} ( \nabla^k_{Z} + \op{div} (Z) ) \Pi_k = \bigo ( k^{-1} ) |Z|_0$. Since $\op{div} (Z) $ is a $\bigo ( |Z|_1)$ in uniform norm, we obtain
\begin{gather} \label{eq:15}
\Pi_k\tfrac{i}{k}  \nabla^k_{Z} \Pi_k = \bigo (k^{-1} ) |Z|_1\;.
\end{gather}
Adding (\ref{eq:14}) and (\ref{eq:15}), we get the result.
\end{proof}

As a consequence, we have
\begin{gather} \label{eq:4}
T_k ( f) = T_k^c (f) + \bigo ( k^{-1}) |f|_2\;.
\end{gather}
We also deduce the quasi-multiplicativity (P3).
\begin{prop} \label{prop:P1}
For any functions $f \in \Cl^1 (M)$ and $g \in \Cl ^2 ( M)$, we have
\begin{gather*}
 T_k ( f) T_k ( g) =T_k ( fg) + \bigo ( k^{-1})(|f|_0|g|_2 + |f|_1 |g|_1)\;, \\
 T_k ( g) T_k (f) = T_k ( fg ) + \bigo ( k^{-1})(|f|_0|g|_2 + |f|_1 |g|_1)\;.
\end{gather*}
\end{prop}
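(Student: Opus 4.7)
The plan is to repeat the Kähler-case argument (Proposition \ref{prop:P1_kahler}) verbatim, substituting the symplectic replacement for Tuynman's formula (namely Lemma \ref{lem:estim_norm_cor}) in place of the exact identity of Lemma \ref{lem:cor}. The starting point is the same algebraic identity: letting $Y$ denote the Hamiltonian vector field of $g$, a direct expansion of $H_k(g) = g + \tfrac{1}{ik}\nabla^k_Y$ together with the definition $T_k(\cdot) = \Pi_k \cdot \Pi_k$ gives
\begin{equation*}
\Pi_k f [ \Pi_k , H_k(g) ] \Pi_k = T_k(f)T_k(g) - T_k(fg) + \Pi_k f \Pi_k \tfrac{1}{ik} \nabla_Y^k \Pi_k - \Pi_k \tfrac{1}{ik} \nabla_{fY}^k \Pi_k\;.
\end{equation*}
This identity is purely formal and uses nothing about the complex structure, so it holds in the general symplectic setting exactly as before.

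Next, I would estimate each piece. Theorem \ref{theo:compi} gives $\|[\Pi_k, H_k(g)]\|_{op} \leqslant Ck^{-1}|g|_2$, and since $\|\Pi_k f \Pi_k\|_{op} \leqslant |f|_0$, the left-hand side is $\bigo(k^{-1}) |f|_0 |g|_2$. For the two remaining terms on the right-hand side, the key observation is that in the symplectic case we still control the sandwich $\Pi_k \nabla^k_X \Pi_k$ with the loss of one $k$-power, provided we pay with one derivative of $X$: this is the content of Lemma \ref{lem:estim_norm_cor}, $\Pi_k \tfrac{i}{k}\nabla^k_X \Pi_k = \bigo(k^{-1})|X|_1$. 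Applied to $X=Y$ (which is $\Cl^1$ since $g$ is $\Cl^2$) and then to $X=fY$, this gives
\begin{equation*}
\Pi_k f \Pi_k \tfrac{1}{ik}\nabla^k_Y \Pi_k = \bigo(k^{-1})|f|_0 |Y|_1 = \bigo(k^{-1})|f|_0|g|_2
\end{equation*}
and
\begin{equation*}
\Pi_k \tfrac{1}{ik}\nabla^k_{fY} \Pi_k = \bigo(k^{-1})|fY|_1 = \bigo(k^{-1})(|f|_0|g|_2 + |f|_1 |g|_1)\;,
\end{equation*}
using the Leibniz estimate $|fY|_1 \leqslant C(|f|_0|Y|_1 + |f|_1|Y|_0)$ and $|Y|_j \leqslant C |g|_{j+1}$.

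Rearranging the displayed identity and plugging in the three bounds yields the first equation of the proposition. For the second, I would simply take the Hilbert-space adjoint: since $T_k(f)$ and $T_k(g)$ are self-adjoint for real $f, g$ (and $T_k(fg)$ is self-adjoint as well), $(T_k(f)T_k(g))^{*} = T_k(g)T_k(f)$, and operator norms are preserved under adjunction, so the same remainder bound holds. I do not expect a real obstacle here: everything non-trivial has been absorbed into Theorem \ref{theo:compi} and Lemma \ref{lem:estim_norm_cor}. The only place one needs minor care is the Leibniz step for $|fY|_1$, which requires the $\Cl^1$ hypothesis on $f$ and is precisely why the statement controls both $|f|_0|g|_2$ and $|f|_1|g|_1$ on the right-hand side.
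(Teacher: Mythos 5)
Your proof is correct and is essentially the paper's argument, which is stated in one line as "the same as Proposition \ref{prop:P1_kahler} except that we deduce Equations (\ref{eq:16}) from Lemma \ref{lem:estim_norm_cor} instead of Lemma \ref{lem:cor}." You have simply unwound that sentence into the explicit algebraic identity, the three operator-norm bounds, and the adjoint step for the second equation, all in agreement with the paper.
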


\begin{proof} The proof is exactly the same as the one of Proposition \ref{prop:P1_kahler} except that we deduce Equations (\ref{eq:16}) from Lemma \ref{lem:estim_norm_cor} instead of Lemma \ref{lem:cor}.
\end{proof}

Finally we show the sharp correspondence principle (P2).

\begin{prop} \label{prop:commut_correction}
For any  $f,g \in \Cl^3 (M)$,
\begin{gather*}  [ T_k ( f) , T_k ( g) ] = \tfrac{i}{k} T_k ( \{ f, g \} ) + \bigo ( k^{-2}) ( |f|_1 |g|_3 + |f|_2 |g|_2 + |f|_3|g|_1 )\;.
\end{gather*}
\end{prop}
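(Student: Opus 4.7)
The proof mirrors the K\"ahler case of Proposition \ref{prop:P2}, with Tuynman's exact identity replaced by the approximate relation (\ref{eq:4}) and supplemented by the sharp commutator estimate (\ref{eq:13}) from Theorem \ref{lem:Tu_symp}. Set $R_k(f) := T_k^c(f) - T_k(f) = \tfrac{1}{ik}\Pi_k\nabla^k_{X_f}\Pi_k$, which is $\bigo(k^{-1})|f|_2$ by Lemma \ref{lem:estim_norm_cor}. Writing $T_k(f) = T_k^c(f) - R_k(f)$ and expanding,
\begin{equation*}
 [T_k(f), T_k(g)] = [T_k^c(f), T_k^c(g)] - [T_k^c(f), R_k(g)] - [R_k(f), T_k^c(g)] + [R_k(f), R_k(g)].
\end{equation*}
By Proposition \ref{prop:P2_correction} the first summand is $\tfrac{i}{k}T_k^c(\{f,g\}) + \bigo(k^{-2})|f|_2|g|_2$, and writing $T_k^c(\{f,g\}) = T_k(\{f,g\}) + R_k(\{f,g\})$ produces the desired main term $\tfrac{i}{k}T_k(\{f,g\})$ up to an extra error $\bigo(k^{-2})|\{f,g\}|_2$, which the Leibniz rule for the Poisson bracket controls by $\bigo(k^{-2})|f,g|_{1,3}$. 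The last summand $[R_k(f), R_k(g)]$ is $\bigo(k^{-2})|f|_2|g|_2$ by the product bound.

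The core step is to bound the cross commutator $[T_k^c(f), R_k(g)]$ by $\bigo(k^{-2})|f,g|_{1,3}$. Decomposing $X_g = Z_g + \con{Z}_g$ with $Z_g = \tfrac{1}{2}(X_g - ijX_g) \in \Ci(M, T^{1,0}M)$ and integrating by parts as in the proof of Lemma \ref{lem:estim_norm_cor} yields
\begin{equation*}
 R_k(g) = -Y_{Z_g} - Y_{Z_g}^* + \tfrac{1}{ik} T_k( \op{div} Z_g ), \qquad Y_Z := \tfrac{i}{k}\Pi_k\nabla^k_{\con{Z}}\Pi_k.
\end{equation*}
On $\Hilb_k$, direct computation together with the identity $\Pi_k H_k(f)(I - \Pi_k) = [\Pi_k, H_k(f)](I - \Pi_k)$ reorganizes the commutator into
\begin{equation*}
 [T_k^c(f), Y_{Z_g}] = \Pi_k \bigl[H_k(f), \tfrac{i}{k}\nabla^k_{\con{Z}_g}\Pi_k\bigr] - [\Pi_k, H_k(f)](I - \Pi_k)\tfrac{i}{k}\nabla^k_{\con{Z}_g}\Pi_k.
\end{equation*}
The first term is $\bigo(k^{-2})(|f|_2|g|_1 + |f|_1|g|_2)$ by (\ref{eq:13}), and the second is bounded by the product of Theorem \ref{theo:compi} and inequality (\ref{eq:12}), giving $\bigo(k^{-2})|f|_2|g|_1$. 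Since $T_k^c(f)$ is self-adjoint for real $f$ (as $H_k(f)$ is, using $\op{div} X_f = 0$), the bound on $[T_k^c(f), Y_{Z_g}^*]$ follows by taking adjoints. For the Toeplitz piece $\tfrac{1}{ik}[T_k^c(f), T_k(\op{div} Z_g)]$, I write $T_k^c = T_k + R_k$ to reduce to $[T_k(f), T_k(\op{div} Z_g)]$ plus a $\bigo(k^{-1})|f|_2|g|_2$ remainder, and then apply Proposition \ref{prop:P1} with the role-swap $(u, v) = (\op{div}Z_g, f) \in \Cl^1 \times \Cl^2$, which produces $\bigo(k^{-2})(|f|_2|g|_2 + |f|_1|g|_3)$ after the $1/k$ prefactor. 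The analogous estimate for $[R_k(f), T_k^c(g)]$ contributes the complementary $|f|_3|g|_1$ term via the symmetric role-swap.

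The main obstacle is the transition from the estimate (\ref{eq:13}) on $[H_k(f), \tfrac{i}{k}\nabla^k_{\con{Z}}\Pi_k]$ to a bound on $[T_k^c(f), Y_{Z_g}]$, where an additional $\Pi_k$ appears between the two factors; this is the only point where the symplectic argument departs nontrivially from the K\"ahler one, and it is resolved by the insertion-of-$\Pi_k$ identity together with the quantitative control on $[\Pi_k, H_k(f)]$ from Theorem \ref{theo:compi}.
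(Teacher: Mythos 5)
Your proof is correct and follows essentially the same route as the paper's: the same decomposition $T_k = T_k^c - R_k$, the same expansion of the commutator, the same reduction of the cross term to an estimate on $[\Pi_k H_k(f)\Pi_k, \Pi_k\tfrac{i}{k}\nabla^k_{\con{Z}}\Pi_k]$ via insertion of $\Pi_k$, the same use of Theorem \ref{theo:compi}, (\ref{eq:12}) and (\ref{eq:13}), and the same adjoint trick plus Proposition \ref{prop:P1} applied to $\op{div}(Z)$ to handle the holomorphic piece. The only cosmetic difference is that the paper first replaces $T_k^c(f)$ by $T_k(f)$ (using Lemma \ref{lem:estim_norm_cor} and (\ref{eq:12})) before taking adjoints, whereas you keep $T_k^c(f)$ throughout the cross-term estimate; both routes cost only a harmless $\bigo(k^{-2})|f|_2|g|_2$.
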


\begin{proof} Denote by $X$ and $Y$ the Hamiltonian vector fields of $f$ and $g$.
By (\ref{eq:4}), we have $T_k ^c (\{ f,g \}) = T_k  (\{ f, g \} ) + \bigo( k^{-1}) |f,g|_3$. By Lemma \ref{lem:estim_norm_cor},
$$[ \Pi_k \tfrac{1}{ik} \nabla^k_X \Pi_k , \Pi_k \tfrac{1}{ik} \nabla^k_Y \Pi_k ] = \bigo(k^{-2}) |f|_2 |g|_2\;.$$ So by Proposition \ref{prop:P2_correction},  it suffices to show that
\begin{gather} \label{eq:9}
 [\Pi_k f \Pi_k , \Pi_k \tfrac{1}{ik} \nabla^k_ Y \Pi_k ] = \bigo (k^{-2}) (|f|_1 |g|_3 + |f|_2 |g|_2 )\;.
\end{gather}
Write $Y = Z+ \con{Z}$ with $Z$ a section of $T^{1,0}M$.  Doing a straightforward computation, we obtain
$$  [ \Pi_k H_k ( f) \Pi_k , \Pi_k \tfrac{1}{ik} \nabla^k_{\con{Z}} \Pi_k ] =  \Pi_k [ H_k(f), \Pi_k ] \tfrac{1}{ik} \nabla^k_{\con{Z}} \Pi_k + \Pi_k [ H_k(f),  \tfrac{1}{ik} \nabla^k_{\con{Z}} \Pi_k ] \Pi_k\;.$$
By Theorem \ref{theo:compi} and Equation (\ref{eq:12}), the first term of the left hand side is a $\bigo( k^{-2}) |f|_2 |Z|_0$. By (\ref{eq:13}), the second term is  a $\bigo( k^{-2})  ( |f|_1 |Z|_1 + | f|_2 | Z|_0 ) $. Consequently,
\begin{gather} \label{eq:5}
 [ \Pi_k H_k ( f) \Pi_k , \Pi_k \tfrac{1}{ik} \nabla^k_{\con{Z}} \Pi_k ] =  \bigo ( k^{-2})  ( |f|_1 |Z|_1 + | f|_2 | Z|_0 )\;.
\end{gather}
Using Lemma \ref{lem:estim_norm_cor} and Equation (\ref{eq:12}), we deduce from (\ref{eq:5}) that
\begin{gather} \label{eq:11}
 [ \Pi_k f \Pi_k , \Pi_k \tfrac{1}{ik} \nabla^k_{\con{Z}} \Pi_k ] =  \bigo ( k^{-2}) ( |f|_1 |Z|_1 + | f|_2 | Z|_0 )\;.
\end{gather}
Taking the adjoint, we get
\begin{gather} \label{eq:10}
  [ \Pi_k f \Pi_k , \Pi_k \tfrac{1}{ik} (\nabla^k_{Z} + \op{div} (Z)) \Pi_k ] =  \bigo ( k^{-2}) ( |f|_1 |Z|_1 + | f|_2 | Z|_0 )\;.
\end{gather}
By Proposition \ref{prop:P1}, $[T_k (f) , T_k ( v) ] = \bigo ( k^{-1}) (|f|_1|v|_1 + |f|_2|v|_0)$  for any function $v \in \Cl^1 (M)$. Applying this to $v = \op{div} (Z)$, we deduce from (\ref{eq:10}) that
\begin{gather} \label{eq:7}
 [ \Pi_k f \Pi_k , \Pi_k \tfrac{1}{ik} \nabla^k_{Z}  \Pi_k ] =  \bigo ( k^{-2})   ( |f|_1 |Z|_1 + | f|_2 | Z|_0 )\;.
\end{gather}
Finally, Equation (\ref{eq:9}) follows from Equations (\ref{eq:11}) and (\ref{eq:7}).
\end{proof}

\section{Proofs of Theorems \ref{theo:compi} and \ref{lem:Tu_symp}} \label{sec:proofs}

We will now use explicitely the structure of the kernel $B_k$ given in Theorem \ref{theo:B_kernel}.

\subsection{The fundamental estimates} \label{sec:fund-estim}

Consider a section $E$ satisfying the same assumptions as in Theorem \ref{theo:B_kernel}.
The existence of such a section is proved in Lemma 3.2 of \cite{oim_symp}.  Let $U$ be the open set where $E$ does not vanish. Let $\varphi \in \Ci (U)$ and $\al_E \in \Om^1(U)$ be defined by
\begin{gather} \label{eq:def_al_E_phi}
 \varphi = - 2 \ln |E|\;, \qquad \nabla^{L \boxtimes \con{L}} E = \tfrac{1}{i} \al_E \otimes E\;.
\end{gather}
Here $\nabla^{L \boxtimes \con{L}}$ is the connection of $L \boxtimes \con{L}$ induced by $\nabla$. So for any vector fields $X$ and $Y$ of $M$, $\nabla^{L \boxtimes \con{L}}_{(X,Y)} = \nabla_X \boxtimes \op{id} + \op{id} \boxtimes \con{\nabla}_Y$, where $(X,Y)$ is the vector field of $M^2$ sending $(p,q)$ into $X(p) \oplus Y(q)$.

By Theorem \ref{theo:B_kernel}, $\varphi$ vanishes along the diagonal $\De$ of $M^2$ and is positive outside $\De$. Furthermore,  $\varphi$ and $\al_E$ satisfy the following properties
\begin{enumerate} [label=(\roman{*}), ref=(\roman{*})]
\item $\al_E$ vanishes on $T_\De (M ^2)$; \label{it:al}
\item  $\varphi$ vanishes to second order along $\De $. For any $x\in M$, the kernel of the Hessian of $\varphi$ at $(x, x)$ is the tangent space to the diagonal; \label{it:phi}
\item \label{it:com}
for any $f \in \Ci(M)$ with Hamiltonian vector field $X$, $g  - \al_E(X,X) $ vanishes to second order along $\De $, where  $g(x,y) = f(x) - f(y)$.
\end{enumerate}
For a proof of these properties, cf. Proposition 2.15, Remark 2.16 and Proposition 2.18 of \cite{oim_symp}.

For any continuous section $\si$ of $A \boxtimes \con{A}$ and $k \in \N$, we let $P_k( \si)$ be the operator acting on $\Cl ^0 ( M , L^k \otimes A)$ with Schwartz kernel $k^n E^k \otimes  \si $. Here and in the sequel, $\mu$ is the Liouville form $\om^n / n!$.

\begin{lemma} \label{lem:estim1}
For any compact subset $K$ of $U$, for any $p \in \N$, there exists $C_{K, p }$ such that for any $\si \in \Cl^0 ( M^2, A \boxtimes \con {A})$ whose support is contained in $K$, we have
$$  \forall k \in \N, \qquad \| P_k (\si)  \|_{op}  \leqslant C_{K, p} | \si |_{K,p}   k^{ - p /2}\;, $$
where $| \si |_{K,p} \in \R_{+}\cup \{ \infty \}$ is the supremum of $|\si (z) | (\varphi (z) )^{-p /2}$ over $K \setminus \De $.
\end{lemma}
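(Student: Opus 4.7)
The plan is to apply the Schur test (Proposition \ref{prop:norme}) to the operator $P_k(\sigma)$, whose Schwartz kernel has pointwise norm bounded by $k^n e^{-k\varphi(x,y)/2} |\sigma(x,y)|$ thanks to $|E| = e^{-\varphi/2}$. Using the hypothesis $|\sigma(x,y)| \leq |\sigma|_{K,p}\, \varphi(x,y)^{p/2}$ on $K \setminus \Delta$ (and $\sigma=0$ outside $K$), matters reduce to proving
\begin{gather*}
\sup_{x \in M} \int_M k^n e^{-k\varphi(x,y)/2} \varphi(x,y)^{p/2}\, \mu(y) \leq C'_{K,p}\, k^{-p/2}\;,
\end{gather*}
together with the analogous bound obtained by integrating in $x$ (which will follow by the same argument, since $\varphi(x,y)$ and $\varphi(y,x)$ satisfy the same vanishing properties on the diagonal).

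To estimate this integral I would split the region of integration into two pieces. On the part of $K$ bounded away from the diagonal $\Delta$, $\varphi$ is bounded below by a positive constant, so the factor $e^{-k\varphi/2}$ contributes an $O(k^{-\infty})$ term, easily absorbed into the claimed bound. Near the diagonal one uses property (ii): $\varphi$ vanishes to exactly second order along $\Delta$ with transversal Hessian positive definite. By compactness of $K$, there exist constants $c,\delta>0$ such that $\varphi(x,y) \geq c\, d(x,y)^2$ for all $(x,y) \in K$ with $d(x,y) \leq \delta$.

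The main computation is then a Gaussian integral. Working in a finite cover of $K$ by Darboux (or just coordinate) charts and using the change of variable $y = x + u/\sqrt{k}$, one obtains
\begin{gather*}
\int_{d(x,y)\leq \delta} e^{-k\varphi(x,y)/2} \varphi(x,y)^{p/2}\, \mu(y) \leq C \int_{\R^{2n}} e^{-c|u|^2/2} \bigl(|u|/\sqrt{k}\bigr)^p\, k^{-n}\, du = C''\, k^{-n - p/2}\;,
\end{gather*}
uniformly in $x$. Multiplying by the prefactor $k^n$ yields the desired $k^{-p/2}$. The main technical obstacle is merely ensuring that the constants $c$, $\delta$ and the Jacobian factors can be chosen uniformly in $x$ on the compact set $K$; this follows from compactness and smoothness of $\varphi$. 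Applying Schur's test in both variables then gives $\|P_k(\sigma)\|_{op}^2 \leq C_1 C_2 \leq (C_{K,p} |\sigma|_{K,p} k^{-p/2})^2$, completing the proof.
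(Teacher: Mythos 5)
Your argument is correct and follows essentially the same path as the paper: Schur's test reduces the bound to a kernel integral, property (ii) supplies the two-sided comparison $c\,d(x,y)^2 \leq \varphi(x,y) \leq C\,d(x,y)^2$ near the diagonal, the substitution $y = x + u/\sqrt{k}$ produces the Gaussian integral giving $k^{-n-p/2}$, and the far-from-diagonal region contributes $\bigo(k^{-\infty})$. The only difference is organizational — you split the integration region directly into near/far pieces, while the paper treats the diagonal-free case, the single-chart case, and then patches with a partition of unity — but the analytic content is identical.
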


\begin{proof}  Assume first that $K$ does not intersect the diagonal of
  $M$. Then $\varphi$ takes positive values on $K$ so there exists $C>0$
  such that $1/C \leqslant \varphi \leqslant C$ on $K$. Consequently
$$|E^k  \otimes  \si | \leqslant | \si |_{K,p} C^{p/2} e^{-k/(2C)}   $$
on $K$ and  we conclude easily.

Assume now that $K \subset V^2$ where $(V, x_i)$ is a coordinate system of $M$ such that $V^2 \subset U$.   By property \ref{it:phi} and the fact that $\varphi$ is positive outside the diagonal, there exists $C>0$ such that
\begin{gather} \label{eq:phi_loc}
 |x-y|^2 / C  \leqslant \varphi (x, y ) \leqslant C |x-y|^2
\end{gather}
on $K$. If the support of $\si$ is contained in $K$, we obtain that  $ | \si (x,y) | \leqslant C^{p /2} | \si |_{K, p} |x - y|^p $ on $V^2$.
Identify $V$ with an open set of $\R^{2n}$. Then we have
\begin{xalignat*}{2}
\int_M \bigl| P_k (\si) ( x, y ) \bigr| \mu (y) & \leqslant k^{n} C^{p /2}
| \si |_{K, p}     \int_V e^{- k |x-y|^2/C} |x - y | ^p \; dy \\ &
\leqslant k^{n} C^{p /2} | \si |_{K, p}    \int_{\R^{2n}} e^{- k |x-y|^2/C}
|x - y | ^p \; dy \\
& =  k^{-p/2}  C^{p /2} | \si |_{K, p}   \int_{\R^{2n}} e^{- |x-y|^2/C} |x
- y | ^p \; dy
\intertext{by doing a convenient change of variable}
& = k^{-p /2} C_1  | \si |_{K, p}
\end{xalignat*}
In the same way we show that
$$ \int_M \bigl| P_k (\si)  ( x, y ) \bigr| \mu (x) \leqslant  k^{-p/2} C_2  |\si |_{K, p} $$
for some $C_2 >0$ independent of $\si$ and $k$.  We conclude by applying Proposition \ref{prop:norme} that
$$ \| P_k (\si) \|_{op} \leqslant C  |\si |_{K,p} k^{ - p /2}  $$
with $C = \max ( C_1, C_2)$.

Consider now any compact subset $K$ of $U$. The diagonal $\Delta$ being
compact, there exists a finite family $(V_i)_{i=1, \ldots , N} $ of open sets of $M$ such
that each $V_i$ is the domain of a coordinate system, $V_i^2 \subset U$ and
$\Delta  \subset \cup V_i^2$. Then $U$ is covered by the $(N+1)$ open sets
$U_0 =  U \setminus \De $, $U_i = V_i^2$, $i=1,\ldots,N$.
Choose a subordinate partition of unity $f_i \in \Ci (U)$, $i =0, \ldots,
N$. If $\si$ is supported in $K$, $f_i \si$ is supported in $K \cap
\op{supp} f_i$ and we have by the first part of the proof for $i=0$ and the  second part for $i =1, \ldots, \ell$,
$$\| P_k ( f_i \si) \|_{op} \leqslant C_i |f_i \si |_{K \cap \op{supp} f_i, p} k^{-p/2} \leqslant C_i |\si |_{K, p} k^{-p/2}$$
for some constants $C_i >0$.
\end{proof}

Recall that we denote by $H_k (f)$ the Kostant Souriau operator (\ref{eq:def_Kostant_souriau}).
\begin{lemma} \label{lem:estim2}
For any $p \in \N$, for any $\si \in \Ci ( M^2, A \boxtimes \con{A})$ supported in $U$ and vanishing to order $p$ along the diagonal, there exists $C>0$ such that for any $ f\in \Cl^2 (M)$, we have
$$  \| P_k (\si)  \|_{op}  \leqslant C  k^{-p/2} , \qquad \bigl\| [H_k(f) , P_k ( \si)] \bigr\|_{op} \leqslant  C k^{-p/2 -1} |f|_2\;.$$
\end{lemma}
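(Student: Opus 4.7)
\smallskip
\noindent\textbf{Proof plan.} The first inequality is an immediate application of Lemma \ref{lem:estim1}: since $\si$ is smooth, compactly supported in $U$, and vanishes to order $p$ along $\De$, property \ref{it:phi} yields $|\si|_{\supp\si,p}<\infty$, and the lemma delivers the bound $Ck^{-p/2}$.

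The commutator estimate will follow by computing the Schwartz kernel of $[H_k(f),P_k(\si)]$ explicitly and then decomposing it into two pieces, each of which is handled by Lemma \ref{lem:estim1} with an improved vanishing order along $\De$. Using $H_k(f)=f+\tfrac{1}{ik}\nabla^k_X$, the Schwartz-kernel rules recalled after Proposition \ref{prop:norme} with $\op{div} X=0$ (since $X$ is Hamiltonian), and the identity $\nabla^{L\boxtimes\con{L}}E=\tfrac{1}{i}\al_E\otimes E$ applied to $E^k$, I expect the kernel of $[H_k(f),P_k(\si)]$ to equal
$$k^n E^k \bigl[(g-\al_E(X,X))\si\bigr] + \tfrac{1}{ik}\,k^n E^k\, \nabla^A_{(X,X)}\si,$$
where $g(x,y)=f(x)-f(y)$. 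Here the combination of the $k^n g E^k\si$ term from $[f,P_k(\si)]$ with the $-k^n \al_E(X,X) E^k\si$ term from $\tfrac{1}{ik}[\nabla^k_X,P_k(\si)]$ improves the vanishing order from $p$ to $p+2$ thanks to property \ref{it:com}.

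For the first summand, property \ref{it:com} says that $g-\al_E(X,X)$ vanishes to second order along $\De$ with relevant semi-norm bounded by $|f|_2$, so $(g-\al_E(X,X))\si$ vanishes to order $p+2$. Lemma \ref{lem:estim1} with $p$ replaced by $p+2$ then delivers $C|f|_2 k^{-p/2-1}$.

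The second summand carries the prefactor $\tfrac{1}{ik}$, so it suffices to prove $\|P_k(\nabla^A_{(X,X)}\si)\|_{op}\leqslant C|f|_2 k^{-p/2}$. This is the step I expect to be the main obstacle, because the naive count of vanishing orders gives only $p-1$ and would be short by a factor of $k^{-1/2}$. The saving observation is that $(X,X)$ is tangent to the diagonal along $\De$: in local coordinates $(u,v)$ around a diagonal point with $\De=\{v=0\}$, writing $(X,X)=A\partial_u+B\partial_v$, one has $B|_{v=0}=0$, so both $A\partial_u\si$ and $B\partial_v\si$ vanish to order $p$ along $\De$ (the first because $\partial_u\si$ preserves the $|v|^p$ factor, the second because $B$ carries an extra factor of $v$ that compensates the one order lost to $\partial_v$). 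The resulting semi-norm is controlled by $|f|_2$, since the one derivative of $X$ enters through $B$. Lemma \ref{lem:estim1} then completes the bound, and summing the two contributions yields the desired estimate.
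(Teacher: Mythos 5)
Your proposal is correct and follows essentially the same route as the paper: the same Schwartz-kernel identity (their Eq.~(\ref{eq:6}), with $Y=(X,X)$), the same use of property~\ref{it:com} to gain two orders on the $(g-\al_E(Y))\si$ term, and the same observation that tangency of $(X,X)$ to the diagonal upgrades $\nabla^{A\boxtimes\con{A}}_{(X,X)}\si$ from order $p-1$ to order $p$ with a factor $|f|_2$ (the paper's Eq.~(\ref{eq:8}), phrased there via the splitting into $(dy_i-dx_i)$ and $dx_i$ components rather than your diagonal-adapted $(u,v)$ coordinates). The only step you elide is the routine partition-of-unity argument to pass from the local coordinate estimate to the global one.
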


\begin{proof}
 It is a consequence of Lemma \ref{lem:estim1}. Set $K= \op{supp} \si$. Using Property \ref{it:phi} as in Equation (\ref{eq:phi_loc}), if follows from Taylor formula that
$ | \si |_{K, p} $ is finite, which proves the first estimate.

To prove the second one, we introduce  $g(x,y) = f(x) - f(y)$ and the vector field $Y= (X,X)$ of $M^2$, where $X$ is the Hamiltonian vector field of $f$.
Then we have on $U$
$$ \bigl[ ( f - i \nabla_X ) \boxtimes \op{id} - \op{id} \boxtimes (f + i \nabla_X ) \bigr]  E = ( g - \al_E (Y)) E\;.$$
Thus
$$ \bigl[ ( f +\tfrac{1}{ik} \nabla^k_X ) \boxtimes \op{id} - \op{id} \boxtimes (f - \tfrac{1}{ik} \nabla^k_X ) \bigr] ( E^k \otimes  \si)
= E^k  \bigl( \bigl( g - \al_E (Y) + \tfrac{1}{ik}  \nabla^{ A \boxtimes \con{A}}_Y \bigr) \si \bigr)
$$
Consequently, using the basic facts on Schwartz kernels recalled in Section \ref{sec:schwartz-kernels}, we obtain
\begin{gather} \label{eq:6}
  [ H_k (f) , P_k ( \si) ] = P_k \bigl( (g - \al_E (Y) ) \si \bigr)  + \tfrac{1}{ik} P_k \bigl(  \nabla^{ A \boxtimes \con{A}}_Y \si \bigr)\;.
\end{gather}
We claim that there exists $C>0$ such that for any $f \in \Cl^2(M)$, we have
\begin{gather} \label{eq:e2}
 |g - \al_E (Y)| \leqslant C \varphi |f|_2
\end{gather}
on $K$. This has the consequence that $ |(g + \al_E (Y) ) \si|_{K, p+2} \leqslant C | f|_2 |\si|_{K,p}$. So by Lemma \ref{lem:estim1}, the first term of the right hand side of (\ref{eq:6}) is a $\bigo( k^{-p/2 -1}) |f|_2$.
To prove Equation (\ref{eq:e2}), introduce a coordinate system $(V,x_i)$ such that the closure of $V$ is a compact subset of $U$. We have on $V^2$
 $$ g(x,y) = \sum_{i=1}^{2n} g_i (x) (y_i - x_i) + \bigo (\varphi ) |f|_2\;,$$
for some functions $g_i \in \Cl^1 ( V)$. Similarly, by Property \ref{it:al}, $\al_E $ vanishes along the diagonal $\De$, thus
$$ \al_E (Y) = \sum_{i=1}^{2n} h_i(x)   (y_i - x_i) + \bigo ( \varphi) |X|_0\;.$$
By Property \ref{it:com}, $( g - \al_E (Y))$ vanishes to second order along $\De$, so  for any $i$,  $g_i (x) = h_i (x) $.  Equation (\ref{eq:e2}) follows.

We will show that there exists $C>0$ such that for any vector field $Z$ of $M^2$ tangent to $\De$,
\begin{gather} \label{eq:8}
 | \nabla^{A \boxtimes \con{A}}_Z \si | \leqslant C \varphi^{p/2} |Z|_1\;.
\end{gather}
By Lemma \ref{lem:estim1}, this has the consequence that the second term of the right hand side of (\ref{eq:6}) is a $\bigo ( k^{-p/2-1}) |f|_2 $, which concludes the proof.

Let us prove (\ref{eq:8}). We denote by $\bigo(N)$ any section vanishing to order $N$ along the diagonal. Observe  that for any vector field $Z$ of $M^2$, $\nabla_{Z}^{A \boxtimes \con{A}} \si$ is a $\bigo( p-1)$. Whenever $Z$ is tangent to $\De$, $\nabla_{Z}^{A \boxtimes \con{A}}$ is a $\bigo( p)$. So if $(V,x_i)$ is a coordinate system as above,
$$ \nabla^{A \boxtimes \con{A}} \si = \sum_{i=1}^{2n} (dy_i - dx_i) \otimes a_i + dx_i \otimes b_i\;,
$$
where $ a_i = \nabla^{A \boxtimes \con{A}}_{\partial_{y_i}} \si$ is a $\bigo( p-1)$ and $b_i = \nabla^{A \boxtimes \con{A}}_{\partial_{x_i} + \partial_{y_i}} \si$ is a $\bigo (p)$.

Now there exists $C'>0$ such that for any vector field $Z$ of $M^2$ tangent to $\De$ of class $\Cl^1$ and supported in $V^2$, we have
$$ | ( d y_i - dx_i ) ( Z) | \leqslant   C' \varphi^{1/2} |Z|_1\;, \qquad | dx_i (Z) | \leqslant C' |Z|_0\;.$$
This proves (\ref{eq:8}) for the vector fields supported in $V^2$. We prove the general case with a partition of unity argument.
\end{proof}

\subsection{The proof}

Recall that we denote by $B_k$ the Schwartz kernel of $\Pi_k$. Let $\psi \in \Ci ( M\times M , \R)$ be equal to $1$ on a neighborhood of the diagonal and supported in $U$. Let $R_k$ be the operator with Schwartz kernel $(1- \psi) B_k$. Introduce the same family $(\si_\ell, \ell \in \Z)$ as in Theorem \ref{theo:B_kernel}. For any $m \in \N$, introduce the operator $R_{m,k}$ so that
\begin{gather} \label{eq:somme}
\Pi_k =  (2\pi)^{-n} \sum_{\ell \in \Z \cap [ -m ,m/2] } k^{-\ell}   P_k ( \psi \si_{\ell} ) + R_k + R_{m,k}\;.
\end{gather}
Each term of the right hand side of  (\ref{eq:somme}) will be denoted generically by $Q_k$. We will prove that when $m$ is sufficiently large, we have
for any vector field $Z \in \Cl^\infty ( M ,T^{1,0}M)$
\begin{gather}
\label{eq:A1}
\bigl[ H_k ( f) , Q_k \bigr] = \bigo( k^{-1}) |f|_2, \\ \label{eq:A2}
\tfrac{i}{k} \nabla^k_{\con{Z}} Q_k = \bigo ( k^{-1}) \\ \label{eq:A3}
\bigl[ H_k ( f) , \tfrac{i}{k} \nabla^k_{\con{Z}} Q_k \bigr] = \bigo ( k^{-2})|f|_2\;.
\end{gather}
After that, we will prove that (\ref{eq:A2}) holds for any $Z$ continuous, and (\ref{eq:A3}) holds for any $Z$ of class $\Cl ^1$. Finally we will make explicit the dependence in $Z$ of the $\bigo$.

\subsubsection*{The principal terms}

For any $\ell \in \Z$, let $Q_{\ell,k} = k^{-\ell} P_k ( \psi \si_{\ell})$. By Lemma \ref{lem:estim2}, we have
\begin{gather}
 [ H_k (f) , Q_{ \ell ,k} ] = \begin{cases} \bigo ( k^{-\ell -1 }) |f|_2 \text{ if } \ell \geqslant 0\;, \\ \bigo ( k^{ \ell/2 -1 } )|f|_2 \text{ if } \ell \leqslant 0\;.
\end{cases}
\end{gather}
This proves that $Q_{ \ell ,k}$ satisfies (\ref{eq:A1}). To prove the remaining estimates, we use that
\begin{gather}\label{eq:18}
\tfrac{i}{k} \nabla^k_{\con{Z}} Q_{\ell ,k} =  P_k (  \al_E(\con{Z},0) \psi \si _{\ell}) + \tfrac{i}{k}  P_k  ((\nabla_{\con{Z}}^A \boxtimes \op{id}) \psi \si_{\ell}  \bigr)\;,
\end{gather}
where $\al_E$ has been introduced in (\ref{eq:def_al_E_phi}). By Theorem \ref{theo:B_kernel}, $\al_E ( \con{Z}, 0)$ vanishes to second order along the diagonal.
By Lemma \ref{lem:estim2}, it comes that
\begin{gather}
\tfrac{i}{k} \nabla^k_{\con{Z}} Q_{ \ell ,k}  = \begin{cases} \bigo ( k^{-\ell -1 })  \text{ if } \ell \geqslant 0\;, \\ \bigo ( k^{ \ell/2 -1 } ) \text{ if } \ell \leqslant 0
\end{cases}
\end{gather}
and
\begin{gather}
 \bigl[ H_k(f), \tfrac{i}{k} \nabla^k_{\con{Z}} Q_{ \ell ,k} \bigr]  = \begin{cases} \bigo ( k^{-\ell -2 }) |f|_2  \text{ if } \ell \geqslant 0\;, \\ \bigo ( k^{ \ell/2 -2 } ) |f|_2 \text{ if } \ell \leqslant 0\;,
\end{cases}
\end{gather}
which prove (\ref{eq:A2}) and (\ref{eq:A3}) for $Q_k = Q_{k,\ell}$.

\subsubsection*{The remainders}

Denote by $B'_k = ( 1- \psi) B_k$ and $B_{m,k}$ the Schwartz kernels of $R_k$ and $R_{m,k}$ respectively. Let $\nabla^k$ be the connection of $A_k \boxtimes \con{A}_k$ induced by the connections of $A$ and $L$.

Recall the class $\bigo_{\infty} ( k^{N})$ introduced after Theorem \ref{theo:B_kernel}. Set $\bigo_{\infty} (k^{-\infty}) := \cap_{N>0} \bigo_{\infty} ( k^{-N})$.

\begin{lemma} \label{lem:remainder}
 $(B_k)$ belongs to $\bigo_{\infty} ( k^{-\infty})$. $(B_{m,k})$ belongs to $\bigo_{\infty} ( k^{n-(m+1)/2})$. In particular, for any smooth vector fields $X_1$, $X_2$ of $M^2$, we have for any $N$
\begin{gather*}
 \nabla_{X_1}^k B'_k = \bigo( k^{-N}), \qquad \nabla_{X_1}^k \nabla_{X_2}^k B'_k = \bigo( k^{-N})\;, \\
 \nabla_{X_1}^k B_{m,k} = \bigo( k^{n +1/2 -m/2} ) , \qquad \nabla_{X_1}^k \nabla_{X_2}^k  B_{m,k} = \bigo( k^{n +3/2 - m/2 } )
\end{gather*}
uniformly on $M^2$.
\end{lemma}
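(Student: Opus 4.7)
The argument is a direct bookkeeping from the asymptotic expansion of Theorem \ref{theo:B_kernel}. For each $m \in \N$, write
$$B_k = \Bigl( \tfrac{k}{2\pi} \Bigr)^n E^k \otimes \Sigma_{m,k} + S_{m,k}, \qquad \Sigma_{m,k} := \sum_{\ell \in \Z \cap [-m, m/2]} k^{-\ell} \sigma_\ell,$$
where, by Theorem \ref{theo:B_kernel}, $(S_{m,k}) \in \bigoinf(k^{n - (m+1)/2})$.

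First I would identify $B_{m,k}$ explicitly. Subtracting $B'_k = (1-\psi) B_k$ and $(k/2\pi)^n \sum_\ell k^{-\ell} \psi \sigma_\ell E^k = (k/2\pi)^n \psi E^k \otimes \Sigma_{m,k}$ from $B_k$, the principal parts cancel and one finds $B_{m,k} = \psi \, S_{m,k}$. Since $\psi$ is smooth and multiplication by a smooth compactly supported function preserves the class $\bigoinf$, it follows that $(B_{m,k}) \in \bigoinf(k^{n-(m+1)/2})$. The two stated bounds for $\nabla^k_{X_1} B_{m,k}$ and $\nabla^k_{X_1}\nabla^k_{X_2} B_{m,k}$ then follow from the very definition of $\bigoinf$: each covariant derivative costs at most one factor of $k$ (since the connection of $L^k$ has connection form proportional to $k$), giving the orders $n - (m+1)/2 + 1 = n+1/2 - m/2$ and $n - (m+1)/2 + 2 = n+3/2 - m/2$ respectively.

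Next I would handle $B'_k = (1-\psi) B_k$. The support of $1-\psi$ is a compact set $K \subset M^2$ disjoint from a neighborhood of the diagonal $\Delta$, so by the property $|E(x,y)| < 1$ for $x \neq y$ stated in Theorem \ref{theo:B_kernel}, there exists $r \in (0,1)$ with $|E| \leqslant r$ on $K$. Using the identity $\nabla^{L \boxtimes \overline{L}} E = \tfrac{1}{i} \alpha_E \otimes E$ and Leibniz, any $N$-th covariant derivative of $E^k$ is a sum of terms of the form $k^j \cdot (\text{bounded}) \cdot E^k$ with $j \leqslant N$, hence on $K$ bounded by $C_N k^N r^k$, which is $o(k^{-M})$ for every $M$. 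Plugging the expansion of $B_k$ into $(1-\psi) B_k$ and choosing $m$ large, the remainder $(1-\psi) S_{m,k}$ is in $\bigoinf(k^{-M})$ for any $M$, while the principal part $(1-\psi)(k/2\pi)^n E^k \otimes \Sigma_{m,k}$ is dominated to all orders by the exponential decay $r^k$. This shows $(B'_k) \in \bigoinf(k^{-\infty}) := \cap_N \bigoinf(k^{-N})$, which in particular yields the claimed bounds $\nabla^k_{X_1} B'_k = \bigo(k^{-N})$ and $\nabla^k_{X_1}\nabla^k_{X_2} B'_k = \bigo(k^{-N})$ for every $N$.

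The only slightly delicate point is the interplay between the polynomial blow-up $k^N$ produced by covariantly differentiating the $k$-dependent section $E^k$ and the exponential decay $r^k$ away from the diagonal; once the pointwise formula $\nabla E^k = \tfrac{k}{i} \alpha_E \otimes E^k$ on $U$ is recorded, the fact that exponentials defeat polynomials makes this routine, and no further analytic input (like a steepest-descent analysis) is needed.
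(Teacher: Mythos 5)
Your argument is correct, and the identification $B_{m,k}=\psi S_{m,k}$ (with $S_{m,k}$ the remainder in Theorem \ref{theo:B_kernel}) matches the paper's; but your route to $(B'_k)\in\bigoinf(k^{-\infty})$ is genuinely different. You estimate all covariant derivatives of the expansion directly, letting the exponential decay $|E|^k\leqslant r^k$ on $\supp(1-\psi)$ defeat both the polynomial factors produced by differentiating $E^k$ and the powers $k^{-\ell}$ coming from $\Sigma_{m,k}$. The paper instead establishes only a $\Cl^0$ bound $(B'_k)=\bigo(k^{-\infty})$ from $|E|<1$ off the diagonal (taking $m$ large for each fixed order), notes the trivial membership $(B'_k)\in\bigoinf(k^n)$, and then invokes the reusable abstract fact $\bigoinf(k^{n})\cap\bigo(k^{-\infty})=\bigoinf(k^{-\infty})$, proved by iterating a Landau-type interpolation inequality $\sum_{|\al|=1}\sup_K|\partial^\al f|\lesssim(\sup_V|f|)^{1/2}(\sup_V|f|+\sum_{|\al|=2}\sup_V|\partial^\al f|)^{1/2}$. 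Both approaches work and are comparable in length; the paper's is more modular in that it never needs to revisit the structure of the expansion when bounding derivatives, while yours is more self-contained and avoids the interpolation lemma.

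Two small points. First, the identity $\nabla^{L\boxtimes\con{L}}E=\tfrac{1}{i}\al_E\otimes E$ holds only on the open set $U$ where $E$ does not vanish, whereas $\supp(1-\psi)$ need not be contained in $U$, and near the zero set of $E$ the one-form $\al_E$ blows up, so the ``$k^j\cdot(\text{bounded})\cdot E^k$'' bookkeeping is not literally valid there. The fix is immediate: use the global Leibniz formula $\nabla(E^k)=kE^{k-1}\nabla E$ (and its iterates), valid on all of $M^2$ since $\nabla E$ is a smooth, hence bounded, section; on $\supp(1-\psi)$ this yields $|\nabla^p(E^k)|\lesssim k^p r^{k-p}$, which is all you need. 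Second, you silently (and correctly) read $B'_k$ in place of $B_k$ in the lemma's opening assertion --- that is indeed a typo in the paper, since $B_k$ itself is only $\bigoinf(k^n)$ and every bound in the ``in particular'' clause concerns $B'_k$.
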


\begin{proof} Observe first  that if $(\Psi( \cdot , k))$ belongs to $ \bigo_{\infty}( k^{N})$ and $\xi \in \Ci ( M)$, then $( \xi \Psi ( \cdot, k ))$ belongs to $\bigo_{\infty} ( k^N)$. The remainder in (\ref{eq:kerprojdevas}) being in  $\bigo_{\infty} ( k^{n-(m+1)/2})$, the same holds for $(B_{m,k})$. By the same reason, $(B_k)$ being in $\bigo_{\infty} (k^n)$, the same holds for $(B'_k)$. Since the pointwise norm of the section $E$ appearing in (\ref{eq:kerprojdevas}) satisfy $|E|<1$ outside the diagonal, $(B_k)$ is a $\bigo ( k^{-\infty})$ on any compact set not intersecting the diagonal. So $(B'_k)$ is in $\bigo ( k^{-\infty})$. This actually implies that $(B'_k)$ is in $\bigo_{\infty} ( k^{-\infty})$. Indeed, $\bigo_{\infty} ( k^{n} ) \cap \bigo ( k^{-\infty}) = \bigo_{\infty} ( k^{-\infty}) $, which follows from  the basic interpolation formula:
 for any open set $V$ of $\R^{m}$ and compact subset $K$ of $V$, there exists $C>0$ such that for any smooth function $f$ on $V$, we have
$$ \sum_{|\al|=1} \op{sup} _K  |\partial ^\al f| \leqslant C \Bigl( \op{Sup}_V |f| \Bigr)^{1/2}   \Bigl( \op{sup}_V |f| +\sum_{|\al|=2}  \op{sup}_V  |\partial ^\al f|   \Bigr)^{1/2}\;.$$
A proof may be found \cite{Sh}, Lemma 3.2.
\end{proof}


By writing the Schwartz kernels of $[H_k (f) , Q_k]$, $\nabla^k_{\con{Z}}Q_k$ and $[H_k (f) ,\nabla^k_{\con{Z}}Q_k]$ in terms of the Schwartz kernel of $Q_k$, we deduce from Lemma \ref{lem:remainder}  that when $m$ is sufficiently large, $R_k$ and $R_{m,k}$ satisfy (\ref{eq:A1}), (\ref{eq:A2}), (\ref{eq:A3})  for smooth $f$; so far, without specifying the dependence of the $\bigo$'s on $f$.

To make explicit this dependence, we use the following fact. Consider any family $(\tau_k \in \Ci ( M^2, A_k \boxtimes \con{A}_k) , \; k \in \N)$ and assume that there exists $N \in \R$ such that for any smooth vector field $X$ of $M^2$, we have
$$ \nabla^k_X \tau_k = \bigo ( k^{-N}) $$
uniformly on $M^2$. Then there exists $C>0$, such that for any continuous vector field $X$,
\begin{gather} \label{eq:19}
 | \nabla^k_X \tau_k | \leqslant C k^{-N} |X|_0
\end{gather}
on $M^2$. To prove that, we write $X$ in local smooth frames and use that $\nabla^k_{X_1+ X_2} = \nabla^k_{X_1} + \nabla^k_{X_2}$ and $\nabla^k_{gX} = g \nabla^k_X$. This proves (\ref{eq:A1}) and (\ref{eq:A3})  for $Q_k= R_k$ or $ R_{m,k}$ with actually $|f|_1$ instead of $|f|_2$, since $|X_0|$ only depends on the first derivatives of $f$.

\subsubsection*{Dependence in $Z$}

We claim that if $(Q_k, k \in \N)$ satisfies (\ref{eq:A2}) for any  $Z \in \Cl^\infty (M, T^{1,0}M)$, then for any $Z \in \Cl^{0} ( M , T^{1,0} M)$, we have
$$ \tfrac{i}{k} \nabla^k_{\con{Z}} Q_k = \bigo ( k^{-1})|Z|_0\;.
$$
The proof is the same as the one to get (\ref{eq:19}): write $Z$ in local smooth frames of $T^{1,0} M$.

Similarly, assume that (\ref{eq:A2}) and (\ref{eq:A3}) hold for any smooth section of $T^{1,0}M$, then for any $Z \in \Cl^1 (M, T^{1,0} M)$, we have
$$ \bigl[ H_k ( f) , \tfrac{i}{k} \nabla^k_{\con{Z}} Q_k \bigr] = \bigo ( k^{-2})(|f|_2 |Z|_0 + |f|_1 |Z|_1)\;.$$
The proof is the same by using now that
$$  \bigl[ H_k ( f) , \tfrac{i}{k} \nabla^k_{g \con{Z}} Q_k \bigr] = g \bigl[ H_k ( f) , \tfrac{i}{k} \nabla^k_{\con{Z}} Q_k \bigr] + \tfrac{i}{k} (X.g) \tfrac{i}{k} \nabla^k_{\con{Z}}Q_k\;, $$
where $X$ is the Hamiltonian vector field of $f$.

\section{Bargmann space}

In this section, we prove Theorem \ref{thm:toepl-barg-comp} and the version (\ref{eq:P2_bargmann}) of (P2). It is sufficient to prove these estimates for $\hbar =1$. Indeed, recall that for any $\hbar >0$ and $f \in L^{\infty}( \C^n)$, we denote  by $\B_{\hbar}$ the Bargmann space and by $ T_{\hbar} (f)$ the Toeplitz operator $\Pi_{\hbar} f : \B_{\hbar} \rightarrow \B_{\hbar}$.
Then we have a Hilbert space isomorphism
$$ U_\hbar: \B_{\hbar} \rightarrow \B_1, \qquad \xi \rightarrow \hbar^{n/2} \xi (\hbar^{1/2} \cdot).$$
We easily check  that
\begin{equation}
\label{eq-qcomresc}
T_{\hb} (f) = U_{\hbar} ^* T_1 ( f_{\hbar}) U_{\hbar}\;,
\end{equation}
where $f_{\hbar} (x) = f ( \hbar ^{1/2} x) $. Recall the semi-norm $| \cdot |_k'$ introduced in (\ref{eq:semi-norm-prime}). Using that $|f_{\hbar}|'_k = \hbar^{k/2} |f|'_k$, we see that Equation (\ref{eq:P2_bargmann}) and Theorem \ref{thm:toepl-barg-comp} with  $\hbar = 1$ imply the same results for any $\hbar$.

Instead of $\B_{1}$, it will be more convenient to work with the closed subspace $\B$ of $L^2 ( \C^n, \mu)$ consisting of the functions $\xi$ satisfying $ \partial \xi /\partial \con{z}_i = - \tfrac{1}{2} z_i \xi$ for  $i = 1, \ldots , n$.
 $\B _{1}$ and $\B$ are isomorphic Hilbert space through the unitary map $\xi \rightarrow \xi e^{-|z|^2/2}$. Furthermore, for any $f \in  L^{\infty} ( \C^n)$, this unitary map conjugates $T_{1} (f)$ with $T (f) := \Pi f: \B \rightarrow \B$, where $\Pi$ is the orthogonal projector of $ L^2 ( \C^n, \mu)$ onto $\B$. So our goal is to prove the following.

\begin{thm} \label{thm:toepl-barg-comp-sans-hbar}
For any $N \in \N$, there exists $C_N >0$ such that for any $f \in \Cb ^{2N}( \C^n)$ and $g \in \Cb ^{N} ( \C^n)$, we have
$$ T (f) T(g) = \sum_{\ell = 0 } ^{N-1} (-1)^{\ell} \hb^\ell \sum_{\al \in \N^n,\; |\al |= \ell} \frac{1}{\al !} T \bigl( ( \partial_{z}^{\al} f)( \partial_{\con{z}}^{\al} g) \bigr) + R_N( f,g)\;, $$
where $\|R_{N} (f,g) \|_{op} \leqslant C_N \sum_{m=0}^{N} |f|'_{N+m} |g|'_{N - m}$.
\end{thm}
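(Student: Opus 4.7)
The plan is to compute the Schwartz kernel of $T(f)T(g)$ as a Gaussian integral, diagonalise its phase by a translation, Taylor-expand $f$ to order $2N-1$, evaluate the resulting Gaussian moments in closed form, and then use a single integration by parts in the output variable to convert the spurious polynomial-in-$(z-u)$ factors into $\bar z$-derivatives falling onto $g$. The Schur test of Proposition~\ref{prop:norme} will ultimately bound the remainder.

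Working in $\B_1$ (unitarily equivalent to $\B$ as explained in the text), the Bergman kernel is $\Pi(z,w)=(2\pi)^{-n}e^{z\bar w}$ and the Schwartz kernel of $T(f)T(g)$ against the measure $e^{-|u|^2}\mu$ is $g(u)\tilde K(z,u)$, where
\[
\tilde K(z,u) = \int \Pi(z,w)\,\Pi(w,u)\, f(w)\, e^{-|w|^2}\,d\mu(w).
\]
The phase $z\bar w + w\bar u - |w|^2$ is exactly quadratic in $w$; under the substitution $w = v+u$ it becomes $z\bar u + \bar v(z-u) - |v|^2$, so
\[
\tilde K(z,u) = \Pi(z,u)\cdot\frac{1}{(2\pi)^n}\int e^{-|v|^2+\bar v(z-u)}\,f(v+u)\,d\mu(v).
\]
Taylor-expanding $f(v+u)=\sum_{|\alpha|+|\beta|<2N}\tfrac{v^\alpha\bar v^\beta}{\alpha!\beta!}(\partial_z^\alpha\partial_{\bar z}^\beta f)(u)+R_{2N}(v,u)$ and evaluating the Gaussian moments $M_{\alpha\beta}(c):=\int v^\alpha\bar v^\beta e^{-|v|^2+\bar v c}\,d\mu(v)$ by an induction on $|\alpha|+|\beta|$ (bringing down each $\bar v$ via $\partial_c$, and each $v$ via integration by parts using $\partial_{\bar v}(\bar v c - |v|^2)=c-v$) yields $M_{\alpha\beta}(c)=\tfrac{\alpha!}{(\alpha-\beta)!}c^{\alpha-\beta}(2\pi)^n$ when $\beta\leq\alpha$ componentwise and $0$ otherwise.

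The crucial algebraic step is to observe that on $\B_1$, the two kernels $(z-u)^\gamma\Pi(z,u)h(u)$ and $(-1)^{|\gamma|}\Pi(z,u)\partial_{\bar z}^\gamma h(u)$ define the same operator: this follows from a single integration by parts in $u$ using $z_i\Pi(z,u)=\partial_{\bar u_i}\Pi(z,u)$ together with $\partial_{\bar u_i}(\xi(u)e^{-|u|^2})=-u_i\xi(u)e^{-|u|^2}$ for holomorphic $\xi$. Collecting the moments above, the principal part of $\tilde K(z,u)g(u)$ takes the shape $\Pi(z,u)\sum_\gamma \tfrac{(z-u)^\gamma}{\gamma!}\,g(u)\,\partial_z^\gamma F(u)$ with $F=\sum_\beta \tfrac{1}{\beta!}\partial_z^\beta\partial_{\bar z}^\beta f$. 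Applying the equivalence, expanding by Leibniz, and invoking the formal identity $e^{\partial_z\partial_{\bar z}}\,\partial_z^\delta\,e^{-\partial_z\partial_{\bar z}}=\partial_z^\delta$ to telescope the resulting double sum, one recovers the claimed principal part $\sum_{\ell=0}^{N-1}(-1)^\ell\sum_{|\alpha|=\ell}\tfrac{1}{\alpha!}T(\partial_z^\alpha f\cdot\partial_{\bar z}^\alpha g)$.

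For the remainder, the Hadamard form of $R_{2N}$ gives $|R_{2N}(v,u)|\leq C|v|^{2N}|f|'_{2N}$, and its contribution to the kernel is an integral operator to which Proposition~\ref{prop:norme} applies, the $v$-integral converging thanks to the weight $e^{-|v|^2}$. To obtain the \emph{symmetric} bound $\sum_{m=0}^{N}|f|'_{N+m}|g|'_{N-m}$ rather than just $|f|'_{2N}$, I will, before invoking Schur, carry out the $(z-u)\leadsto\partial_{\bar z}$ integration by parts \emph{selectively} inside the Hadamard remainder: each transfer trades one unit of required $f$-regularity for one unit of $g$-regularity, and performing $N-m$ such transfers produces a term estimated by $|f|'_{N+m}|g|'_{N-m}$. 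The main obstacle is precisely this controlled redistribution; the naive route (Taylor-remainder, then Schur) only sees $|f|'_{2N}$, and matching the sharp mixed bound with dimension-independent constants requires careful bookkeeping of how each $(z-u)$-factor migrates onto $g$ while preserving the Gaussian integrability that feeds Proposition~\ref{prop:norme}.
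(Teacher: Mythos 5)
Your skeleton matches the paper's: compute the kernel of $T(f)T(g)$, reduce to a Gaussian integral, Taylor-expand $f$ around the $g$-variable to order $2N$, evaluate the Gaussian moments to produce $(z-u)$-powers, convert those powers to $\partial_{\bar z}$-derivatives by integration by parts against the Bergman kernel, and close with Schur. The paper packages the Schur step more systematically via a three-kernel object $K(g)$ and a weight $W$, but that is a notational convenience; your two-variable computation, the moment formula $M_{\alpha\beta}(c)=\frac{\alpha!}{(\alpha-\beta)!}c^{\alpha-\beta}(2\pi)^n$ for $\beta\leq\alpha$, and the identity $(z-u)^\gamma\,\Pi(z,u)\,h(u)\leadsto(-1)^{|\gamma|}\Pi(z,u)\,\partial_{\bar u}^\gamma h(u)$ are all the same ingredients as the paper's Lemmas on $K$ and its integration-by-parts relations.

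There are, however, two genuine gaps, and they are linked. First, your telescoping of the ``principal part'' is not valid as stated. The multi-index identity $\sum_{\eta+\beta=\kappa}\frac{(-1)^{|\eta|}}{\eta!\beta!}=\delta_{\kappa,0}$ requires summing over \emph{all} pairs $(\eta,\beta)$, but your Taylor polynomial is truncated at $|\alpha|+|\beta|<2N$, i.e. $|\gamma+2\beta|<2N$ for fixed $\gamma=\alpha-\beta$. So the telescope has boundary terms, and the result is not simply $\sum_{\ell<N}(-1)^\ell\sum_{|\alpha|=\ell}\frac{1}{\alpha!}T(\partial_z^\alpha f\cdot\partial_{\bar z}^\alpha g)$. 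You must split the expansion into $|\alpha|<N$ (where the combination \emph{does} collapse to the asserted Toeplitz operators --- this is the paper's algebraic lemma, $\sum_{\beta\leq\alpha}\frac{(-1)^{|\alpha-\beta|}}{\beta!(\alpha-\beta)!}\partial_{\bar z}^{\alpha-\beta}(f_{\alpha\beta}g)=\frac{(-1)^{|\alpha|}}{\alpha!}\partial_z^\alpha f\cdot\partial_{\bar z}^\alpha g$) and $|\alpha|\geq N$, which must be shipped to the remainder. Second, the mixed bound $\sum_{m}|f|'_{N+m}|g|'_{N-m}$ is not obtained by tinkering with the Hadamard remainder $R_{2N}$ --- that term already contributes cleanly to the $m=N$ slot $|f|'_{2N}|g|'_0$ via Schur. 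The terms requiring the ``controlled redistribution'' are precisely the Taylor-polynomial pieces with $|\alpha|\geq N$. For those, the paper chooses $\gamma\leq\alpha-\beta$ with $|\alpha|+|\beta|+|\gamma|=2N$, performs exactly $|\gamma|$ integrations by parts, and Leibniz-expands $\partial_{\bar z}^\gamma(f_{\alpha\beta}g)$: each split $\gamma=\gamma'+\gamma''$ yields $|f|'_{|\alpha|+|\beta|+|\gamma'|}|g|'_{|\gamma''|}=|f|'_{N+m}|g|'_{N-m}$ with $m=N-|\gamma''|\in\{0,\dots,N\}$, and the remaining $(z-u)^{\alpha-\beta-\gamma}$ powers are absorbed into the Gaussian via Schur. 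You identify the right mechanism but stop short of it, explicitly calling it ``the main obstacle'' and deferring the bookkeeping; as written, the argument neither isolates the claimed $\sum_{\ell<N}$ principal symbol nor establishes the sharp mixed remainder bound.
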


It is well-known that the Schwartz kernel of $\Pi$ is given by
\begin{equation}
\label{eq-schkerbarg}
\Pi ( u, v) = ( 2 \pi ) ^{-n} e^{ -\frac{1}{2} ( |u|^2 + |v|^2 ) + u \cdot \con{v}} , \qquad u,v \in \C^n\;.
\end{equation}
It satisfies the following two identities
\begin{gather} \label{eq:identity_bargmann_norm}
 |\Pi(u,v)| = ( 2 \pi ) ^{-n} e^{ -\frac{1}{2} |u -v|^2}\;, \\ \label{eq:identity_bargmann_composition}
\Pi(u,v) \Pi( v,w) = ( 2 \pi ) ^{-n} e^{- (v - u ) \cdot ( \con{v} - \con {w} )} \Pi( u , w)\;.
\end{gather}
Let $W : \C^{4n} \rightarrow \R$ be the weight given by
$$W =  1 + |z_1 -z_2| + |z_2 - z_3 | + |z_3 - z_4| .$$
Let $N \in \N$.  For any measurable function $g : \C^{4n} \rightarrow \C$ such that $ |g| W^{-N}$ is bounded, introduce the function of $\C^{2n}$:
$$ K( g) ( x_1 , x_4) = \int_{\C^{2n} } \Pi (x_1,x_2 ) \Pi ( x_2, x_3)  \Pi (x_3 , x_4)  g (x_1, x_2, x_3 , x_4 )\; \mu (x_2) \mu ( x_3)\;. $$

\begin{lemma} \label{lemma:bound}
$K(g)$ is the Schwartz kernel of a bounded operator of $L^ 2 ( \C^n , \mu)$. Its uniform norm satisfies
$$ \| K(g) \| \leqslant C_N \sup_{ x\in \C^{4n} } \bigl( |g(x) | W^{-N}(x)\bigr)   $$
for some constant $C_N$ independent of $g$.
\end{lemma}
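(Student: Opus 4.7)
The plan is to apply the Schur test (Proposition~\ref{prop:norme}) to the integral operator with kernel $K(g)$. Set $M := \sup_{x\in\C^{4n}} |g(x)|W^{-N}(x)$, so that $|g| \leq M W^N$ pointwise. Combining this bound with identity (\ref{eq:identity_bargmann_norm}), which gives $|\Pi(u,v)| = (2\pi)^{-n} e^{-|u-v|^2/2}$, yields
\[
|K(g)(x_1,x_4)| \leq \frac{M}{(2\pi)^{3n}} \int_{\C^{2n}} e^{-\frac{1}{2}(|x_1-x_2|^2 + |x_2-x_3|^2 + |x_3-x_4|^2)} W^N\, \mu(x_2)\mu(x_3).
\]
The Gaussian decay makes the integrand absolutely integrable, so $K(g)$ is a well-defined continuous function on $\C^{2n}$.

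To apply Proposition~\ref{prop:norme} I must estimate the two Schur quantities
\[
C_1 := \sup_{x_1} \int_{\C^n} |K(g)(x_1,x_4)|\, \mu(x_4), \qquad C_2 := \sup_{x_4} \int_{\C^n} |K(g)(x_1,x_4)|\, \mu(x_1).
\]
For $C_1$, integrating the displayed bound over $x_4$ and swapping the order of integration (Tonelli) gives a triple integral over $\C^{3n}$. For fixed $x_1$ I would perform the change of variables $y_2 = x_2 - x_1$, $y_3 = x_3 - x_2$, $y_4 = x_4 - x_3$, which has unit Jacobian. The Gaussian exponent becomes $-\tfrac{1}{2}(|y_2|^2 + |y_3|^2 + |y_4|^2)$, and since every term of $W$ involves only differences of consecutive variables, the weight becomes $1 + |y_2| + |y_3| + |y_4|$. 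The resulting integral is independent of $x_1$ and equals the Gaussian moment
\[
I_N := \int_{\C^{3n}} e^{-\frac{1}{2}(|y_2|^2+|y_3|^2+|y_4|^2)} (1+|y_2|+|y_3|+|y_4|)^N\, \mu(y_2)\mu(y_3)\mu(y_4),
\]
which is finite and depends only on $N$ and $n$. Hence $C_1 \leq (2\pi)^{-3n} M I_N$. The bound for $C_2$ is identical by symmetry, translating starting from $x_4$ rather than $x_1$. Proposition~\ref{prop:norme} then yields $\|K(g)\|_{op} \leq \sqrt{C_1 C_2} \leq C_N M$ with $C_N := (2\pi)^{-3n} I_N$.

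There is no real obstacle here: the design of the weight $W$ as a sum of norms of \emph{consecutive} differences is precisely what makes the translation substitution eliminate all dependence on $x_1$ (or $x_4$). No delicate cancellations are needed because $|\Pi(u,v)|$ already yields a clean Gaussian upper bound and the polynomial growth of $W^N$ is absorbed harmlessly by the Gaussian tails. The only bookkeeping is to expand $(1+|y_2|+|y_3|+|y_4|)^N$ as a finite sum of monomials and recognise each resulting integral as a product of one-dimensional Gaussian moments.
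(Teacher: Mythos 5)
Your proof is correct and follows essentially the same route as the paper: bound $|g|$ by $W^N$ times the sup, use the Gaussian formula for $|\Pi(u,v)|$, apply the Schur test, and evaluate the two row/column integrals via the translation substitution that turns the consecutive differences in $W$ into independent Gaussian variables. You are slightly more explicit than the paper about the change of variables and about keeping the $(2\pi)^{-3n}$ factor (which the paper tacitly absorbs into $C_N$), but these are presentation details, not a different argument.
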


\begin{proof}
This follows from Schur test. Indeed, by (\ref{eq:identity_bargmann_norm}), we have for any $x_1 \in \C^n$
\begin{xalignat*}{2}  & \int_{\C^n} | K(g) (x_1, x_4) |\mu (x_4)  \\
& \leqslant
 \int_{\C^{3n} }  | \Pi (x_1,x_2 ) \Pi ( x_2, x_3) \Pi (x_3 , x_4)  g (x_1, x_2, x_3 , x_4 ) | \; \mu (x_2) \mu ( x_3) \mu (x_4) \\  & \leqslant   C_N \sup_{ x\in \C^{4n} } \bigl( |g(x) | W^{-N}(x)\bigr)\;,
\end{xalignat*}
where $C_N$ is the constant
\begin{xalignat*}{2}
 C_N =  &  \int_{\C^{3n}} e^{ -\frac{1}{2} ( |x_1 -x_2|^2 + |x_2 - x_3|^2 + |x_3- x_4|^2)} W( x_1,x_2,x_3,x_4) ^N  \; \mu (x_2) \mu ( x_3) \mu (x_4) \\
= &  \int_{\C^{3n}} e^{ -\frac{1}{2} ( |y_1|^2 + |y_2|^2 + |y_3|^2)} (1 + |y_1|+ |y_2| + |y_3| )^N  \; \mu ( y_1) \mu (y_2) \mu ( y_3)\;.
\end{xalignat*}
Similarly,  $\int_{\C^n} | K(g) (x_1, x_4) |\mu (x_1) \leqslant  C_N \sup_{ x\in \C^{4n} } \bigl( |g(x) | W^{-N}(x)\bigr) $.
\end{proof}
Observe that for $f_1$, $f_2 \in \Cb^0 ( \C^n)$, $K ( 1 \boxtimes f_1 \boxtimes f_2 \boxtimes 1)$ is the product of Toeplitz operators $T(f_1) T(f_2)$. In particular $K(  1 \boxtimes 1\boxtimes f_2 \boxtimes 1) = T(f_2)$. Here, we denote by $ 1 \boxtimes f_1 \boxtimes f_2 \boxtimes 1$ the functions sending $(x_1,x_2,x_3,x_4)$ to $f_1(x_2)  f_2(x_3)$. In the sequel, abusing notations, we will sometimes use the notation $K( f_1 (z_2) f_2 (z_3))$ instead of $K (1 \boxtimes f_1 \boxtimes f_2 \boxtimes 1)  $.

\begin{lemma} \label{lem:integration_part}
Let $g \in \Cl^1 ( \C^{4n})$ be such that $|g| W^{-N} $ and $ |f| W^{-N} $ are bounded  where $f$ is any partial derivative of $g$. Then we have
$$ K \bigl( (\con{z}_{2,i} - \con{z}_{3,i} ) g \bigr) = K \bigl( \partial g / \partial z_{2,i} \bigr)\;,  \qquad  K \bigl( ({z}_{3,i} - {z}_{2,i} ) g \bigr) = K \bigl( \partial g / \partial \con{z}_{3,i} \bigr)\;. $$
\end{lemma}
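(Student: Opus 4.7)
The plan is to prove both identities by a direct computation: differentiate the product $\Pi(x_1,x_2)\Pi(x_2,x_3)\Pi(x_3,x_4)$ with respect to $z_{2,i}$ (resp.\ $\bar z_{3,i}$) and then integrate by parts in $\mu(x_2)$ (resp.\ $\mu(x_3)$). From the explicit formula \eqref{eq-schkerbarg}, elementary logarithmic differentiation gives
\begin{align*}
\partial_{z_{2,i}} \Pi(x_1,x_2) &= -\tfrac{1}{2}\bar z_{2,i}\,\Pi(x_1,x_2), \\
\partial_{z_{2,i}} \Pi(x_2,x_3) &= \bigl(\bar z_{3,i}-\tfrac{1}{2}\bar z_{2,i}\bigr)\,\Pi(x_2,x_3),
\end{align*}
so that
$$\partial_{z_{2,i}}\bigl[\Pi(x_1,x_2)\Pi(x_2,x_3)\Pi(x_3,x_4)\bigr] = -(\bar z_{2,i}-\bar z_{3,i})\,\Pi(x_1,x_2)\Pi(x_2,x_3)\Pi(x_3,x_4).$$
An analogous computation for $\partial_{\bar z_{3,i}}$ yields
$$\partial_{\bar z_{3,i}}\bigl[\Pi(x_1,x_2)\Pi(x_2,x_3)\Pi(x_3,x_4)\bigr] = -(z_{3,i}-z_{2,i})\,\Pi(x_1,x_2)\Pi(x_2,x_3)\Pi(x_3,x_4).$$

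Granting these two identities, the first assertion follows by writing
$$K\bigl((\bar z_{2,i}-\bar z_{3,i})g\bigr)(x_1,x_4) = -\int_{\mathbb C^{2n}} g\cdot \partial_{z_{2,i}}\bigl[\Pi(x_1,x_2)\Pi(x_2,x_3)\Pi(x_3,x_4)\bigr]\, \mu(x_2)\mu(x_3)$$
and integrating by parts in the variable $z_{2,i}$ against $\mu(x_2)$, which produces $K(\partial g/\partial z_{2,i})$. The second assertion is proved identically, integrating by parts in $\bar z_{3,i}$ against $\mu(x_3)$.

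The main point to justify is that the integration by parts is legitimate, i.e.\ that no boundary term survives at infinity and that both sides are well-defined integrals. This follows from the Gaussian decay $|\Pi(u,v)|=(2\pi)^{-n}e^{-|u-v|^2/2}$ together with the weight hypothesis. Indeed, the intermediate expressions encountered in the argument have the form $K(\tilde g)$ where $\tilde g$ is either $g$, a first derivative of $g$, or $(\bar z_{2,i}-\bar z_{3,i})g$ or $(z_{3,i}-z_{2,i})g$; in each case $|\tilde g|\leq C\,W^{N+1}$, so Lemma~\ref{lemma:bound} (applied with $N$ replaced by $N+1$) shows that all integrands are absolutely integrable. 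To eliminate boundary terms one truncates with a smooth cut-off $\chi_R$ at scale $R$ in the $x_2$ (resp.\ $x_3$) variable, performs the classical integration by parts on the compactly supported integrand, and lets $R\to\infty$: the cut-off error is controlled by $\int \chi_R'\cdot|g|\cdot|\Pi\Pi\Pi|\,\mu\mu$, which vanishes because $|g|W^{-N}$ is bounded while the Gaussian factor forces exponential decay as $R\to\infty$. The only real subtlety is this passage to the limit; once the Gaussian decay absorbs the polynomial growth of $g$, everything is routine.
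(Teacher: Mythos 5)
Your proof is correct and essentially the same as the paper's: both establish the differential identities $\bigl(\partial_{z_{2,i}} + (\con{z}_{2,i}-\con{z}_{3,i})\bigr)\Pi(x_1,x_2)\Pi(x_2,x_3)=0$ and $\bigl(\partial_{\con{z}_{3,i}} + (z_{3,i}-z_{2,i})\bigr)\Pi(x_2,x_3)\Pi(x_3,x_4)=0$ and then integrate by parts; the only cosmetic difference is that you obtain these identities by logarithmic differentiation of the explicit kernel \eqref{eq-schkerbarg} while the paper reads them off from the composition identity \eqref{eq:identity_bargmann_composition}. You also carefully justify the vanishing of boundary terms via the Gaussian decay and the weight hypothesis, a point the paper leaves implicit.
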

\begin{proof}
By (\ref{eq:identity_bargmann_composition}), we have
\begin{gather*}
\Bigl( \frac{\partial}{ \partial  z_{2,i}} + (\con{z}_{2,i} - \con{z}_{3,i} ) \Bigr) \bigl( \Pi ( z_1, z_2) \Pi(z_2, z_3) \bigr)  = 0\;, \\
\Bigl( \frac{\partial}{ \partial  \con{z}_{3,i}} +  (z_{3,i} - z_{2,i} ) \Bigr)  \bigl( \Pi ( z_2, z_3) \Pi(z_3, z_4) \bigr) = 0\;.
\end{gather*}
And the result follows by integrating by part.
\end{proof}

Consider now $N \in \N$ and $f,g \in \Cb^{2N} ( \C^n)$. We will compute $K( f(z_2) g(z_3))$ by replacing $f(z_2)$ by
its Taylor expansion around $z_3$
$$ f(z_2) = \sum_{ \al, \be \in \N^n, \; |\al| + |\be| < 2N} \frac{1}{\al ! \be !} f_{\al, \be} ( z_3) (z_2 - z_3)^{\al} ( \con{z}_2 - \con{z}_3)^\be + r_N (z_2, z_3)\;.$$
Here for any $\al$, $\be$, we denote by $f_{\al, \be}$ the derivative $\partial_{z}^{\al} \partial_{\con{z}}^{\be} f $. Furthermore, for any $z_2, z_3 \in \C^n$,
$$ |r_N (z_2, z_3) | \leqslant C_{N}' |f|'_{2N} \bigl( 1 + |z_2 - z_3 |\bigr)^{2N}$$
for some constant $C'_N$ independent of $f$. By Lemma \ref{lemma:bound},
$$ \|  K(r_N(z_2, z_3) g(z_3) ) \| \leqslant C_{2N} C'_N |f|'_{2N} | g|_0'\;.$$
Denote by $P_{\al ,\be}$ the operator $K \bigl( f_{\al, \be} ( z_3) (z_2 - z_3)^{\al} ( \con{z}_2 - \con{z}_3)^\be g(z_3) \bigr)$. We have
$$ T(f) T(g) = \sum _{ \al, \be \in \N^n, \; |\al| + |\be| < 2N} \frac{P_{\al, \be}}{\al ! \be !} + K ( (r_N(z_2, z_3) g(z_3) )\;.$$
In the sequel we say that $\be \leqslant \al$ if for any $i=1,\ldots , n$, we have $\be (i) \leqslant \al (i)$.

\begin{lemma}
Let $\al$, $\be \in \N^n$ be such that $|\al| + | \be| < 2N$. Then if $|\al| <N$ and $\be \leqslant \al$, we have
\begin{gather} \label{eq:Palbe}
 P_{\al , \be}= \frac{ \al !  ( -1)^{|\al - \be|} } { ( \al - \be ) !} T ( \partial_{\con{z}}^{\al - \be} (f_{\al, \be } g))
\end{gather}
and otherwise
\begin{gather} \label{eq:Palbe_neg}
 \| P_{\al,\be} \|   \leqslant C_N \sum_{m= 0}^N | f|'_{N+m} |g|'_{N-m}
\end{gather}
for some constant $C_N$ independent of $f$ and $g$.
\end{lemma}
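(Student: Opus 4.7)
My plan is to prove the lemma through iterated integration by parts (IBP) using Lemma \ref{lem:integration_part}. I would start from the expression
$$P_{\al,\be} = K\bigl( f_{\al,\be}(z_3)(z_2 - z_3)^{\al}(\bar z_2 - \bar z_3)^{\be}g(z_3) \bigr)$$
and first eliminate the $(\bar z_2 - \bar z_3)^\be$ factor via repeated applications of $K((\bar z_{2,i} - \bar z_{3,i})h) = K(\partial h/\partial z_{2,i})$. Since the integrand depends on $z_2$ only through smooth polynomials, the IBP is valid and the growth hypotheses in Lemma \ref{lem:integration_part} are automatic; moreover each derivative acts only on $(z_2-z_3)^{\al'}$, giving $\partial_{z_{2,i}}(z_2-z_3)^{\al'} = \al'_i (z_2-z_3)^{\al'-e_i}$. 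Thus if $\be_i > \al_i$ for some $i$, after $\al_i$ IBPs with respect to $z_{2,i}$ the $z_{2,i}$-dependence disappears and one further IBP forces $P_{\al,\be} = 0$, trivially satisfying (\ref{eq:Palbe_neg}). When $\be \leq \al$, performing all $|\be|$ IBPs yields
$$P_{\al,\be} = \frac{\al!}{(\al-\be)!}\, K\bigl((z_2-z_3)^{\al-\be} f_{\al,\be}(z_3)g(z_3)\bigr).$$

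Next I would use the dual IBP $K((z_{3,i}-z_{2,i})h) = K(\partial h/\partial \bar z_{3,i})$ to convert $(z_2-z_3)$ factors into $\bar z_3$-derivatives; since $z_2 - z_3$ is holomorphic in $z_3$, these derivatives act only on $f_{\al,\be}(z_3)g(z_3)$. In the case $|\al| < N$ I would perform all $|\al-\be|$ such IBPs to obtain
$$P_{\al,\be} = \frac{\al!\,(-1)^{|\al-\be|}}{(\al-\be)!}\, T\bigl(\partial_{\bar z}^{\al-\be}(f_{\al,\be}\,g)\bigr),$$
using $K(1\boxtimes 1 \boxtimes h \boxtimes 1) = T(h)$. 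The regularity is sufficient because the maximum number of derivatives falling on $f$ is $2|\al|<2N$ and on $g$ is $|\al-\be|<N$. This establishes (\ref{eq:Palbe}).

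For the remaining case $\be \leq \al$ with $|\al| \geq N$, the key idea is to perform only $r := 2N - |\al| - |\be|$ IBPs with respect to $\bar z_3$. The conditions $|\al| \geq N$ and $|\al|+|\be|<2N$ give $0\leq r\leq |\al-\be|$, so I may pick any $\ga\leq\al-\be$ with $|\ga| = r$ and obtain
$$P_{\al,\be} = \frac{\al!\,(-1)^r}{(\al-\be)!}\, K\bigl((z_2-z_3)^{\al-\be-\ga}\,\partial_{\bar z_3}^\ga(f_{\al,\be}g)\bigr).$$
Expanding by Leibniz and applying Lemma \ref{lemma:bound}, each resulting term is controlled by a constant times $|f|'_{|\al|+|\be|+|\de|}\,|g|'_{|\ga-\de|}$ for some $\de \leq \ga$; the remaining weight $|z_2-z_3|^{|\al-\be|-r} = |z_2-z_3|^{2(|\al|-N)}$ has exponent at most $2N-2$, so the constant from Lemma \ref{lemma:bound} is uniform in $(\al,\be)$. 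Setting $m = |\al|+|\be|+|\de|-N$ yields $m\in[0,N]$ and $N-m = |\ga-\de|$, so every term has the form $|f|'_{N+m}|g|'_{N-m}$, giving (\ref{eq:Palbe_neg}).

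The main obstacle, beyond careful bookkeeping of multi-indices, is calibrating the number of $\bar z_3$-integrations: doing too few leaves an unbounded power of $W$ so Lemma \ref{lemma:bound} fails, while doing too many demands more regularity than $f\in\Cb^{2N}$, $g\in\Cb^N$ provide. The specific choice $r = 2N-|\al|-|\be|$ is the unique one at which both constraints are simultaneously satisfied and the total number of derivatives distributed between $f$ and $g$ equals exactly $2N$, matching the structure of the right-hand side of (\ref{eq:Palbe_neg}).
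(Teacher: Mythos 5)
The proposal is correct and takes essentially the same approach as the paper: both use the first identity of Lemma \ref{lem:integration_part} to eliminate the $(\con{z}_2 - \con{z}_3)^{\be}$ factor (handling the case $\be \not\leq \al$ by vanishing), then apply the second identity exactly $2N - |\al| - |\be|$ times when $|\al| \geq N$ (yielding a multi-index $\ga \leq \al - \be$ with $|\al|+|\be|+|\ga| = 2N$), Leibniz-expand, and invoke Lemma \ref{lemma:bound}, or apply it $|\al - \be|$ times when $|\al| < N$ to land on the Toeplitz formula.
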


\begin{proof}
First, if for some $i$ we have $\be (i) > \al (i)$, then by the first identity of Lemma \ref{lem:integration_part}, we have $P_{\al ,\be}=0$ and (\ref{eq:Palbe_neg}) is satisfied. Assume now that $\be \leqslant \al$.  By the first identity of Lemma \ref{lem:integration_part},
$$ P_{\al, \be} = \frac{ \al ! }{(\al - \be) !} K   \bigl( f_{\al, \be} ( z_3) (z_2 - z_3)^{\al - \be }  g(z_3) \bigr)\;. $$
If $| \al | \geqslant N$, then $| \al | + |\be| + |\al - \be| = 2 |\al| \geqslant 2N$, so we can find  a multi-index $\ga \in \N^n$ such that $| \al | + |\be | + |\ga| = 2N$ and $\ga \leqslant \al - \be$. By the second identity of Lemma \ref{lem:integration_part}, we have
$$  P_{\al, \be} = \frac{ \al ! (-1)^{ |\ga|}}{(\al - \be) !} K   \bigl(  (z_2 - z_3)^{\al - \be - \ga }\bigl( \partial_{\con{z}}^\ga ( f_{\al ,\be} g) \bigr) (z_3)    \bigr)\;. $$
Then expanding $ \partial_{\con{z}}^\ga ( f_{\al ,\be} g)$ and applying Lemma \ref{lemma:bound} with the weight $W^{-2N}$, we deduce that (\ref{eq:Palbe_neg}) is satisfied. Finally, assume that $|\al| \leqslant N$ and $\be \leqslant \al$, so that $f_{\al , \be}g$ is of class $\Cl^{ |\al - \be|}$. Then by  second identity of Lemma \ref{lem:integration_part}, we have
$$  P_{\al , \be} = \frac{ \al ! ( - 1)^{|\al - \be| }}{(\al - \be) !} K   \bigl(  \bigl( \partial_{\con{z}}^{\al - \be} ( f_{\al , \be} g) \bigr) (z_3)    \bigr)\;, $$
and we deduce (\ref{eq:Palbe}).
\end{proof}

To finish the proof of Theorem \ref{thm:toepl-barg-comp-sans-hbar}, we have
the following algebraic identity based on the fact that $ \partial_{\con{z}} ^\gamma f_{\al, \be} = f_{\al , \be + \ga}$.
\begin{lemma}
For any $\al \in \N^n$, we have
$$ \sum_{\be \in \N^n, \; \be \leqslant \al } \frac{  ( - 1)^{|\al - \be| }}{\be ! (\al - \be) !}  \partial_{\con{z}}^{\al - \be} (f_{\al, \be } g) = \frac{1}{\al !} f_{\al,  0} g_{0 , \al}\;. $$
\end{lemma}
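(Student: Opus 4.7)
The plan is a direct algebraic calculation: expand the derivative via the multi-variable Leibniz rule, reindex the resulting double sum, and invoke a multi-dimensional alternating binomial identity.

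First, using Leibniz together with the hypothesis $\partial_{\con{z}}^{\ga} f_{\al,\be} = f_{\al,\be+\ga}$, write
\[
\partial_{\con{z}}^{\al-\be}(f_{\al,\be}\, g) \;=\; \sum_{\ga \leq \al-\be} \binom{\al-\be}{\ga}\, f_{\al,\be+\ga}\, g_{0,\al-\be-\ga}.
\]
Substituting this into the left-hand side and changing the inner index from $\ga$ to $\de := \be+\ga$, the constraints become $0 \leq \be \leq \de \leq \al$ while $\al-\be-\ga = \al-\de$, and the binomial coefficient $\binom{\al-\be}{\de-\be}$ combines with $\tfrac{1}{\be!(\al-\be)!}$ to give $\tfrac{1}{\be!(\de-\be)!(\al-\de)!}$. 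Swapping the order of summation, the left-hand side equals
\[
\sum_{\de \leq \al} \frac{f_{\al,\de}\, g_{0,\al-\de}}{(\al-\de)!\,\de!} \sum_{\be \leq \de} \binom{\de}{\be}(-1)^{|\al - \be|}.
\]

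Second, factor $(-1)^{|\al|}$ out of the inner sum. The remaining sum $\sum_{\be \leq \de}\binom{\de}{\be}(-1)^{|\be|}$ splits as a tensor product $\prod_{i=1}^n \sum_{\be_i=0}^{\de_i}\binom{\de_i}{\be_i}(-1)^{\be_i} = \prod_{i=1}^n (1-1)^{\de_i}$, which vanishes unless every $\de_i = 0$. Only the term $\de = 0$ survives, contributing $\tfrac{1}{\al!}\, f_{\al,0}\, g_{0,\al}$ (up to the global sign $(-1)^{|\al|}$, which is absorbed in the overall $(-1)^\ell$ factor appearing in Theorem~\ref{thm:toepl-barg-comp}).

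The computation is entirely algebraic and presents no serious obstacle. The only point requiring some care is the multi-index bookkeeping when reindexing the double sum via $\ga = \de - \be$ and recognizing that the combined summation weight is precisely $\binom{\de}{\be}/[\de!\,(\al-\de)!]$; once this is in place, the multi-dimensional alternating binomial identity immediately collapses the sum to the single term $\de = 0$.
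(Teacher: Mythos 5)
Your proof is correct and follows essentially the same route as the paper: expand the derivative by the multi-index Leibniz rule, reindex so the binomial coefficient $\binom{\de}{\be}$ appears cleanly, swap the order of summation, and collapse via the alternating binomial identity $\sum_{\be\leq\de}\binom{\de}{\be}(-1)^{|\be|}=\prod_i(1-1)^{\de_i}$. The only cosmetic difference is the order of operations: you apply Leibniz first and then substitute $\de=\be+\ga$, whereas the paper first substitutes $\ga=\al-\be$, then applies Leibniz, and then makes a second substitution $\la=\ga-\de$; correspondingly your surviving index is $\de=0$ while theirs is $\de=\al$. These are the same calculation in a different coordinate system on the index set.

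One point worth flagging explicitly, which you already noticed in passing: the left-hand side actually evaluates to $\frac{(-1)^{|\al|}}{\al!}f_{\al,0}\,g_{0,\al}$, not $\frac{1}{\al!}f_{\al,0}\,g_{0,\al}$. This is easy to confirm with $n=1$, $\al=1$: the sum is $-\partial_{\con z}(f_{1,0}g)+f_{1,1}g=-f_{1,0}g_{0,1}$. The lemma as printed is therefore missing the factor $(-1)^{|\al|}$; the paper's own proof in fact produces this factor (their $C(\de,\al)$ equals $(-1)^{|\al|}$ at $\de=\al$) but the displayed statement drops it. As you correctly observe, this is harmless because the enveloping Theorem \ref{thm:toepl-barg-comp-sans-hbar} carries the compensating prefactor $(-1)^\ell$ with $\ell=|\al|$, so the end result is unaffected. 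Your write-up would be improved by stating the corrected identity outright rather than burying the sign in a parenthetical; otherwise the argument is complete and correct.
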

\begin{proof} Setting $\ga = \al - \be$, we have
\begin{gather*}
  \sum_{\be \in \N^n, \; \be \leqslant \al } \frac{  ( - 1)^{|\al - \be| }}{\be ! (\al - \be) !}  \partial_{\con{z}}^{\al - \be} (f_{\al, \be } g) = \sum_{\ga \in \N^n, \; \ga \leqslant \al } \frac{  ( - 1)^{|\ga| }}{(\al - \ga ) !  \ga !}  \partial_{\con{z}}^{\ga} (f_{\al,  \al - \ga } g) \\
= \sum_{\delta, \ga \in \N^n, \; \delta \leqslant \ga \leqslant \al } \frac{  ( - 1)^{|\ga| }}{(\al - \ga ) !  \delta! ( \ga - \delta)!}  f_{\al, \al - \delta} g_{0,\delta } = \sum_{\delta \in \N^n, \; \delta \leqslant \al } \frac{1}{ \delta!}  C(\delta, \al)f_{\al, \al - \delta} g_{0,\delta }\;,
\end{gather*}
where
\begin{gather*}
 C( \delta , \al ) =\sum_{\ga \in \N^n, \; \delta \leqslant \ga \leqslant \al }  \frac{  ( - 1)^{|\ga| }}{(\al - \ga ) ! ( \ga - \delta)!} \\
= \sum_{\la \in \N^n, \; \la \leqslant \al - \delta } \frac{ (-1)^{ |\la - \delta|}}{ (\al - \delta - \la)! \la!} = \begin{cases} (-1)^{|\al| } \text{ if } \al = \delta\;, \\ 0 \text{ otherwise}\;,
\end{cases}
\end{gather*}
which concludes the proof.
\end{proof}

Let us explain now the proof of estimate (\ref{eq:P2_bargmann}). As for the proof of Theorem \ref{thm:toepl-barg-comp}, we can assume that $\hbar =1$ and work in $\B$ instead of $\B_1$. So we have to prove the following.
\begin{thm} \label{theo:P2-bargmann-space-C3}
There exists $\beta ''$ such that for any $f, g \in \Cb^3 ( \C^n)$, we have
\begin{gather*}
 \Bigl\|  \bigl[ T(f),T(g) \bigr] - i T (\{f,g\}) \Bigr\|_{op} \leq \beta''   \bigl( |f|_1' |g|_3' + |f|_2'|g|_2' + |f|_3'|g|'_1 \bigr)\;.
\end{gather*}
\end{thm}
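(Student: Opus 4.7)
The plan is to mirror the proof of Proposition \ref{prop:commut_correction}: introduce a Kostant--Souriau-type correction whose commutator estimate involves only second derivatives, and then use the composition formula of Theorem \ref{thm:toepl-barg-comp-sans-hbar} with $N=1$ to control the cross terms. Let $H(f) = f + \tfrac{1}{i}\nabla_{X_f}$ be the Kostant--Souriau operator acting on sections of the trivial Hermitian prequantum line bundle on $\C^n$, where $\nabla$ is the Chern connection associated with the Bargmann weight and $X_f$ is the Hamiltonian vector field of $f$. The auxiliary bundle being trivial, the curvature term $\Om_A$ in \eqref{eq:Kostant_Souriau} vanishes, and the Kostant--Souriau identity $[H(f),H(g)] = i\, H(\{f,g\})$ holds exactly.

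The main technical step is the Bargmann analog of Theorem \ref{theo:compi}: there exists $C$ such that $\|[\Pi,H(f)]\|_{op} \leq C |f|_2'$ for every $f \in \Cb^2(\C^n)$. I would prove this by adapting Section \ref{sec:fund-estim}. The phase $\varphi = -2\log|E|$ is replaced by $|u-v|^2$, which arises directly from $|\Pi(u,v)| = (2\pi)^{-n} e^{-|u-v|^2/2}$, and the partitions-of-unity/compactness arguments of Lemmas \ref{lem:estim1} and \ref{lem:estim2} are replaced by the direct Schur estimates against a Gaussian kernel already used in the proof of Lemma \ref{lemma:bound}. The counterpart of the vanishing property \ref{it:com}, which asserts that $f(u)-f(v) - \alpha_E((X_f,X_f))$ vanishes to second order along the diagonal, is checked here by direct Taylor expansion of $f$ and of the connection potential.

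Granting the commutator bound, set $T^c(f) := \Pi H(f) \Pi$. A Bargmann version of Tuynman's formula, which follows from the characterization $(\partial_{\bar z_j} + \tfrac{1}{2}z_j)\xi = 0$ of $\B$, gives $T^c(f) = T(f) - T(\widetilde{\Delta} f)$ for a specific second-order differential operator $\widetilde{\Delta}$. The algebraic identity $\Pi[\Pi,H(f)][\Pi,H(g)]\Pi = T^c(f)T^c(g) - \Pi H(f) H(g) \Pi$ combined with the exact Kostant--Souriau relation then yields, verbatim as in Proposition \ref{prop:P2_correction},
\[
[T^c(f), T^c(g)] = i\, T^c(\{f,g\}) + \bigo(|f|_2'\, |g|_2')\;.
\]

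To conclude, expand $T(f) = T^c(f) + T(\widetilde{\Delta} f)$ (and similarly for $g$) inside $[T(f),T(g)] - iT(\{f,g\})$ exactly as in the proof of Proposition \ref{prop:commut_correction}. The resulting expression decomposes into the sharp $T^c$-commutator above, two cross commutators $[T(f), T(\widetilde{\Delta} g)]$ and $[T(\widetilde{\Delta} f), T(g)]$, a pure correction $[T(\widetilde{\Delta} f), T(\widetilde{\Delta} g)]$, and $-iT(\widetilde{\Delta}\{f,g\})$. The cross commutators are handled by Theorem \ref{thm:toepl-barg-comp-sans-hbar} with $N=1$: for real-valued $f,g$, taking adjoints shows that both $T(u)T(v) - T(uv)$ and $T(v)T(u) - T(uv)$ obey $\leq C(|u|_1'|v|_1' + |u|_2'|v|_0')$, so choosing the ordering that loads the two derivatives onto the factor that can afford them yields $\|[T(f), T(\widetilde{\Delta} g)]\|_{op} \leq 2C(|f|_1'|g|_3' + |f|_2'|g|_2')$ and symmetrically $\|[T(\widetilde{\Delta} f), T(g)]\|_{op} \leq 2C(|f|_3'|g|_1' + |f|_2'|g|_2')$. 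The pure correction is bounded trivially by $\|T(\widetilde{\Delta} f)\|_{op}\|T(\widetilde{\Delta} g)\|_{op} \leq |f|_2'|g|_2'$, and $\|T(\widetilde{\Delta}\{f,g\})\|_{op} \leq |\widetilde{\Delta}\{f,g\}|_0'$ is of the desired form because $\widetilde{\Delta}\{f,g\}$ is bilinear in derivatives of $f$ and $g$ of total order four with at most three on either factor. The main obstacle throughout is the first step: the proofs of Lemmas \ref{lem:estim1} and \ref{lem:estim2} rely on local coordinates and on the compactness of $M$, both of which must be replaced by direct Schur-test estimates against the Gaussian kernel, uniformly on $\C^n$, while carefully tracking the fact that $X_f$ is only bounded in $L^\infty$ by $|f|_1'$ without any compactness.
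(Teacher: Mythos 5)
Your proposal reproduces the paper's argument: establish the commutator bound $\|[\Pi,H(f)]\|_{op}\le C|f|_2'$ by Taylor-expanding the Schwartz kernel of $[\Pi,H(f)]$ and applying the Schur test against the Gaussian weight, combine the exact Kostant--Souriau identity with Tuynman's formula $\Pi H(f)\Pi = T(f-\Lap f)$ to obtain the sharp estimate for the corrected operators $T^c(f)=\Pi H(f)\Pi$, and then expand $T(f)=T^c(f)+T(\Lap f)$ and control the cross commutators using Theorem~\ref{thm:toepl-barg-comp-sans-hbar} with $N=1$ (plus adjoints for the reversed order), exactly as the paper does. The only small imprecision is that the final expansion you describe is the K\"ahler-case argument of Proposition~\ref{prop:P2} rather than the symplectic-case Proposition~\ref{prop:commut_correction}; since the Bargmann space is K\"ahler this is precisely what the paper invokes.
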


Consider the trivial holomorphic line bundle $L$ over $\C^n$ with canonical frame $s$. Define the metric of $L$ so that $|s|^2 (z)  = e^{ - |z |^2}$. Then the space $\Hilb$ of holomorphic sections of $L$ with finite $L^2$-norm is isomorphic with $\B_1$ by the map sending a function $\xi \in \B_1$ to the section $\xi s \in \Hilb$.
If we work with the unitary frame $t= e^{-|z|^2/2} s$ instead of $s$, we get an isomorphism between $\B$ and $\Hilb$ by sending $\xi$ to  $\xi t$. In the sequel we identify in this way $\Hilb$ with $\mathcal{B}$ and more generally the space of continuous sections of $L$ with $\Cl^0 ( \C^n)$.

A straightforward computation shows that the Chern connection $\nabla$ of $L$ is given in terms of $t$ by
$$ \nabla t = \al \otimes t \qquad \text{with} \qquad \al = \frac{i}{2} \sum( z_i d\con{z}_i - \con{z}_i d z_i)\;. $$
The curvature is $\frac{1}{i} \om$ with $\om$ the symplectic form $ \om = i \sum dz_i \wedge d \con{z}_i$.
For any function $f \in \Cl^1 ( \C^n)$, introduce the Kostant-Souriau operator $$H(f) = f - i \nabla_X\;,$$ where $X$ is the Hamiltonian vector field of $f$. It acts on functions from $\Ci ( \C^n)$ by $H(f) = f - i X - i \al (X) $.
One easily checks the Kostant-Souriau and Tuynman formulas
\begin{gather*}
[ H(f), H(g) ] = i H ( \{ f, g \} )\;, \qquad f,g  \in \Cb^1 ( \C^n)\;, \\
 \Pi H(f) \Pi = T( f - \Lap f)\;, \qquad f \in \Cb^2( \C^n)\;,
\end{gather*}
where $\Lap = - \sum \partial^2 / \partial z_i \partial \con{z}_i$.
Furthermore we have the following result similar to Theorem \ref{theo:compi}.
\begin{lemma}
There exists $C>0$ such that for any $f \in \Cb^2 (\C)$, one has
$$\bigl\| [ H(f) , \Pi ] \bigr\|_{op} \leqslant C | f_2|'\;.$$
\end{lemma}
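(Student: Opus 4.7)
The plan is to adapt the strategy of Theorem~\ref{theo:compi} and Section~\ref{sec:fund-estim} to the Bargmann setting, taking advantage of the fact that the kernel $\Pi(u,v)$ is known in closed form by~\eqref{eq-schkerbarg}. Since $\Pi$ is translation-covariant and has Gaussian decay off the diagonal, both the ``fundamental estimate'' (Lemma~\ref{lem:estim1}) and the commutator identity underlying Lemma~\ref{lem:estim2} reduce to concrete computations on $\C^n$, with no asymptotic remainders to control.

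The first step is to derive the exact analog of Equation~\eqref{eq:6} for the Bargmann data. Applying $H(f)$ to the first variable of $\Pi(u,v)$ and subtracting its action on the second variable (obtained via integration by parts), one uses the Gaussian form of $\Pi$ to carry out the derivatives explicitly and factor the result as
$$
[H(f),\Pi](u,v)\ =\ G_f(u,v)\,\Pi(u,v)\;,
$$
where $G_f$ is built from $f(u) - f(v)$ together with contractions of $\al$ against the Hamiltonian vector field $X_f$ evaluated at $u$ and $v$ and with the $X_f$-derivatives of the Gaussian exponent $u\cdot\con v$. Since the auxiliary bundle $A$ is trivial here, no curvature or $\si$-correction term is present.

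The key step, and the main obstacle, is to verify that $G_f$ vanishes to second order along the diagonal $\{u=v\}$, with the uniform bound $|G_f(u,v)| \leqslant C\,|f|_2'\,|u-v|^2$ on $\C^{2n}$. This is the analog of property~\ref{it:com} for $\al_E$ in Section~\ref{sec:fund-estim}, and it is the place where the $\Cl^2$ regularity of $f$ enters: Taylor's theorem produces the quadratic remainder once one has checked algebraically that the zeroth- and first-order coefficients in $u-v$ cancel between $f(u)-f(v)$, the $\al(X_f)$ terms, and the $u\cdot\con v$-derivatives. The cancellation must be verified by direct computation with the explicit $\al = \tfrac{i}{2}\sum(z_i\,d\con z_i - \con z_i\,dz_i)$, rather than via the abstract criterion from Proposition~2.18 of~\cite{oim_symp} that is used in the compact case. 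Crucially, $C$ is independent of the base point thanks to translation invariance on $\C^n$, which replaces the compactness argument at the end of Lemma~\ref{lem:estim1}.

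The final step is Schur's test. Combining the previous bound with~\eqref{eq:identity_bargmann_norm} yields
$$
\bigl|[H(f),\Pi](u,v)\bigr|\ \leqslant\ C\,|f|_2'\,(2\pi)^{-n}\,|u-v|^2\,e^{-|u-v|^2/2}\;,
$$
and after the change of variable $w=u-v$ both sectional integrals appearing in Proposition~\ref{prop:norme} reduce to the finite Gaussian moment $\int_{\C^n}|w|^2 e^{-|w|^2/2}\,\mu(w)$, uniform in the base point. Schur's test then produces the desired bound $\|[H(f),\Pi]\|_{op}\leqslant C|f|_2'$.
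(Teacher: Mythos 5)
Your proposal is correct and follows essentially the same route as the paper's proof. The paper also computes the Schwartz kernel of $[H(f),\Pi]$ explicitly as a multiplier $m(u,v)$ times $\Pi(u,v)$, with
$$m(u,v) = f(u) - f(v) - \sum (u_i -v_i) \frac{\partial f }{\partial z_i} (v) - \sum (\con{u}_i -\con{v}_i) \frac{\partial f}{\partial \con{z}_i }(u)\;,$$
bounds $|m|$ by a second-order Taylor expansion of $f(u)$ at $v$, and concludes with the Schur test. The only cosmetic difference is that the paper records the slightly looser bound $|m(u,v)| \leqslant C|f|_2'(1+|u-v|^2)$ rather than the purely quadratic $C|f|_2'|u-v|^2$ you state; both are valid globally for $f\in\Cb^2$ and both make the Schur integrals converge, so this changes nothing.
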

\begin{proof}
By a straightforward computation, one checks first that the Hamiltonian vector field $X$ of $f$ is given by $ X = i \sum \bigl(   (\partial f/ \partial \con{z}_i) \partial_{z_i}   - (\partial f/ \partial z_i) \partial_{\con{z}_i} \bigr) $. Then the Schwartz kernel of $\nabla_X \circ \Pi$ is $ i \sum (\con{v}_i - \con{u}_i )   (\partial f/ \partial \con{z}_i) (u) \Pi (u,v)$. So the Schwartz kernel of the commutator $[ H(f) , \Pi ]$ is $m(u,v) \Pi (u,v)$ with
$$ m(u,v) = f(u) - f(v) - \sum (u_i -v_i) \frac{\partial f }{\partial z_i} (v)
- \sum (\con{u}_i -\con{v}_i) \frac{\partial f}{\partial \con{z}_i }(u)\;. $$
Replacing $f(u)$ by its Taylor expansion at $v$, we get that
$$ | m(u , v)  | \leqslant C |f|_2' ( 1 + |u - v|^2)$$
for some constant $C$ independent of $f$. Applying Schur test as in the proof of Lemma \ref{lemma:bound}, we conclude the proof.
\end{proof}

Now the proof of Theorem \ref{theo:P2-bargmann-space-C3} is completely similar to the one of Proposition \ref{prop:P2}, where instead of Proposition \ref{prop:P1_kahler} one uses directly Theorem \ref{thm:toepl-barg-comp-sans-hbar} with $N=1$.


\medskip
\noindent{\bf Acknowledgments.} We thank Yohann Le Floch, Nicolas Lerner, St\'{e}phane Nonnenmacher,
 Johannes Sj\"{o}strand, San V\~{u} Ng\d{o}c and Steve Zelditch for useful discussions.

\bigskip

\noindent
\begin{tabular}{ll}
Laurent Charles & Leonid Polterovich \\
 UMR 7586, Institut de Math\'{e}matiques  & Faculty of Exact Sciences \\
de Jussieu-Paris Rive Gauche &  School of Mathematical Sciences\\
Sorbonne Universit\'{e}s, UPMC Univ Paris 06 & Tel Aviv University \\
F-75005, Paris, France & 69978 Tel Aviv, Israel \\
laurent.charles@imj-prg.fr & polterov@post.tau.ac.il\\
\end{tabular}

\end{document}